%
%
%
%
%
%
%
%
%
%
%
%
%

\documentclass[12pt]{scrartcl}
\usepackage{a4}
\usepackage{amsthm}
\usepackage{amsmath}
\usepackage{amssymb}
\usepackage{amsfonts}
\usepackage{mathrsfs}
\usepackage{dsfont}
\usepackage{latexsym}
\usepackage{color}
\usepackage{bbm,exscale}
\definecolor{Myblue}{rgb}{0,0,0.6}
\usepackage[a4paper,colorlinks,citecolor=Myblue,linkcolor=Myblue,urlcolor=Myblue,pdfpagemode=None]{hyperref}
\usepackage[square,numbers,sort&compress]{natbib}
\usepackage[all,cmtip]{xy}
\usepackage{tikz}

  \tolerance 1414
  \hbadness 1414
  \hfuzz 0.3pt
  \widowpenalty=10000
  \vfuzz \hfuzz
  \raggedbottom

\newcommand{\D}{\text{d}}
\newcommand{\E}{\text{e}}
\newcommand{\I}{\text{i}}
\newcommand{\C}{\mathds{C}}

\newcommand{\R}{\mathds{R}}
\newcommand{\Z}{\mathds{Z}}
\def\1{\ifmmode\mathrm{1\!l}\else\mbox{\(\mathrm{1\!l}\)}\fi}
\newcommand{\one}{\mathbbm{1}}
\newcommand{\be}{\begin{equation}}
\newcommand{\ee}{\end{equation}}
\newcommand{\bes}{\begin{equation*}}
\newcommand{\ees}{\end{equation*}}

\newcommand{\sVir}{\mathsf{sVir}}
\newcommand{\MF}{\operatorname{MF}_{\operatorname{bi}}}
\newcommand{\MFW}{\operatorname{MF}_{\operatorname{bi}}(W)}
\newcommand{\MFR}{\operatorname{MF}^\text{R}_{\operatorname{bi}}}

\newcommand{\DGW}{\operatorname{DG}_{\text{bi}}(W)}

\newcommand{\tc }{\otimes_\C}
\newcommand{\tr}{\otimes_R}
\newcommand{\id}{\text{id}}
\newcommand{\KMF}{K_{0}(\operatorname{MF}_{\text{bi}}}

\newcommand{\Hom}{\operatorname{Hom}}
\newcommand{\End}{\operatorname{End}}
\newcommand{\ev}{\operatorname{ev}}
\newcommand{\tev}{\widetilde{\operatorname{ev}}}
\newcommand{\coev}{\operatorname{coev}}
\newcommand{\tcoev}{\widetilde{\operatorname{coev}}}
\newcommand{\Ga}[1]{\Gamma_{\hspace{-2pt}#1}}

\newcommand\arxiv[2]      {\href{http://arXiv.org/abs/#1}{#2}}
\newcommand\doi[2]        {\href{http://dx.doi.org/#1}{#2}}
\newcommand\httpurl[2]    {\href{http://#1}{#2}}

\allowdisplaybreaks

\deffootnote[1em]{1em}{1em}
{\textsuperscript{\thefootnotemark}}

\theoremstyle{definition}
\newtheorem{definition}{Definition}
\newtheorem{proposition}[definition]{Proposition}
\newtheorem{theorem}[definition]{Theorem}
\newtheorem{lemma}[definition]{Lemma}
\newtheorem{remark}[definition]{Remark}

\numberwithin{equation}{section}
\numberwithin{definition}{section}
\numberwithin{figure}{section}

\usetikzlibrary{arrows,calc,decorations.pathreplacing,decorations.markings,shapes.geometric,shadows}
\tikzset{
    string/.style={draw=#1, postaction={decorate}, decoration={markings,mark=at position .51 with {\arrow[draw=#1]{>}}}},
    costring/.style={draw=#1, postaction={decorate}, decoration={markings,mark=at position .51 with {\arrow[draw=#1]{<}}}},
    ostring/.style={draw=#1, postaction={decorate}, decoration={markings,mark=at position .47 with {\arrow[draw=#1]{>}}}},
    ustring/.style={draw=#1, postaction={decorate}, decoration={markings,mark=at position .56 with {\arrow[draw=#1]{>}}}},
    oostring/.style={draw=#1, postaction={decorate}, decoration={markings,mark=at position .43 with {\arrow[draw=#1]{>}}}},
    uustring/.style={draw=#1, postaction={decorate}, decoration={markings,mark=at position .59 with {\arrow[draw=#1]{>}}}},
    directed/.style={string=blue!50!black}, 
    odirected/.style={ostring=blue!50!black}, 
    udirected/.style={ustring=blue!50!black}, 
    oodirected/.style={oostring=blue!50!black}, 
    uudirected/.style={uustring=blue!50!black},     
    redirected/.style={costring= blue!50!black},
}

\usetikzlibrary{fadings,decorations.pathreplacing}

\newcommand\pgfmathsinandcos[3]{%
  \pgfmathsetmacro#1{sin(#3)}%
  \pgfmathsetmacro#2{cos(#3)}%
}
\newcommand\LongitudePlane[3][current plane]{%
  \pgfmathsinandcos\sinEl\cosEl{#2} 
  \pgfmathsinandcos\sint\cost{#3} 
  \tikzset{#1/.estyle={cm={\cost,\sint*\sinEl,0,\cosEl,(0,0)}}}
}
\newcommand\LatitudePlane[3][current plane]{%
  \pgfmathsinandcos\sinEl\cosEl{#2} 
  \pgfmathsinandcos\sint\cost{#3} 
  \pgfmathsetmacro\yshift{\cosEl*\sint}
  \tikzset{#1/.estyle={cm={\cost,0,0,\cost*\sinEl,(0,\yshift)}}} %
}
\newcommand\DrawLongitudeCircle[2][1]{
  \LongitudePlane{\angEl}{#2}
  \tikzset{current plane/.prefix style={scale=#1}}
  \pgfmathsetmacro\angVis{atan(sin(#2)*cos(\angEl)/sin(\angEl))} %
  \draw[redirected,current plane,color=blue!50!black, very thick] (\angVis:1) arc (\angVis:\angVis+180:1);
  \draw[current plane,dashed,color=blue!50!black, very thick] (\angVis-180:1) arc (\angVis-180:\angVis:1);
}
\newcommand\DrawLatitudeCircle[2][1]{
  \LatitudePlane{\angEl}{#2}
  \tikzset{current plane/.prefix style={scale=#1}}
  \pgfmathsetmacro\sinVis{sin(#2)/cos(#2)*sin(\angEl)/cos(\angEl)}
  \pgfmathsetmacro\angVis{asin(min(1,max(\sinVis,-1)))}
  \draw[directed,current plane, color=blue!50!black] (\angVis:1) arc (\angVis:-\angVis-180:1);
  \draw[current plane,dashed, color=blue!50!black] (180-\angVis:1) arc (180-\angVis:\angVis:1);
}
\newcommand\DrawLatitudeCircleU[2][1]{
  \LatitudePlane{\angEl}{#2}
  \tikzset{current plane/.prefix style={scale=#1}}
  \pgfmathsetmacro\sinVis{sin(#2)/cos(#2)*sin(\angEl)/cos(\angEl)}
  \pgfmathsetmacro\angVis{asin(min(1,max(\sinVis,-1)))}
  \draw[redirected,current plane, color=blue!50!black] (\angVis:1) arc (\angVis:-\angVis-180:1);
  \draw[current plane,dashed, color=blue!50!black] (180-\angVis:1) arc (180-\angVis:\angVis:1);
}

\newcommand\void[1]{}

\begin{document}

\title{Rigidity and defect actions in Landau-Ginzburg models}
\author{Nils Carqueville$^*$ \quad Ingo Runkel$^\dagger$
\\[0.5cm]
 \normalsize{\tt \href{mailto:nils.carqueville@physik.uni-muenchen.de}{nils.carqueville@physik.uni-muenchen.de}} \quad
  \normalsize{\tt \href{mailto:ingo.runkel@uni-hamburg.de}{ingo.runkel@uni-hamburg.de}}\\[0.1cm]
  {\normalsize\slshape $^*$Arnold Sommerfeld Center for Theoretical Physics, }\\[-0.1cm]
  {\normalsize\slshape LMU M\"unchen, Theresienstra\ss e~37, D-80333 M\"unchen}\\[-0.1cm]
  {\normalsize\slshape $^*$Excellence Cluster Universe, Boltzmannstra\ss e~2, D-85748 Garching}\\[0.1cm]
  {\normalsize\slshape $^\dagger$Department Mathematik, Universit\"{a}t Hamburg, }\\[-0.1cm]
  {\normalsize\slshape Bundesstra\ss e 55, D-20146 Hamburg}\\[-0.1cm]
}
\date{}
\maketitle

\vspace{-11.8cm}
\hfill {\scriptsize Hamburger Beitr\"age zur Mathematik 383}

\vspace{-1.0cm}

\hfill {\scriptsize ZMP-HH/10-16}

\vspace{-1.0cm}

\hfill {\scriptsize LMU-ASC 48/10}

\vspace{12cm}

\begin{abstract}
Studying two-dimensional field theories in the presence of defect lines naturally gives rise to monoidal categories: their objects are the different (topological) defect conditions, their morphisms are junction fields, and their tensor product describes the fusion of defects. These categories should be equipped with a duality operation corresponding to reversing the orientation of the defect line, providing a rigid and pivotal structure. We make this structure explicit in topological Landau-Ginzburg models with potential $x^d$, where defects are described by matrix factorisations of $x^d-y^d$. The duality allows to compute an action of defects on bulk fields, which we compare to the corresponding $\mathcal N=2$ conformal field theories. We find that the two actions differ by phases.
\end{abstract}

\newpage

\tableofcontents

\section{Introduction and summary}\label{introduction}

Defect lines are one-dimensional interfaces that separate different regions on the worldsheet in two-dimensional field theory. As such they, together with the fields that may be inserted at their junctions, are entities of the theory in their own right, and hence a complete study of field theories must also feature defects. Furthermore, defects may be used as a valuable tool to understand relations between possibly distinct theories. Both the ``internal'' and ``external'' view on defects can lead to new insights.
    
In topologically B-twisted Landau-Ginzburg models with potential~$W$, defects are described by matrix factorisations of $W\tc 1-1\tc W$~\cite{br0707.0922}. This may be understood via the folding trick~\cite{Wong:1994pa} and the fact~\cite{KontsevichU,kl0210,bhls0305,l0312} that boundary conditions in such theories are modelled by matrix factorisations of the potential.
Just as in any other topological field theory it is natural to consider the category, denoted $\operatorname{MF}(W)$ in the present case, of boundary conditions, whose morphisms describe boundary condition changing operators (with associative operator product expansion). Similarly, defects in topological Landau-Ginzburg models are the objects of a category $\MFW$ whose morphisms are topological junction fields in between possibly different defect lines. Besides being of interest on their own, defects in Landau-Ginzburg models also occur in the description of boundary conditions in the three-dimensional Rozansky-Witten model~\cite{krs0810.5415}. 

While sharing similar properties in many regards, topological boundary conditions and defects also differ in fundamental ways. One important aspect is that there is  a natural ``multiplication operation'' for topological defects, but not for boundary conditions. Indeed, by definition (see e.\,g.~\cite[sec.~3]{Runkel:2008gr}) the location of a topological defect on a two-dimensional worldsheet can be varied without affecting the value of the correlator assigned by the field theory to the worldsheet, as long as the defect line is not moved across field insertions or other defect lines. 
Hence one may consider the well-defined limit of moving two topological defects~$X$ and~$Y$ arbitrarily close to each other. This is the fused defect, denoted by $X\otimes Y$. If there are topological junction fields on the defects before fusion, then via this process they translate into one single field between the fused defects. Thus fusion is defined on the category of topological defects, and one may expect that it gives rise to a monoidal structure. That this is indeed the case for topological Landau-Ginzburg models was shown in~\cite{cr0909.4381} (building on~\cite{yoshinoTP,add0401,Khovanov:2004,br0707.0922}). 

Landau-Ginzburg models with $\mathcal N=2$ supersymmetry are closely related to superconformal field theories: physically one expects that the latter are infrared fixed points under renormalisation group flow of the former \cite{kms1989,m1989,vw1989,howewest}. The correspondence is much clearer if one restricts to the topologically twisted sector on both sides. In this case it has been successfully tested for numerous models by matching various substructures in the bulk, boundary, and defect sectors, see e.\,g.~\cite{add0401,h0401,bg0503,err0508, br0707.0922, cr0909.4381}.

If a two-dimensional conformal field theory is rational (by which we mean that the underlying vertex operator algebra satisfies the finiteness conditions of~\cite{Huang2005}), one has a concrete description of all topological defect lines which are compatible with the rational symmetry. Namely, they correspond to bimodules over a certain algebra in the category of representations of the associated vertex operator algebra. The fusion of defect lines is just given by the tensor product of these bimodules \cite{tft1,Frohlich:2006ch}.

In the present paper we shall study another property which one expects to find in the monoidal category describing defect lines, namely that of \textsl{rigidity} and that of a \textsl{pivotal structure}. Roughly, a rigid monoidal category is one with a good notion of dual objects, and a pivotal structure provides an isomorphism between an object and its double-dual which is compatible with tensor products. The basic example of a category that has these properties is that of finite-dimensional vector spaces. 

Both structures are present in the afore-mentioned defect category of rational conformal field theory~\cite{Frohlich:2006ch}. Hence one may think that the CFT/LG correspondence suggests an equivalence of such categories; however, we will find that the pivotal structure on $\MFW$ agrees with the one of the conformal field theory side only up to phases. 

Before we motivate in more detail why one should expect a rigid and pivotal structure from the physical picture, we briefly state the mathematical results proved in this paper.
\begin{itemize}
\item The category $\MFW$ of finite-rank matrix bi-factorisations in one variable is a pivotal rigid monoidal category (theorems~\ref{MFleftdual} and \ref{thm-pivotal}), and we work out this structure in explicit detail.
\item The duality operation provides an involutive ring anti-homomorphism $C$ on the Grothendieck ring $\KMF(W))$, as well as a surjective algebra homomorphism $\mathcal D_r : \KMF(W)) \otimes_{\Z} \C
\rightarrow \End^0(\End_{\MFW}(I))$ to the grade preserving linear maps on the endomorphisms of the tensor unit (lemma~\ref{C-on-K0-def} and propositions \ref{prop:map-on-K0} and \ref{Grothmap}).
\end{itemize}
Furthermore, we comment on how one might establish rigidity in the many-variable case (remark~\ref{manyduals}).

\medskip

Let us now expand on the physical motivation. 
We only consider  two-dimensional field theories defined on oriented surfaces, whose defect lines also carry an orientation. Reversing the defect orientation while retaining all other independent properties thus produces another defect~$X^\vee$, which we refer to as the dual of~$X$. A slightly different way to think about this is that one may consider ``bending'' a topological defect, e.\,g.~like this: 
\be\label{XdX}
\begin{tikzpicture}[very thick,scale=0.8,color=blue!50!black, baseline]
\draw (1,0) .. controls +(0,1) and +(0,1) .. (0,0);
\draw[->] (0,0) -- (0,-0.1);
\draw[>-] (1,-0.1) -- (1,0);
\draw (0,-0.8) -- (0,0);
\draw (1,-0.8) -- (1,0)
node[very near end,right] {{{\small$X$}}};
\end{tikzpicture}
\ee
To make a connection to what we can treat algebraically, let us reinterpret  this picture as describing a particular field inserted at the junction of the fusion of~$X$ with its dual and the invisible defect: 
\be
\begin{tikzpicture}[very thick,scale=0.8,color=blue!50!black, baseline]
\draw (1,0) .. controls +(0,1) and +(0,1) .. (0,0);
\draw[->] (0,0) -- (0,-0.1);
\draw[>-] (1,-0.1) -- (1,0);
\draw (0,-0.8) -- (0,0);
\draw (1,-0.8) -- (1,0)
node[very near end,right] {{{\small$X$}}};
\end{tikzpicture}
\equiv
\begin{tikzpicture}[very thick,scale=0.8,color=blue!50!black, baseline]
\draw (1,0) .. controls +(0,1) and +(0,1) .. (0,0);
\draw[->] (0,-0.1) -- (0,0);
\draw[->] (1,-0.1) -- (1,0);
\draw (0,-0.8) -- (0,0)
node[very near end,left] {{{\small$X^\vee$}}};
\draw (1,-0.8) -- (1,0)
node[very near end,right] {{{\small$X$}}};
\draw[dashed] (0.5,0.75) -- (0.5,1.5)
node[midway,right] {{{\small$I$}}};
\fill (0.5,0.75) circle (2pt); 
\end{tikzpicture} \, . 
\ee
The invisible defect~$I$ by definition acts as the identity under fusion, i.\,e.~there are isomorphisms $\lambda_{Y}:I\otimes Y\to Y$ and $\rho_{Y}:Y\otimes I \to Y$ for all defects~$Y$. Thus its presence can never change the value of correlators, and because of the triviality of the invisible defect it also must be dual to itself, 
\be
I^\vee \cong I \, , 
\ee
since an orientation that cannot be seen is irrelevant.

Always reading diagrams from bottom to top, we may now identify~\eqref{XdX} and its $180^\circ$-rotated version with junction fields and therefore morphisms in the defect category: 
\be\label{evcoev}
\begin{tikzpicture}[very thick,scale=0.8,color=blue!50!black, baseline]
\draw (1,0) .. controls +(0,1) and +(0,1) .. (0,0);
\draw[->] (0,-0.1) -- (0,0);
\draw[->] (1,-0.1) -- (1,0);
\draw (0,-0.8) -- (0,0)
node[very near end,left] {{{\small$X^\vee$}}};
\draw (1,-0.8) -- (1,0)
node[very near end,right] {{{\small$X$}}};
\draw[dashed] (0.5,0.75) -- (0.5,1.5)
node[midway,right] {{{\small$I$}}};
\fill (0.5,0.75) circle (2pt); 
\end{tikzpicture}
: X^\vee \otimes X \longrightarrow I \, , \quad 
\begin{tikzpicture}[very thick,scale=0.8,color=blue!50!black, baseline,rotate=180]
\draw (1,0) .. controls +(0,1) and +(0,1) .. (0,0);
\draw[<-] (0,-0.1) -- (0,0);
\draw[<-] (1,-0.1) -- (1,0);
\draw (0,-0.8) -- (0,0)
node[very near end,right] {{{\small$X^\vee$}}};
\draw (1,-0.8) -- (1,0)
node[very near end,left] {{{\small$X$}}};
\draw[dashed] (0.5,0.75) -- (0.5,1.5)
node[midway,right] {{{\small$I$}}};
\fill (0.5,0.75) circle (2pt); 
\end{tikzpicture}
: I \longrightarrow X \otimes X^\vee \, . 
\ee
These are the \textsl{evaluation} and \textsl{coevaluation} maps which are at the heart of the general duality structure of definition~\ref{defleftrigid}, and whose concrete realisation in Landau-Ginzburg models with only one chiral superfield will be given in~\eqref{evX} and~\eqref{coevXf} below.

Another intuitively natural property of topological defects is that one should be able to ``straighten them out'' as their precise location does not matter. By this we mean that locally on a worldsheet we should have the identities
\be\label{preZorro}
\begin{tikzpicture}[very thick,scale=0.8,color=blue!50!black, baseline]
\draw (1,0) .. controls +(0,1) and +(0,1) .. (0,0);
\draw (0,0) .. controls +(0,-1) and +(0,-1) .. (-1,0);
\draw (1,-0.6) -- (1,0);
\draw[->] (1,-1.3) -- (1,-0.55)
node[very near end,right] {{{\small$X$}}};
\draw (-1,0.65) -- (-1,0);
\draw[>-] (-1,0.55) -- (-1,1.3)
node[very near start,left] {{{\small$X$}}};
\end{tikzpicture}
=\,
\begin{tikzpicture}[very thick,scale=0.8,color=blue!50!black, baseline]
\draw (0,0) -- (0,1.3);
\draw[->] (0,-1.3) -- (0,0.2)
node[very near end,right] {{{\small$X$}}};
\end{tikzpicture} \, , \qquad
\begin{tikzpicture}[very thick,scale=0.8,color=blue!50!black, baseline]
\draw (1,0) .. controls +(0,-1) and +(0,-1) .. (0,0);
\draw (0,0) .. controls +(0,1) and +(0,1) .. (-1,0);
\draw[->] (1,0) -- (1,0.6);
\draw (1,1.3) -- (1,0.55)
node[very near end,right] {{{\small$X^\vee$}}};
\draw[>-] (-1,-0.65) -- (-1,0);
\draw (-1,-0.55) -- (-1,-1.3)
node[very near start,left] {{{\small$X^\vee$}}};
\end{tikzpicture}
=\,
\begin{tikzpicture}[very thick,scale=0.8,color=blue!50!black, baseline]
\draw[->] (0,-1.3) -- (0,0);
\draw (0,1.3) -- (0,0)
node[very near end,right] {{{\small$X^\vee$}}};
\end{tikzpicture}
\ee
where we have chosen not to display the invisible defect. The existence of morphisms~\eqref{evcoev} subject to the above relations is precisely what it means for the defect category to be rigid. This is the subject of theorem~\ref{MFleftdual} and remark~\ref{manyduals} for the case of $\MFW$.  

Since we think about passing to the dual defect as orientation reversal, one should expect that the map $(\,\cdot\,)^{\vee\vee}$ which sends a defect~$X$ to its double dual $X^{\vee\vee}$ is the identity. The more precise statement, which we prove as theorem~\ref{thm-pivotal}, is that there is a natural isomorphism between the identity functor and $(\,\cdot\,)^{\vee\vee}$ which is compatible with the monoidal structure. This result will be crucial for applications to concrete models. 

\medskip

Once the duality structures described so far are established, they can be used to study more concrete situations, for instance the action of defects on bulk fields. For this, consider an insertion of a bulk field~$\varphi$ somewhere on the worldsheet. Then one may ask the question of what happens to this field if one wraps a topological defect~$X$ around it and subsequently collapses~$X$ to coincide with the insertion point of~$\varphi$. This process should map~$\varphi$ to a new bulk field $\varphi_{X}$ inserted at the same point:
\be\label{defectaction}
\varphi
\longmapsto
\varphi_{X} \equiv\!
\begin{tikzpicture}[very thick,scale=0.6,color=blue!50!black, baseline]
\draw[line width=1pt] 
(0,0) node[line width=0pt] (Y) {{\small $\varphi$}};
\draw[directed] (0,-1) .. controls +(1.4,0) and +(1.4,0) .. (0,1)
node[midway,right] {{{\small$X$}}};
\draw (0,-1) .. controls +(-1.4,0) and +(-1.4,0) .. (0,1);
\end{tikzpicture} .
\ee
To formulate this in the language of rigid monoidal categories, all we have to do is to reinterpret the above picture in terms of the defect (junction) fields that we have already introduced. As a first step, we note that any bulk field may also be viewed as a defect field living on the defect~$I$ (which is invisible, after all):
\be
\varphi
\equiv\,
\begin{tikzpicture}[very thick,scale=0.8,color=blue!50!black, baseline]
\draw[line width=1pt] 
(0,0) node[inner sep=2pt,draw] (f) {{\small$\varphi$}};
\draw[dashed] (f) -- (0,1.3)
node[near end,right] {{{\small$I$}}};
\draw[dashed] (f) -- (0,-1.3)
node[near end,right] {{{\small$I$}}};
\end{tikzpicture} \, .
\ee
Consequently we may interpret the action~\eqref{defectaction} on bulk fields as a linear map $\mathcal D_r(X)$ on the endomorphisms of~$I$: 
\be\label{Dl-on-phi}
\mathcal D_r(X): \;
\begin{tikzpicture}[very thick,scale=0.8,color=blue!50!black, baseline]
\draw[line width=1pt] 
(0,0) node[inner sep=2pt,draw] (f) {{\small$\varphi$}};
\draw[dashed] (f) -- (0,1.3)
node[near end,right] {{{\small$I$}}};
\draw[dashed] (f) -- (0,-1.3)
node[near end,right] {{{\small$I$}}};
\end{tikzpicture}
\longmapsto\,
\mathcal D_{r}(X)(\varphi) =
\begin{tikzpicture}[very thick,scale=0.8,color=blue!50!black, baseline]
\draw[line width=1pt] 
(-1.5,1.3) node[inner sep=2pt,draw] (lam) {{\small$\rho_{X}\vphantom{rho_{X^\vee}}$}}
(-1.5,-1.3) node[inner sep=2pt,draw] (lamin) {{\small$\rho^{-1}_{X}$}}
(0,0) node[inner sep=2pt,draw] (f) {{\small$\varphi$}}; 
\draw[odirected] (lam) .. controls +(0,1.5) and +(0,1.5) .. (1.5,1.3);
\draw[udirected] (1.5,-1.3) .. controls +(0,-1.5) and +(0,-1.5) .. (lamin);
\draw[directed] (lamin) -- (lam)
node[midway,left] {{{\small$X$}}};
\draw (1.5,-1.3) -- (1.5,1.3);
\draw[dashed] (f) .. controls +(0,1) and +(0.5,-1) .. (lam);
\draw[dashed] (f) .. controls +(0,-1) and +(0.5,1) .. (lamin);
\draw[dashed] (0,-2.5) -- (0,-3.5)
node[near end,right] {{{\small$I$}}};
\draw[dashed] (0,2.47) -- (0,3.5)
node[near end,right] {{{\small$I$}}};
\end{tikzpicture} \; .
\ee
The right-hand side is now solely expressed in terms of the known morphisms $\varphi, \rho_{X}, \rho^{-1}_{X}$ and~\eqref{evcoev} in the defect category, and hence one can explicitly compute this map on bulk fields using the rigid monoidal structure. A special case is the action of~$X$ on the identity field, which is called the (right) \textsl{quantum dimension}
\be
\dim_r(X) = 
\begin{tikzpicture}[very thick,scale=0.6,color=blue!50!black, baseline]
\draw[directed] (0,-1) .. controls +(1.4,0) and +(1.4,0) .. (0,1)
node[midway,right] {{{\small$X$}}};
\draw (0,-1) .. controls +(-1.4,0) and +(-1.4,0) .. (0,1);
\end{tikzpicture} .
\ee
For the opposite defect orientation one obtains the (possibly different) map $\mathcal D_l$ and the left quantum dimension $\dim_l(X)$.

In section~\ref{defectbulk} we will perform this analysis of defect actions on bulk fields for a certain class of Landau-Ginzburg models
and compare the result to the analogous computation in the corresponding conformal field theories. These turn out not to agree, but they differ only by phases, and moreover these phases cancel in compositions $\mathcal{D}_l(X) \circ \mathcal{D}_r(X)$ for elementary defects $X$ (where by elementary we mean that all weight zero fields on the defect are multiples of the identity field).

\medskip

In any rational conformal field theory the defect maps~$\mathcal D_{r}$ induce bijective ring homomorphisms from the Grothendieck ring of topological defects preserving the rational symmetry to endomorphisms of the space of bulk fields that intertwine the action of the rational symmetry \cite{Fuchs:2007vk}.

On the other hand, there also exists the notion of the Grothendieck group~$\KMF(W))$ for topological defects in Landau-Ginzburg models, and we will see that it again has a ring structure via the tensor product. But since here the defect category is only triangulated and not abelian (in the non-semisimple case), the elements of the Grothendieck ring are only defined ``up to defect condensation'', see subsection~\ref{triacomp} for the precise definition. Nevertheless, despite this difference we will show in proposition~\ref{Grothmap} that when restricted to all known defects in the models that we consider, the map
\be
\KMF(W)) \otimes_\Z \C \longrightarrow \End^0(\End_{\MFW}(I))
\ee
induced by~$\mathcal D_r$ is an algebra isomorphism (we recall that the endomorphisms of the invisible defect~$I$ are precisely the bulk fields). 

In fact, the observation that the assignment of defect operators to defect conditions factors through the Grothendieck rings necessitates that the defect operators differ on the Landau-Ginzburg and conformal field theory side. As an example, a non-zero object in $\MFW$ can be zero in $\KMF(W))$, while the analogous statement is never true on the rational conformal field theory side.

\bigskip

The present paper is organised as follows. In section \ref{rigidity} we review the definition of rigid monoidal categories and pivotal structures, and show in explicit detail that matrix bi-factorisations in one variable have such structures. In section~\ref{defectbulk} these results are applied to the study of defect operators, and we compare the action of defects in topological Landau-Ginzburg models with potential $W(x) = x^d$ and A-series $\mathcal N=2$ minimal conformal field theories. Section~\ref{discuss} contains a brief discussion into the direction of duality on a higher categorial level, and some technical details are relegated to an appendix.

\section{Right and left duals for matrix bi-factorisations}\label{rigidity}

In this section we study the category of matrix bi-factorisations of one-variable potentials in detail. We explicitly show that this category is endowed with left and right dualities, and that in addition it is pivotal. The results of this section will be used in the next section where we will analyse the action of defects on bulk fields in Landau-Ginzburg models and establish that the dualities are compatible with the triangulated structure of matrix bi-factorisations. 

\subsection{Preliminaries}\label{prelim}

We will now recall the basic definition of matrix bi-factorisations and their monoidal structure. More details can be found in~\cite{cr0909.4381}. Let $R=\C[x_{1},\ldots,x_{N}]$ and $W\in R$ be a potential with an isolated singularity at the origin, i.\,e.~$\dim_{\C}(R/(\partial_{1} W,\ldots,\partial_{N}W))<\infty$. We call an $R$-bimodule free if the corresponding left $(R\tc R)$-module is free. 

A \textsl{matrix bi-factorisation (of possibly infinite rank)} of~$W$ is a tuple
\be
(X_{0},X_{1},d_{0}^X,d_{1}^X)
\ee
where~$X_{i}$ are free $R$-bimodules (of possibly infinite rank), and $d_{0}^X:X_{0}\rightarrow X_{1}$, $d_{1}^X:X_{1}\rightarrow X_{0}$ are bimodule maps such that
\be
(d_{1}^X \circ d_{0}^X)(m_{0}) = W.m_{0} - m_{0}.W \, , \quad (d_{0}^X \circ d_{1}^X)(m_{1}) = W.m_{1} - m_{1}.W
\ee
for all $m_{i}\in X_{i}$. We often represent $X$ by a matrix which we denote by the same symbol, $X \equiv (\begin{smallmatrix}0&d^X_1\\ d^X_0&0\end{smallmatrix})$.

Matrix bi-factorisations of~$W$ form the objects of a differential $\Z_{2}$-graded category $\operatorname{DG}^\infty_{\text{bi}}(W)$; its even morphisms $\phi\equiv(\begin{smallmatrix}\phi_0&0\\ 0&\phi_1\end{smallmatrix})$ from $X$ to $Y$ are pairs of bimodule maps $\phi_0:X_0\rightarrow Y_0$, $\phi_1:X_1\rightarrow Y_1$, and odd morphisms $\psi\equiv(\begin{smallmatrix}0&\psi_1\\ \psi_0&0\end{smallmatrix})$ are pairs of bimodule maps $\psi_0:X_0\rightarrow Y_1$, $\psi_1:X_1\rightarrow Y_0$. The composition in $\operatorname{DG}^\infty_{\text{bi}}(W)$ is given by matrix multiplication, and its differential~$d$ sends a homogeneous element~$\varphi\in\Hom_{\operatorname{DG}^\infty_{\text{bi}}(W)}(X,Y)$ to $d(\varphi)=Y\varphi-(-1)^{|\varphi|}\varphi X$. 

\begin{remark}\label{ony-check-f0=g0}
If $\varphi, \psi : X\rightarrow Y$ are $d$-closed even morphisms in $\operatorname{DG}^\infty_{\text{bi}}(W)$, to establish $\varphi=\psi$ it is enough to show either $\varphi_0=\psi_0$ or $\varphi_1=\psi_1$. The other equality then follows because the maps $d^X_i$, $d^Y_i$ are injective.
\end{remark}

The \textsl{category of matrix bi-factorisations (of possibly infinite rank)} of~$W$ is defined to be the homotopy category
\be
\operatorname{MF}^\infty_{\text{bi}}(W) = H^0_{d}(\operatorname{DG}^\infty_{\text{bi}}(W)) \, , 
\ee
i.\,e.~$\operatorname{MF}^\infty_{\text{bi}}(W)$ also has matrix bi-factorisations as objects, and its morphism spaces are given by the zeroth $d$-cohomology of the morphism spaces of $\operatorname{DG}^\infty_{\text{bi}}(W)$. 

Mostly we will be dealing with the full subcategory $\MFW$ of $\operatorname{MF}^\infty_{\text{bi}}(W)$ whose objects are isomorphic to matrix bi-factorisations~$X$ of finite rank. We note that instead of defining $\MFW$ as above one can of course also work exclusively with left modules and equivalently define a category $\operatorname{MF}(W\tc 1 - 1 \tc W)$. However, since our motivation is to describe topological defects, on both sides of which Landau-Ginzburg models are defined, we prefer the bimodule language of $\MFW$ to the ``folded'' boundary conditions of $\operatorname{MF}(W\tc 1 - 1 \tc W)$. 

\medskip

In order to keep the following exposition of the monoidal structure of $\MFW$ simple, let us from now on assume that $R=\C[x]$. For the general case we refer to~\cite{cr0909.4381}. To explicitly describe the monoidal structure we first have to introduce some notation to calculate with free bimodules. Every free $R$-bimodule~$M$ is isomorphic to $R\tc\check M\tc R$ for some complex vector space~$\check M$. For two vector spaces $\check M, \check N$ we consider linear maps $\phi=\sum_{m,n}\phi_{mn}a^mb^n\in\Hom_{\C}(\check M, \check N[a,b])$ where~$a$ and~$b$ are formal variables. From~$\phi$ we obtain an $R$-bimodule map~$\hat\phi$ from~$M$ to~$N$ by setting $\hat\phi(r\tc v\tc r')= \sum_{m,n} r x^m\tc \phi_{mn}(v)\tc x^n r'$. This gives us an isomorphism $\Hom_{\C}(\check M, \check N[a,b])\cong \Hom_{R\text{-mod-}R}(M,N)$. Its inverse will be denoted by~$(\check~)$, i.\,e.~for a bimodule map $\psi:M\rightarrow N$ we have $\psi=[\check\psi(a,b)]\hat~$. 

We can now recall the monoidal structure of $\MFW$ 
from~\cite{cr0909.4381} (see also \cite{yoshinoTP, add0401, kr0405232, br0707.0922}) where the general definition may be found as well. The tensor product on objects is given by
\begin{align}
X\otimes Y = & \left(\vphantom{{\textstyle \begin{pmatrix}d^X_0\otimes_R\id_{Y_0} & -\id_{X_1}\otimes_R d^Y_1\\\id_{X_0}\otimes_R d^Y_0 & d	^X_1\otimes_R \id_{Y_1}\end{pmatrix}}}X_0\otimes_R Y_0 \oplus X_1\otimes_R Y_1, X_1\otimes_R Y_0 \oplus X_0\otimes_R Y_1,\right. \nonumber \\
&\left. {\textstyle \begin{pmatrix}d^X_0\otimes_R\id_{Y_0} & -\id_{X_1}\otimes_R d^Y_1\\\id_{X_0}\otimes_R d^Y_0 & d^X_1\otimes_R \id_{Y_1}\end{pmatrix}}, {\textstyle \begin{pmatrix}d^X_1\otimes_R\id_{Y_0} & \id_{X_0}\otimes_R d^Y_1\\-\id_{X_1}\otimes_R d^Y_0 & d^X_0\otimes_R \id_{Y_1}\end{pmatrix}} \right) , \label{XtensorY}
\end{align}
and its action on morphisms is spelled out in appendix~\ref{uglymatrices}, where we also write down the explicit associator isomorphism $\alpha_{X,Y,Z}:(X\otimes Y)\otimes Z\rightarrow X\otimes(Y\otimes Z)$. The unit object $I = (R \tc R, R\tc R, d^I_0, d^I_1)$ is given by
\be
I = \begin{pmatrix}
0 & [a-b]\hat~\\
\big[ \tfrac{ W(a)-W(b) }{a-b} \big]\hat~ & 0
\end{pmatrix}
\ee
and its left and right unit isomorphisms are
\begin{subequations}\label{lambdarho}
\begin{align}
\lambda_X &= \begin{pmatrix}
\mu\otimes_R\id_{X_0}&0&0&0\\
0&0&0&\mu\otimes_R\id_{X_1}
\end{pmatrix} : I\otimes X \longrightarrow X \, , \\
\rho_X & = \begin{pmatrix}
\id_{X_0}\otimes_R\mu&0&0&0\\
0&0&\id_{X_1}\otimes_R\mu&0
\end{pmatrix} : X\otimes I \longrightarrow X
\end{align}
\end{subequations}
where $\mu:R\tc R\rightarrow R$ is the multiplication map, $\mu(r\tc r')=rr'$; their inverses in $\MFW$ are given in appendix~\ref{uglymatrices}.

Finally, we note that one easily computes $\End_{\MFW}(I)\cong R/(\partial W)$, which corresponds to the fact that defect fields living on the invisible defect are precisely bulk fields. Their identification with endomorphisms of the unit object~$I$ will be relevant when we discuss the action of defects on bulk fields in section~\ref{defectbulk}.

\subsection[Right duals in ${\rm MF_{bi}}(W)$]{Right duals in $\boldsymbol{\MFW}$}\label{leftduals}

We now begin the study of duals in $\MFW$. However, before we can identify the relevant structure, it is necessary to present some elementary constructions on the level of ordinary bimodules. 

\subsubsection{Duals of free bimodules}

Let~$R$ and~$S$ be commutative $\C$-algebras with augmentation maps $\varepsilon_{R}:R\rightarrow\C$ and $\varepsilon_{S}:S\rightarrow\C$. In our application we will have $R=S=\C[x]$ and $\varepsilon_{R}(x^k)=\delta_{k,0}$, 
but for the moment we keep our discussion more general to keep track of the left and right actions more easily. 

The \textsl{dual} of a free $R$-$S$-bimodule~$M$ is the $S$-$R$-bimodule $M^\vee$ defined as $\Hom_{R\text{-mod-}S}(M,R\tc S)$ with bimodule action $(s.\varphi.r)(m)=\varphi(r.m.s)$ for $r\in R, s\in S, m\in M, \varphi\in M^\vee$. If $M$ is not free, $M^\vee$ may well be empty, e.\,g.~for $R = \C[x]$ as an $R$-bimodule over itself one has $\Hom_{R\text{-mod-}R}(R,R\tc R)=0$.
     
Furthermore, for a map $f: M\rightarrow N$ of bimodules, we have the dual map
\be
f^\vee : N^\vee \longrightarrow M^\vee \, , \quad \psi \longmapsto \psi \circ f \, . 
\ee
In the case $R=S=\C[x]$ we can write $f=[\check f(a,b)]\hat~$ using the notation introduced in subsection~\ref{prelim}; for $\check f(a,b) = \sum_{m,n} f_{mn} a^m b^n$, this gives
\be
f^\vee = \Big[\sum_{m,n} f_{mn}^* b^m a^n\Big]^{\!\wedge} =: [\check f^*(b,a)]\hat~ \, .
\ee
The bimodule~$M$ comes together with the natural morphism
\be\label{Delta}
\delta_{M} : M \longrightarrow M^{\vee\vee} \, , \quad (\delta_{M}(m))(\varphi)=\sigma_{R,S}(\varphi(m)) \, ,
\ee
where $\sigma_{R,S} : R \tc S \rightarrow S \tc R$ is the linear map exchanging tensor factors; this is needed because elements of $M^{\vee\vee}$ are $S$-$R$-bimodule maps $M^\vee \rightarrow S \tc R$, while $\varphi(m) \in R \tc S$. Setting $\widetilde m=\delta_{M^\vee}(\varphi)\in M^{\vee\vee\vee}$, we can compute
\begin{align}
(\delta^\vee_{M}(\widetilde m))(m) & = \widetilde m(\delta_{M}(m)) = (\delta_{M^\vee}(\varphi))(\delta_{M}(m)) = \sigma_{S,R}((\delta_{M}(m))(\varphi)) 
\nonumber \\
& = \sigma_{S,R}(\sigma_{R,S}(\varphi(m))) = \varphi(m) \, ;
\end{align}
in other words, $\delta^\vee_M \circ \delta_{M^\vee} = \id_{M^\vee}$. If $M$ is finitely generated, this implies that
the map~$\delta_{M}$ enjoys the property
\be\label{dMdM}
\delta_{M}^\vee = \delta_{M^\vee}^{-1} :  M^{\vee\vee\vee} \longrightarrow M^\vee
\, . 
\ee

Any free $R$-$S$-bimodule is isomorphic to one of the form $M=R\tc \check M\tc S$ where~$\check M$ is a complex vector space, and we have a natural isomorphism $M^\vee\cong S\tc \check M^* \tc R$, see appendix~\ref{u1-action-duals-bim}; in the following we will not write out this isomorphism and identify $M^\vee \equiv S\tc \check M^* \tc R$. Then we have the bimodule map
\be
e_{M} : M^\vee \otimes_{R} M \longrightarrow S \tc S
\ee
defined via
\be
(s\tc \psi \tc r)\otimes_{R} (r'\tc m \tc s') \longmapsto \psi(m)\, \varepsilon_{R}(rr') \, s\tc s' \, .
\ee
If~$M$ is finitely generated, i.\,e.~if~$\check M$ is finite-dimensional, then we also have the bimodule map
\be
c_{M}: R\tc R \longrightarrow M\otimes_{S} M^\vee \, , \quad r\tc r' \longmapsto \sum_{i} r.e_{i} \tc 1\otimes_{S} 1\tc e_{i}^*.r' \, ,
\ee
where $\{ e_{i} \}$ is a basis of~$\check M$ and $\{ e_{i}^* \}$ is the dual basis of~$\check M^*$. The maps $e_M$ and $c_M$ will be used in the construction of the duality morphisms for $\MFW$ below.

\subsubsection{Right duals in monoidal categories}

Before we turn to duals in $\MFW$, we shall recall the notion of duality in a general monoidal category. 
\begin{definition}\label{defleftrigid}
A monoidal category $(\mathcal M, \otimes, I, \alpha, \lambda, \rho)$ is equipped with \textsl{right duality} (or is \textsl{right rigid}) if an object $A^\vee$ is assigned to each object $A\in\mathcal M$ together with morphisms $\ev_{A}: A^\vee\otimes A \rightarrow I$ and $\coev_{A}: I \rightarrow A \otimes A^\vee$ such that
\begin{subequations}\label{Zorro}
\begin{align}
& \rho_{A} \circ (\id_{A} \otimes \ev_{A})\circ \alpha_{A,A^\vee,A} \circ (\coev_{A} \otimes \id_{A}) \circ \lambda^{-1}_{A}= \id_{A} \, , \\
& \lambda_{A^\vee}\circ (\ev_{A} \otimes \id_{A^\vee})\circ \alpha^{-1}_{A^\vee,A,A^\vee} \circ (\id_{A ^\vee} \otimes \coev_{A}) \circ \rho_{A^\vee}^{-1}= \id_{A ^\vee} \, . \label{Zorrob}
\end{align}
\end{subequations}
\end{definition}
Let us introduce a convenient and standard graphical notation to express identities like the one above. Reading every diagram from bottom to top, we can picture the \textsl{evaluation} and \textsl{coevaluation} maps as follows: 
\be
\ev_{A} = 
\begin{tikzpicture}[very thick,scale=1.0,color=blue!50!black, baseline=.6cm]
\draw[line width=0pt] 
(2.5,1.6) node[line width=0pt] (I) {{\small$I$}}
(3,0) node[line width=0pt] (D) {{\small $A\vphantom{A^\vee}$}}
(2,0) node[line width=0pt] (s) {\small{$A^\vee$}}; 
\draw[directed] (D) .. controls +(0,1) and +(0,1) .. (s);
\draw[dashed] (2.5,0.81) -- (I);
\end{tikzpicture}
\equiv
\begin{tikzpicture}[very thick,scale=1.0,color=blue!50!black, baseline=.6cm]
\draw[line width=0pt] 
(3,0) node[line width=0pt] (D) {{\small$A\vphantom{A^\vee}$}}
(2,0) node[line width=0pt] (s) {{\small$A^\vee$}}; 
\draw[directed] (D) .. controls +(0,1) and +(0,1) .. (s);
\end{tikzpicture} , 
\quad
\coev_{A} = 
\begin{tikzpicture}[very thick,scale=1.0,color=blue!50!black, baseline=-.6cm,rotate=180]
\draw[line width=0pt] 
(2.5,1.6) node[line width=0pt] (I) {{\small$I$}}
(3,0) node[line width=0pt] (D) {{\small$A\vphantom{A^\vee}$}}
(2,0) node[line width=0pt] (s) {{\small$A^\vee$}}; 
\draw[redirected] (D) .. controls +(0,1) and +(0,1) .. (s);
\draw[dashed] (2.5,0.81) -- (I);
\end{tikzpicture}
\equiv
\begin{tikzpicture}[very thick,scale=1.0,color=blue!50!black, baseline=-.6cm,rotate=180]
\draw[line width=0pt] 
(3,0) node[line width=0pt] (D) {{\small$A\vphantom{A^\vee}$}}
(2,0) node[line width=0pt] (s) {{\small$A^\vee$}}; 
\draw[redirected] (D) .. controls +(0,1) and +(0,1) .. (s);
\end{tikzpicture} \, .
\ee
In this language, the conditions~\eqref{Zorro} can be rephrased as the statement that the ``Zorro moves''
\be\label{sternstern}
\begin{tikzpicture}[very thick,scale=1.0,color=blue!50!black, baseline=0cm]
\draw[line width=0] 
(-1,1.25) node[line width=0pt] (A) {{\small $A$}}
(1,-1.25) node[line width=0pt] (A2) {{\small $A$}}; 
\draw[directed] (0,0) .. controls +(0,-1) and +(0,-1) .. (-1,0);
\draw[directed] (1,0) .. controls +(0,1) and +(0,1) .. (0,0);
\draw (-1,0) -- (A); 
\draw (1,0) -- (A2); 
\end{tikzpicture}
=
\begin{tikzpicture}[very thick,scale=1.0,color=blue!50!black, baseline=0cm]
\draw[line width=0] 
(0,1.25) node[line width=0pt] (A) {{\small $A$}}
(0,-1.25) node[line width=0pt] (A2) {{\small $A$}}; 
\draw (A2) -- (A); 
\end{tikzpicture}
\, , \qquad
\begin{tikzpicture}[very thick,scale=1.0,color=blue!50!black, baseline=0cm]
\draw[line width=0] 
(1,1.25) node[line width=0pt] (A) {{\small $A^\vee$}}
(-1,-1.25) node[line width=0pt] (A2) {{\small $A^\vee$}}; 
\draw[directed] (0,0) .. controls +(0,1) and +(0,1) .. (-1,0);
\draw[directed] (1,0) .. controls +(0,-1) and +(0,-1) .. (0,0);
\draw (-1,0) -- (A2); 
\draw (1,0) -- (A); 
\end{tikzpicture}
=
\begin{tikzpicture}[very thick,scale=1.0,color=blue!50!black, baseline=0cm]
\draw[line width=0] 
(0,1.25) node[line width=0pt] (A) {{\small $A^\vee$}}
(0,-1.25) node[line width=0pt] (A2) {{\small $A^\vee$}}; 
\draw (A2) -- (A); 
\end{tikzpicture}
\ee
hold true. We note that here and below we do not explicitly depict the isomorphisms $\alpha,\lambda,\rho$ and their inverses in such diagrams. Thus identities like the ones above may be thought of as true after passing to a strict model of the monoidal category~\cite{catforwormath}, or one mentally adds the missing parts, e.\,g.
\be
\begin{tikzpicture}[very thick,scale=0.8,color=blue!50!black, baseline=0cm]
\draw[directed] (0,0) .. controls +(0,-1) and +(0,-1) .. (-1,0);
\draw[directed] (1,0) .. controls +(0,1) and +(0,1) .. (0,0);
\draw (-1,0) -- (-1,2.4); 
\draw (1,0) -- (1,-2.4); 
\end{tikzpicture}
\;\;\equiv
\begin{tikzpicture}[very thick,scale=0.8,color=blue!50!black, baseline=0cm,line/.style={&gt;=latex}]
\draw[line width=1pt] 
(-1,1.8) node[inner sep=2pt,draw] (r) {{\small$\rho_{A}$}} 
(1,-1.8) node[inner sep=2pt,draw] (l) {{\small$\lambda^{-1}_{A}$}};  
\draw[directed] (0,0) .. controls +(0,-1) and +(0,-1) .. (-1,0);
\draw[directed] (1,0) .. controls +(0,1) and +(0,1) .. (0,0);
\draw (-1,0) -- (r); 
\draw (-1,2.4) -- (r); 
\draw (1,0) -- (l); 
\draw (1,-2.4) -- (l); 
\draw[dashed] (r) .. controls +(0.25,-0.75) and +(0,0.75) .. (0.5,0.75);
\draw[dashed] (l) .. controls +(-0.25,0.75) and +(0,-0.75) .. (-0.5,-0.75);

\draw[line width=1pt] 
(0,0) node[rectangle, very thick, blue!50!black,draw,fill=white] 
(alpha) {{\small$\;\; \alpha_{A,A^\vee,A} \;\; $}} ;
\end{tikzpicture}
\, . 
\ee

As an example of a right duality one may think of the category of finite-dimensional complex vector spaces~$V$ together with the standard evaluation and coevaluation maps: 
\be
\ev_{V}: e_{i}^*\tc e_{j} \longmapsto \delta_{i,j} \, , \quad \coev_{V}: 1 \longmapsto \sum_{i} e_{i} \tc e^*_{i} \, ,
\ee
where~$\{e_{i}\}$ is an arbitrary basis of $V$. In this case one easily verifies that the Zorro moves~\eqref{sternstern} hold, which in general abstract the existence of a perfect pairing between~$V$ and~$V^*$ in the case of vector spaces. 

\begin{remark}
\label{left-rigid-all-equiv}
Let $\mathcal M$ be a right rigid monoidal category with duality given by $(X^\vee, \ev_X, \coev_X)$ for each $X \in \mathcal M$.
Suppose that $(X',\ev'_X,\coev'_X)$ is another right rigid structure on $\mathcal M$. If we replace $\coev$ by $\coev'$ in \eqref{Zorrob}, then
\be
 \phi_X = 
\lambda_{X^\vee}\circ (\ev_{X} \otimes \id_{X'})\circ \alpha^{-1}_{X^\vee,X,X'} \circ (\id_{X^\vee} \otimes \coev'_{X}) \circ \rho_{X^\vee}^{-1} : X^\vee \longrightarrow X'
\ee
gives a family of isomorphisms, natural in $X$. It follows from the Zorro moves that
\be
 \ev_X = \ev'_X \circ (\phi_X \otimes \id_X)
 \, , \quad
 \coev_X = (\id_X \otimes \phi_X^{-1}) \circ \coev'_X  \, .
\ee
In this sense, all right rigid structures on $\mathcal M$ are equivalent.
\end{remark}

\subsubsection{Right duals of matrix bi-factorisations}\label{left-dual-MFbi}

We shall now explicitly identify a right duality structure in the category of matrix bi-factorisations for the one-variable case by giving a contravariant functor $(\,\cdot\,)^\vee: \MFW\rightarrow\MFW$ and appropriate evaluation and coevaluation maps. The multi-variable case will be discussed in remark~\ref{manyduals}. 

On objects the functor $(\,\cdot\,)^\vee$ acts as
\be\label{Xdual}
X = (X_{0}, X_{1}, d^X_{0}, d^X_{1}) \longmapsto X^\vee = (X_{1}^\vee, X_{0}^\vee, (d_{0}^X)^\vee, -(d_{1}^X)^\vee) \, , 
\ee
(see~\eqref{star-dual} and~\eqref{general-dual} for the many-variable
case) and it sends a morphism $\varphi \equiv (\begin{smallmatrix}\varphi_{0}&0\\0&\varphi_{1}\end{smallmatrix}):X \rightarrow Y$ to 
\be\label{dual-morphism-phi}
  \varphi^\vee \equiv (\begin{smallmatrix}\varphi_{1}^\vee&0\\0&\varphi_{0}^\vee\end{smallmatrix}):Y^\vee\rightarrow X^\vee\, .
\ee   
We note that with this definition one has $I^\vee=I$, cf.~the discussion of section~\ref{introduction}. 

We will explicitly give the evaluation map $\ev_{X}:X^\vee \otimes X \rightarrow I$ only for objects~$X$ that have twisted differentials $\check d^X_{i}(a,b)$ with entries of polynomial degree less than $\text{deg}(W)$. This is sufficient, since any matrix bi-factorisation is isomorphic to such an object (which in turn follows as $\MFW$ has a split-generator with this property~\cite{d0904.4713} and because of~\cite[lem.~2.4]{kst0511155}), and the evaluation map can be transported using this isomorphism (see the proof of lemma~\ref{phihop} below for a similar argument in the case of the coevaluation).

For~$X$ as above the evaluation map is given by
\be\label{evX}
\ev_{X} = \begin{pmatrix}A_{X}&0&0&0\\0&0&B_{X}&C_{X}\end{pmatrix}
\ee
where we define
\begin{align}
A_{X} & = - \left[ \ev_{\check X_{1}} \circ (\id_{\check X_{1}^*} \tc \mathcal F \tc \id_{\check X_{1}}) \circ (\id_{\check X_{1}^*} \tc \id_R \tc \check d^X_{0}(x,b)) \right]^{\!\wedge} \, , \label{AX} \\
B_{X} & = \phantom{-}  \left[ \ev_{\check X_{0}} \circ \tfrac{(\id_{\check X_{1}^*}\tc \mathcal F \tc \id_{\check X_{1}})\circ \{ \id_{\check X_{1}^*}\tc \id_R \tc (\check d^X_{1}(x,a) \check d^X_{0}(x,b)) \}}{a-b}  \right]^{\!\wedge} \, , \\
C_{X} & = - e_{X_{1}} \, , \label{CX}\\
\mathcal F & = \frac{1}{2\pi\I}\oint \frac{(a-b-x)\,\D x}{x(W(x)-W(b))}  \, . \label{Fcal}
\end{align}
The formal variable $x$ in \eqref{AX} and \eqref{CX} acts by multiplication with $x$ on the middle factor in $\check X_{1}^* \tc R \tc \check X_i$. The integration contour in $\mathcal F$ is oriented counter-clockwise and taken to encircle all poles. In other words, $\mathcal F(x^k)$ computes the coefficient of $x^{-1}$ in the expansion of the formal sum \mbox{$(a-b-x)x^{-1+k} \sum_{n=0}^\infty W(b)^n/W(x)^{n+1}$}, and hence $\mathcal F$ gives a map $\C[a,x,b]\rightarrow\C[a,b]$. One may verify by direct computation (as we do in appendix~\ref{evmorph}) that $\ev_{X}$ is well-defined and indeed a morphism in $\MFW$. 

To present the coevaluation map $\coev_{X}: I \rightarrow X\otimes X^\vee$ for $X\in\MFW$, let $\vartheta : X \rightarrow Z$ be an isomorphism to a finitely generated object $Z$. Then we define $\coev_{X} = (\vartheta^{-1}\otimes\vartheta^\vee)\circ \coev_{Z}$ with
\be\label{coevXf}
\coev_{Z} = 
\begin{pmatrix}
\left[\tfrac{\check d^{Z}_{1}(a,x) \tc \id_R \tc \id_{\check Z_{1}^*} - \check d^{Z}_{1}(b,x)\tc \id_R \tc \id_{\check Z_{1}^*}}{a-b}\right]^{\!\wedge} \circ c_{Z_{1}} & 0  \\
\left[\tfrac{\check d^{Z}_{0}(a,x)\tc \id_R \tc \id_{\check Z_{0}^*} - \check d^{Z}_{0}(b,x)\tc \id_R \tc \id_{\check Z_{0}^*}}{a-b}\right]^{\!\wedge} \circ c_{Z_{0}} & 0 \\
0 & c_{Z_{1}} \\
0 & c_{Z_{0}}
\end{pmatrix} .
\ee
Again, one verifies by direct computation that this  is a morphism in $\MFW$. 

\begin{lemma}\label{phihop}
$\coev_{X}$ is independent of the choice of isomorphism~$\vartheta$. Furthermore, for any morphism $\varphi:X\rightarrow Y$ one has
\be\label{Umove}
(\varphi\otimes\id_{X^\vee})\circ \coev_{X} = (\id_{Y}\otimes \varphi^\vee) \circ \coev_{Y} \, . 
\ee
\end{lemma}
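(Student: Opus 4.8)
The plan is to prove both statements by reducing everything to the explicit finitely-generated model $Z$ and then exploiting the naturality properties already established for the underlying bimodule duality maps $c_M$ and $\delta_M$.

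\medskip

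\textbf{Independence of $\vartheta$.} Suppose $\vartheta:X\to Z$ and $\vartheta':X\to Z'$ are two isomorphisms to finitely generated objects. Then $\psi := \vartheta'\circ\vartheta^{-1}:Z\to Z'$ is an isomorphism. Writing out the two candidate coevaluations, $(\vartheta^{-1}\otimes\vartheta^\vee)\circ\coev_Z$ and $((\vartheta')^{-1}\otimes(\vartheta')^\vee)\circ\coev_{Z'}$, I want to show they agree. Factoring out $\vartheta^{-1}$ and $\vartheta^\vee$ on the left, it suffices to prove
\be
\coev_Z = (\psi^{-1}\otimes\psi^\vee)\circ\coev_{Z'}
\ee
as morphisms $I\to Z\otimes Z^\vee$. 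This is exactly the special case $X=Z$, $Y=Z'$, $\varphi=\psi$ of the U-move \eqref{Umove} (read with $\psi^{-1}$, after moving the $\psi$ across), so the independence is logically \emph{subsumed} by the second statement once that is proven for finitely generated objects; alternatively it follows by a direct check that $\coev_Z$ transforms correctly under the isomorphism $\psi$ using that $c_{Z_i}$ is natural in the free bimodules $Z_i$. I would therefore prove \eqref{Umove} first for finitely generated $X,Y$ and derive independence as a corollary, then bootstrap \eqref{Umove} to general $X,Y$.

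\medskip

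\textbf{The U-move for finitely generated objects.} For $X,Y$ finitely generated I take $\vartheta=\id$, so $\coev_X=\coev_Z$ with $Z=X$ in the formula \eqref{coevXf}. The identity \eqref{Umove} is an equality of two $d$-closed even morphisms $I\to Y\otimes X^\vee$; by remark~\ref{ony-check-f0=g0} it is enough to verify that the two sides agree in a single matrix component (say the one landing in $Y_0\tr X_1^\vee\oplus Y_1\tr X_0^\vee$), since the injectivity of the twisted differentials forces the rest. The heart of the computation is the defining naturality of $c_M$: for a bimodule map $f:\check M\to\check N$ (here induced by the components $\varphi_i$) one has the interchange relation $(\,f\otimes_S\id)\circ c_M = (\id\otimes_S f^\vee)\circ c_N$ up to the explicit identifications $M^\vee\equiv S\tc\check M^*\tc R$. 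I would first record this elementary "coevaluation naturality" for free bimodules, then plug it into the block-matrix form of $\varphi\otimes\id_{X^\vee}$ and $\id_Y\otimes\varphi^\vee$ (whose explicit entries come from appendix~\ref{uglymatrices}), and check that the difference quotients $\tfrac{\check d(a,x)-\check d(b,x)}{a-b}$ appearing in \eqref{coevXf} intertwine correctly with $\check\varphi$ because $\varphi$ is $d$-closed, i.e.\ $Y\varphi=\varphi X$.

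\medskip

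\textbf{Passage to general objects and the main obstacle.} Finally I extend \eqref{Umove} from finitely generated to arbitrary $X,Y\in\MFW$ by choosing isomorphisms $\vartheta_X:X\to Z$, $\vartheta_Y:Y\to Z'$ to finitely generated objects, setting $\widetilde\varphi=\vartheta_Y\circ\varphi\circ\vartheta_X^{-1}:Z\to Z'$, applying the already-proven finitely generated case to $\widetilde\varphi$, and transporting back through the definitions $\coev_X=(\vartheta_X^{-1}\otimes\vartheta_X^\vee)\circ\coev_Z$ and the functoriality $(\varphi\otimes\psi)^{} $ of $\otimes$ together with the contravariant functoriality $(\,\cdot\,)^\vee$ of the dual; the compatibility $\delta^\vee_M\circ\delta_{M^\vee}=\id$ from \eqref{dMdM} guarantees the dual maps compose correctly. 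The main obstacle I anticipate is purely bookkeeping rather than conceptual: keeping the numerous tensor-factor identifications $M^\vee\equiv S\tc\check M^*\tc R$ and the sign in $X^\vee=(\ldots,-(d_1^X)^\vee)$ consistent across the block matrices of \eqref{coevXf}, so that the naturality of $c_M$ can actually be read off componentwise. I expect the verification to go through once the single-component check of remark~\ref{ony-check-f0=g0} is set up correctly, which is what collapses the full matrix identity to a manageable calculation.
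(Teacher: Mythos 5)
Your overall architecture --- prove \eqref{Umove} first for finitely generated objects, derive independence of $\vartheta$ as a corollary, then bootstrap to general $X,Y$ by transporting through isomorphisms --- is exactly the paper's. But your central step contains a genuine gap: you treat \eqref{Umove} as an on-the-nose identity of $d$-closed morphisms, to be verified in a single matrix component via remark~\ref{ony-check-f0=g0}, resting on a strict ``coevaluation naturality'' $(f\otimes_R\id)\circ c_M = (\id\otimes_R f^\vee)\circ c_N$. That interchange relation holds only for bimodule maps with constant coefficients, i.e.\ $\check f(a,b)$ independent of $a,b$. For a general bimodule map $[\check f(a,b)]^{\wedge}$ it fails: on one side the polynomial in the formal variables acts on the outer left $R$-factor of $N\otimes_R M^\vee$, on the other it ends up acting on the right factor, and since the outer tensor factors are over $\C$ rather than $R$ these are different elements. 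Correspondingly, the two sides of \eqref{Umove} are \textsl{not} equal in $\DGW$; they are only homotopic, and the entire content of the paper's proof of this step is the construction of the explicit homotopy \eqref{psi-homotopy}, whose entries are precisely the difference quotients $\big(\check\varphi_i(a,x)-\check\varphi_i(b,x)\big)/(a-b)$ that measure the failure of the strict naturality you assume. The $d$-closedness $Y\varphi=\varphi X$ does not make the two sides agree; it is what makes $\psi$ a homotopy between them.

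Because the two sides genuinely differ as bimodule maps whenever the components of $\varphi$ depend on the variables (e.g.\ already for $X=Y=I$ and $\varphi$ given by multiplication by $x$), remark~\ref{ony-check-f0=g0} --- which concerns honest equality of $d$-closed morphisms, not equality up to homotopy --- is not applicable, and your proposed single-component check would simply reveal that the components disagree. To repair the argument you must either exhibit the homotopy \eqref{psi-homotopy} (the paper's route), or replace the componentwise check by a criterion valid in the homotopy category, such as the cokernel argument used in the proof of theorem~\ref{thm-pivotal}: two morphisms of matrix bi-factorisations agree in $\MFW$ if their even components induce the same map $\operatorname{coker}(d_1)\rightarrow\operatorname{coker}(d_1')$. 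Your steps for independence of $\vartheta$ and for the passage to arbitrary $X,Y$ are correct and coincide with the paper's.
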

\begin{proof}
We first show that~\eqref{Umove} holds for a finitely generated~$X$. Indeed, one readily verifies that
\be
(\varphi\otimes\id_{X^\vee})\circ \coev_{X} - (\id_{Y}\otimes\varphi^\vee)\circ\coev_{Y} = \psi\circ I + (Y\otimes X^\vee)\circ\psi
\ee
for
\be\label{psi-homotopy}
\psi = 
\begin{pmatrix}
0 & \left[\tfrac{\check \varphi_{1}(a,x)\otimes_{R} \id_{X^\vee_{1}} - \check \varphi_{1}(b,x)\otimes_{R} \id_{X^\vee_{1}}}{a-b}\right]^{\!\wedge} \circ c_{X_{1}}  \\
0 & \left[\tfrac{\check \varphi_{0}(a,x)\otimes_{R} \id_{X^\vee_{0}} - \check \varphi_{0}(b,x)\otimes_{R} \id_{X^\vee_{0}}}{a-b}\right]^{\!\wedge} \circ c_{X_{0}}  \\
0 & 0 \\
0 & 0
\end{pmatrix} : I \longrightarrow Y\otimes X^\vee \, , 
\ee
and hence~\eqref{Umove} is true in $\MFW$. 

Now let $X\in\MFW$, and let $\vartheta:X\to X_{f}$ and $\vartheta':X\to X'_{f}$ be two isomorphisms to finitely generated matrix bi-factorisations. Then
\begin{align}
\coev_{X} & = (\vartheta^{-1}\otimes \vartheta^\vee)\circ \coev_{X_{f}} = (\vartheta^{-1}\otimes (\vartheta'^\vee\circ(\vartheta'^{-1})^\vee\circ \vartheta^\vee)) \circ \coev_{X_{f}} \nonumber \\
& = ((\vartheta^{-1} \circ\vartheta\circ \vartheta'^{-1})\otimes \vartheta'^\vee)\circ \coev_{X_{f}'} = (\vartheta'^{-1} \otimes \vartheta'^\vee)\circ \coev_{X_{f}'} \, ,
\end{align}
where we used~\eqref{Umove} for~$X_{f}$ and~$X_{f}'$. Thus $\coev_{X}$ is independent of the choice of isomorphism. 

Finally, we prove that~\eqref{Umove} holds for arbitrary $X,Y\in\MFW$. Let $\vartheta:X\to X_{f}, \eta:Y\to Y_{f}$ be isomorphisms to finitely generated matrix bi-factorisations and define $\Phi=\eta\circ\varphi\circ\vartheta^{-1}:X_{f}\to Y_{f}$. From this it follows that $\eta^{-1}\circ\Phi=\varphi\circ\vartheta^{-1}$ and $\vartheta^\vee \circ \Phi^\vee = \varphi^\vee \circ \eta^\vee$, so we find
\begin{align}
(\varphi\otimes\id_{X^\vee})\circ \coev_{X} & = ((\varphi\circ\vartheta^{-1})\otimes \vartheta^\vee)\circ \coev_{X_{f}} = ((\eta^{-1}\circ \Phi)\otimes \vartheta^\vee)\circ \coev_{X_{f}} \nonumber \\
& = (\eta^{-1} \otimes(\vartheta^\vee \circ \Phi^\vee)) \circ \coev_{Y_{f}} = (\eta^{-1} \otimes (\varphi^\vee \circ \eta^\vee))\circ \coev_{Y_{f}} \nonumber \\
& = (\id_{Y} \otimes \varphi^\vee) \circ \coev_{Y} \, ,
\end{align}
which concludes the proof. 
\end{proof}

Now that we have introduced all the ingredients, we can show that the functor $(\,\cdot\,)^\vee$ and the morphisms $\ev_{X}, \coev_{X}$ endow $\MFW$ with a right duality. The following result is proved in appendix~\ref{appZorro}. 
\begin{theorem}\label{MFleftdual}
For all $X\in\MFW$ we have
\begin{align}
& \rho_{X} \circ (\id_{X} \otimes \ev_{X})\circ \alpha_{X,X^\vee,X} \circ (\coev_X \otimes \id_{X}) \circ \lambda^{-1}_{X}= \id_{X} \, , \label{Z1}\\
& \lambda_{X^\vee}\circ (\ev_{X} \otimes \id_{X^\vee})\circ \alpha^{-1}_{X^\vee,X,X^\vee} 
\circ (\id_{X ^\vee} \otimes \coev_{X}) \circ \rho_{X^\vee}^{-1}= \id_{X^\vee} \, , \label{lambdaZorro}
\end{align}
i.\,e.~the Zorro moves hold true, and $\MFW$ is right rigid. 
\end{theorem}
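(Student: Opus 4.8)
The plan is to reduce the theorem to the objects for which $\ev_X$ is defined by the explicit formula~\eqref{evX}, namely finitely generated $X$ whose twisted differentials $\check d^X_i(a,b)$ have entries of degree less than $\deg(W)$; call these \emph{reduced}. For general $X\in\MFW$ one fixes an isomorphism $\vartheta\colon X\to Z$ with $Z$ reduced and transports the duality data, setting $\coev_X=(\vartheta^{-1}\otimes\vartheta^\vee)\circ\coev_Z$ as in the text preceding~\eqref{coevXf} and, analogously, $\ev_X=\ev_Z\circ\big((\vartheta^{-1})^\vee\otimes\vartheta\big)$; that both are independent of $\vartheta$ follows from the argument of Lemma~\ref{phihop}. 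Since $(\,\cdot\,)^\vee$ is a functor with $(\vartheta^{-1})^\vee=(\vartheta^\vee)^{-1}$ and since $\lambda,\rho,\alpha$ are natural, a routine diagram chase rewrites the left-hand side of~\eqref{Z1} for $X$ as $\vartheta^{-1}\circ(\text{left-hand side of~\eqref{Z1} for }Z)\circ\vartheta$, and similarly for~\eqref{lambdaZorro}. Hence it suffices to verify the two Zorro identities for reduced objects.

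For a reduced $X$ both sides of each identity are $d$-closed even endomorphisms, so by Remark~\ref{ony-check-f0=g0} I only need to compare a single component — say the $X_0\to X_0$ part of~\eqref{Z1} and the $X_1^\vee\to X_1^\vee$ part of~\eqref{lambdaZorro} — with the identity. I would assemble the composite by inserting the explicit matrices: $\lambda_X,\rho_X$ from~\eqref{lambdarho} (and their inverses) and the associator $\alpha_{X,X^\vee,X}$ from appendix~\ref{uglymatrices}, the coevaluation~\eqref{coevXf}, and the evaluation~\eqref{evX} with blocks $A_X,B_X,C_X$. Feeding an element of $X_0$ through $\lambda_X^{-1}$ and $\coev_X\otimes\id_X$ produces, via the maps $c_{X_i}$, a sum $\sum_i e_i\otimes e_i^{*}$ over a basis of $\check X_i$ dressed with the difference quotients $(\check d^X_i(a,x)-\check d^X_i(b,x))/(a-b)$ of~\eqref{coevXf}; applying $\id_X\otimes\ev_X$ then uses $\ev_{\check X_i}$ to contract the dual pairing and weights the survivors by the blocks $A_X$ and $B_X$ (hence by the kernel $\mathcal F$ of~\eqref{Fcal}), after which $\rho_X$ removes the spectator copy of $I$.

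The crux, and the step I expect to be the main obstacle, is the contour-integral identity that remains once the pairing $\ev_{\check X_i}$ has collapsed the sum over $\{e_i\}$. At that point the $X_0$-component is a single operator on $\check X_0$ obtained by applying $\mathcal F$ to products of twisted differentials, where $\mathcal F$ extracts the $x^{-1}$-coefficient of $(a-b-x)\,x^{-1+k}\sum_{n\geq 0}W(b)^n/W(x)^{n+1}$, i.e.\ the residues at $x=0$ and at the zeros of $W(x)-W(b)$. The task is to show that, after invoking the matrix-factorisation relation
\be
\check d^X_1(x,b)\,\check d^X_0(x,b)=\big(W(x)-W(b)\big)\,\id
\ee
(which turns the denominator $W(x)-W(b)$ inside $\mathcal F$ into something the numerator can cancel) and carrying out the residue evaluation, this operator reduces to $\id_{\check X_0}$ modulo a manifestly $d$-exact term $X\circ h+h\circ X$ for some odd $h\in\End(X)$. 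The degree hypothesis $\deg\check d^X_i<\deg W$ is exactly what keeps the residue bookkeeping finite and forces the spurious contributions to cancel. Granting this identity yields~\eqref{Z1}; the second move~\eqref{lambdaZorro} follows from the entirely analogous computation, now tracking the $X^\vee$-component and using the bimodule pairing $C_X=-e_{X_1}$ of~\eqref{CX} where the first move used $A_X$. Transporting back along $\vartheta$ then gives both Zorro moves for all $X\in\MFW$, so $\MFW$ is right rigid.
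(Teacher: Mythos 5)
Your overall strategy coincides with the paper's own proof: one first restricts to objects whose twisted differentials have entries of degree $<\deg W$ (this is exactly how the paper defines $\ev_X$ and $\coev_X$ beyond that class, by transport along an isomorphism $\vartheta$, with well-definedness as in lemma~\ref{phihop}), one then observes via the conjugation $\vartheta^{-1}\circ(\cdots)\circ\vartheta$ that it suffices to treat the restricted class, and finally one compares a single component of each Zorro composite with the identity, which is legitimate by remark~\ref{ony-check-f0=g0}. All of this is sound and matches the structure of the paper's argument.

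The gap is that the step you yourself call ``the main obstacle'' is the entire mathematical content of the paper's proof (appendix~\ref{appZorro}), and you do not carry it out --- you only describe its expected shape and then ``grant'' it. Concretely: the $(1,1)$-entry of the left-hand side of \eqref{Z1} is the map $F$ of \eqref{T}, built from $A_X$ and the difference quotient $\bigl(\check d^X_1(a,x)-\check d^X_1(b,x)\bigr)/(a-b)$ of \eqref{coevXf}; evaluating on $1\tc e^0_s\tc 1$ and pairing with $(e^0_r)^*$, one substitutes \eqref{AX}, uses $\check d^X_1(x,b)\,\check d^X_0(x,b)=(W(x)-W(b))\,\id$ to split the integrand into the two terms of \eqref{puh}, kills the first by the degree bound on $\check d^X_0$, and picks up the residue $\delta_{r,s}\,1\tc 1$ from the second. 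Until this is done, the proposal is a plan, not a proof. Two details also indicate the matrices were not actually traced. First, the identity holds \emph{on the nose}, i.e.\ in $\DGW$: no homotopy correction $X\circ h+h\circ X$ appears, contrary to your expectation (and had one appeared, you would need an extra step to combine it with remark~\ref{ony-check-f0=g0}, which is a statement about strict equality of closed morphisms, not equality up to homotopy --- one would first subtract the exact term and then apply the remark). Second, because of the zero patterns of $\lambda_X,\rho_X$ in \eqref{lambdarho}, the blocks $B_X$ and $C_X$ of \eqref{evX} drop out of the surviving component of \emph{both} Zorro composites; the second move \eqref{lambdaZorro} is again governed by $A_X$, not by $C_X=-e_{X_1}$ as you assert. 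The blocks $B_X$ and $C_X$ are needed only to make $\ev_X$ a morphism in the first place (appendix~\ref{evmorph}), which is a separate verification your proposal also takes for granted.
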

In fact, the proof shows that the Zorro moves even hold in $\DGW$.

In pictorial language, the identity $(\varphi\otimes\id_{X^\vee})\circ \coev_{X} = (\id_{Y}\otimes \varphi^\vee) \circ \coev_{Y}$ of lemma~\ref{phihop} reads
\be\label{stern}
\begin{tikzpicture}[very thick,scale=1.0,color=blue!50!black, baseline]
\draw[line width=1pt] 
(0,1) node[line width=0pt] (Y) {{\small $Y\vphantom{X^\vee}$}}
(1,1) node[line width=0pt] (X) {{\small $X^\vee$}}
(0,0) node[inner sep=2pt,draw] (f) {{\small$\,\varphi\vphantom{\varphi^\vee}\,$}}; 
\draw[uudirected] (1,0) .. controls +(0,-1) and +(0,-1) .. (f);
\draw (X) -- (1,0);
\draw (Y) -- (f);
\end{tikzpicture}
=\;
\begin{tikzpicture}[very thick,scale=1.0,color=blue!50!black, baseline]
\draw[line width=1pt] 
(0,1) node[line width=0pt] (Y) {{\small $Y\vphantom{X^\vee}$}}
(1,1) node[line width=0pt] (X) {{\small $X^\vee$}}
(1,0) node[inner sep=2pt,draw] (f) {{\small$\varphi^\vee$}}; 
\draw[oodirected] (f) .. controls +(0,-1) and +(0,-1) .. (0,0);
\draw (X) -- (f);
\draw (Y) -- (0,0);
\end{tikzpicture} \, .
\ee
Using both Zorro moves we can readily derive the analogous expression for the evaluation map: by appending curved lines to the right and left of equation~\eqref{stern} it follows that
\be \label{dual-morph-and-ev}
\begin{tikzpicture}[very thick,scale=1.0,color=blue!50!black, baseline]
\draw[line width=1pt] 
(0,0) node[inner sep=2pt,draw] (f) {{\small$\,\varphi\vphantom{\varphi^\vee}\,$}}; 
\draw (f) -- (0,1);
\draw (0,-1) -- (f);
\end{tikzpicture}
=
\begin{tikzpicture}[very thick,scale=1.0,color=blue!50!black, baseline]
\draw[line width=1pt] 
(0,0) node[inner sep=2pt,draw] (f) {{\small$\,\varphi\vphantom{\varphi^\vee}\,$}}; 
\draw[uudirected] (0.75,0) .. controls +(0,-1) and +(0,-1) .. (f);
\draw[directed] (1.5,0) .. controls +(0,1) and +(0,1) .. (0.75,0);
\draw (0,1) -- (f);
\draw (1.5,0) -- (1.5,-1);
\end{tikzpicture}
=
\begin{tikzpicture}[very thick,scale=1.0,color=blue!50!black, baseline]
\draw[line width=1pt] 
(1,0) node[inner sep=2pt,draw] (f) {{\small$\varphi^\vee$}}; 
\draw[oodirected] (f) .. controls +(0,-1) and +(0,-1) .. (0.25,0);
\draw[uudirected] (1.75,0) .. controls +(0,1) and +(0,1) .. (f);
\draw (0.25,1) -- (0.25,0);
\draw (1.75,-1) -- (1.75,0);
\end{tikzpicture}
\quad\Rightarrow\quad
\begin{tikzpicture}[very thick,scale=1.0,color=blue!50!black, baseline]
\draw[line width=1pt] 
(0,0) node[inner sep=2pt,draw] (f) {{\small$\,\varphi\vphantom{\varphi^\vee}\,$}}; 
\draw[oodirected] (f) .. controls +(0,1) and +(0,1) .. (-0.75,0);
\draw (-0.75,0) -- (-0.75,-1);
\draw (f) -- (0,-1);
\end{tikzpicture}
=
\begin{tikzpicture}[very thick,scale=1.0,color=blue!50!black, baseline]
\draw[line width=1pt] 
(1,0) node[inner sep=2pt,draw] (f) {{\small$\varphi^\vee$}}; 
\draw[oodirected] (f) .. controls +(0,-1) and +(0,-1) .. (0.25,0);
\draw[uudirected] (1.75,0) .. controls +(0,1) and +(0,1) .. (f);
\draw[directed] (0.25,0) .. controls +(0,1) and +(0,1) .. (-0.5,0);
\draw (1.75,-1) -- (1.75,0);
\draw (-0.5,0) -- (-0.5,-1);
\end{tikzpicture}
=
\begin{tikzpicture}[very thick,scale=1.0,color=blue!50!black, baseline]
\draw[line width=1pt] 
(1,0) node[inner sep=2pt,draw] (f) {{\small$\varphi^\vee$}}; 
\draw[uudirected] (1.75,0) .. controls +(0,1) and +(0,1) .. (f);
\draw (1.75,-1) -- (1.75,0);
\draw (f) -- (1,-1);
\end{tikzpicture} \, .
\ee
Thus we have found: 
\begin{lemma}\label{ev-and-f-dual}
For any morphism $\varphi:X\rightarrow Y$ in $\MFW$ one has
\be
\ev_{Y}\circ (\id_{Y^\vee}\otimes \varphi)  = \ev_{X}\circ (\varphi^\vee \otimes\id_{X})\, . 
\ee
\end{lemma}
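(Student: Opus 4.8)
The plan is to prove this purely formally from the two structural facts already established: the naturality of the coevaluation with respect to duals, equation~\eqref{stern} of lemma~\ref{phihop}, and the two Zorro moves of theorem~\ref{MFleftdual}. Since these are exactly the axioms governing a rigid monoidal category, the identity I want is the standard ``mate'' relation, and the cleanest route is a diagrammatic bending argument: I would start from the coevaluation identity~\eqref{stern}, attach an evaluation cap and a coevaluation cup to its free ends so as to rotate the $\varphi$- and $\varphi^\vee$-strands, and then straighten the resulting zigzags using the Zorro moves. This is precisely the chain of pictures displayed in~\eqref{dual-morph-and-ev}, and reading off its outer terms yields $\ev_Y \circ (\id_{Y^\vee}\otimes\varphi) = \ev_X\circ(\varphi^\vee\otimes\id_X)$. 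No input specific to matrix bi-factorisations is needed beyond this, since~\eqref{stern} and the Zorro moves already encapsulate everything about the explicit maps~\eqref{evX} and~\eqref{coevXf}.

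For a reader who prefers formulas I would run the same argument algebraically. First I would record the mate expression
\be
\varphi^\vee = (\ev_Y \otimes \id_{X^\vee}) \circ (\id_{Y^\vee} \otimes \varphi \otimes \id_{X^\vee}) \circ (\id_{Y^\vee} \otimes \coev_X) \, ,
\ee
which follows by substituting~\eqref{stern} into its right-hand side and then cancelling via the second Zorro move~\eqref{lambdaZorro} applied to~$Y$. Substituting this into $\ev_X\circ(\varphi^\vee\otimes\id_X)$, I would use the interchange law to move the now-spectator $\ev_Y$ to the front, isolating the block $(\id_Y\otimes\ev_X)\circ(\varphi\otimes\id_{X^\vee}\otimes\id_X)\circ(\coev_X\otimes\id_X)$. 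A second application of interchange rewrites this block as $\varphi\circ\bigl[(\id_X\otimes\ev_X)\circ(\coev_X\otimes\id_X)\bigr]$, whereupon the first Zorro move~\eqref{Z1} for~$X$ collapses the bracket to $\id_X$. What remains is exactly $\ev_Y\circ(\id_{Y^\vee}\otimes\varphi)$, as claimed.

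The two invocations of the interchange law look like bookkeeping, and they genuinely are routine in a strict monoidal category; accordingly, the main obstacle is the non-strictness of $\MFW$, since the diagrams in~\eqref{dual-morph-and-ev} (and the displayed formulas above) silently suppress the associator~$\alpha$ and the unit isomorphisms $\lambda,\rho$. The honest version of the bending argument must thread these coherence isomorphisms through every step, just as in the expanded Zorro diagrams underlying theorem~\ref{MFleftdual}. I would dispatch this in the standard way, invoking Mac~Lane's coherence theorem~\cite{catforwormath} to pass to an equivalent strict monoidal model, where the manipulation above is literally a composition of the stated morphisms; transporting the resulting identity back along the monoidal equivalence then delivers the statement for $\MFW$ itself.
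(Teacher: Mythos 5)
Your proposal is correct and is essentially the paper's own proof: the paper also establishes the lemma by appending curved lines (caps and cups) to the identity~\eqref{stern} and straightening with the Zorro moves, exactly the chain of pictures in~\eqref{dual-morph-and-ev}. Your second and third paragraphs merely spell out algebraically, and via Mac~Lane coherence, the bookkeeping that the paper leaves implicit, so there is no substantive difference in approach.
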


Another simple application of~\eqref{sternstern} and the above lemma is to show that our definition of~$\varphi^\vee$ in~\eqref{dual-morphism-phi} agrees with the canonical definition of a dual morphism in a rigid category, 
\begin{align}\label{dualphi}
\begin{tikzpicture}[very thick,scale=1.0,color=blue!50!black, baseline]
\draw[line width=1pt] 
(0,0) node[inner sep=2pt,draw] (f) {{\small$\varphi^\vee$}}; 
\draw (f) -- (0,1);
\draw (0,-1) -- (f);
\end{tikzpicture}
=
\begin{tikzpicture}[very thick,scale=1.0,color=blue!50!black, baseline]
\draw[line width=1pt] 
(0,0) node[inner sep=2pt,draw] (f) {{\small$\varphi^\vee$}}; 
\draw[uudirected] (1,0) .. controls +(0,1) and +(0,1) .. (f);
\draw[directed] (2,0) .. controls +(0,-1) and +(0,-1) .. (1,0);
\draw (0,-1) -- (f);
\draw (2,0) -- (2,1);
\end{tikzpicture}
=
\begin{tikzpicture}[very thick,scale=1.0,color=blue!50!black, baseline]
\draw[line width=1pt] 
(1,0) node[inner sep=2pt,draw] (f) {{\small$\,\varphi\vphantom{\varphi^\vee}\,$}}; 
\draw[oodirected] (f) .. controls +(0,1) and +(0,1) .. (0,0);
\draw[uudirected] (2,0) .. controls +(0,-1) and +(0,-1) .. (f);
\draw (0,-1) -- (0,0);
\draw (2,1) -- (2,0);
\end{tikzpicture} \, .
\end{align}
In diagram-free language, this reads
\be \label{phi-dual-generadef}
\varphi^\vee = 
\lambda_{X^\vee}\circ (\ev_{Y}\otimes \id_{X^\vee}) \circ \alpha^{-1}_{Y^\vee,X,X^\vee} \circ (\id_{Y^\vee}\otimes(\varphi\otimes \id_{X^\vee})) \circ (\id_{Y^\vee}\otimes \coev_{X}) \circ \rho^{-1}_{Y^\vee} \, .
\ee
We note that if the dual of a morphism is defined as above, then the identities~\eqref{stern} and~\eqref{dual-morph-and-ev} immediately follow by applying Zorro moves.

\begin{lemma}\label{evlr}
We have $\ev_{I}=\lambda_{I}=\rho_{I}$ and $\coev_{I}=\lambda^{-1}_{I}=\rho^{-1}_{I}$ in $\MFW$. 
\end{lemma}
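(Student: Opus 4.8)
The plan is to identify $\ev_I$ and $\coev_I$ with the unit constraints by exploiting the uniqueness of right duals. Since $I^\vee=I$ (as noted after~\eqref{dual-morphism-phi}), the maps $\ev_I\colon I\otimes I\to I$ and $\coev_I\colon I\to I\otimes I$ are of exactly the same type as $\lambda_I,\rho_I$ and their inverses, so it makes sense to compare them. First I would record the standard coherence fact that in any monoidal category the two unit constraints agree on the unit object, $\lambda_I=\rho_I$, and hence $\lambda_I^{-1}=\rho_I^{-1}$; write $e:=\lambda_I=\rho_I$ and $c:=\lambda_I^{-1}$. A short diagram chase using the triangle axiom together with $\lambda_I=\rho_I$ then shows that $(I,e,c)$ satisfies both Zorro moves of definition~\ref{defleftrigid}, i.e.\ the unit is canonically a right dual of itself with evaluation $e$ and coevaluation $c$.

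By theorem~\ref{MFleftdual} (with $X=I$), the triple $(I,\ev_I,\coev_I)$ is \emph{also} a right-dual structure on $I$. Since right duals are unique up to a unique compatible isomorphism --- this is exactly the construction of remark~\ref{left-rigid-all-equiv}, which is pointwise in the object and hence applies to the single object $I$ --- there is a unique invertible $\phi\in\End_{\MFW}(I)$ relating the two structures, so that $\ev_I$ and $e$, as well as $\coev_I$ and $c$, agree up to $\phi$. Thus the whole lemma reduces to the single assertion $\phi=\id_I$.

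To prove $\phi=\id_I$ I would argue in two steps. The map $\phi$ is a composite of the grade-preserving structural morphisms $\ev_I,\coev_I,\lambda,\rho,\alpha$ and is therefore grade preserving; but $\End_{\MFW}(I)\cong R/(\partial W)$ carries the internal grading in which $\id_I$ is the only generator of degree zero (for $W=x^d$ one has $R/(\partial W)=\C[x]/(x^{d-1})$ with $x$ in strictly positive degree), so $\phi=c_0\,\id_I$ for some $c_0\in\C^\times$, and consequently $\ev_I$ and $e$ (resp.\ $\coev_I$ and $c$) agree up to the scalar $c_0$. It then remains to check the normalisation $c_0=1$, which I would do by evaluating a single leading block of $\ev_I$: using remark~\ref{ony-check-f0=g0} to reduce to one component, I compute the block $A_I$ from~\eqref{evX}--\eqref{AX} on the rank-one pieces $\check I_0=\check I_1=\C$ of $I$ and match it against the corresponding entry $\id_{I_0}\tr\mu$ of $\rho_I=\lambda_I=e$ from~\eqref{lambdarho}. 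I expect this last scalar normalisation to be the main obstacle, since it is the only place where one must unravel the explicit contour operator $\mathcal F$ of~\eqref{Fcal} and the bimodule evaluation $\ev_{\check I_1}$ on the generators of $I$; every preceding step is formal monoidal-category algebra that never touches the Landau-Ginzburg data.
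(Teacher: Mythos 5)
Your abstract skeleton is sound and genuinely different from the paper's argument: the paper proves the two halves of the lemma by two separate matrix computations (first $\lambda_I\circ\coev_I=\id_I$, then $\ev_I\circ\lambda_I^{-1}=\id_I$, both carried out in $\DGW$), whereas you note that $(I,\lambda_I,\lambda_I^{-1})$ is itself a right-dual structure on $I$, so that by the (pointwise) uniqueness of right duals from remark~\ref{left-rigid-all-equiv} both halves reduce to the single assertion $\phi=\id_I$, where $\phi$ is the comparison isomorphism; by naturality of $\rho$ and $\lambda_I=\rho_I$ one has $\phi=\ev_I\circ\lambda_I^{-1}$, so this is exactly the paper's second computation, and the coevaluation half would come for free. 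The trouble lies in how you establish $\phi=\id_I$. The grading argument showing $\phi=c_0\,\id_I$ is only available for the quasi-homogeneous potential $W=x^d$: the lemma, like all of section~2.2, is stated for an arbitrary one-variable potential, and for general $W$ the ring $\End_{\MFW}(I)\cong R/(\partial W)$ carries no grading and contains non-scalar units (e.g.\ $1+x$), so the reduction to a scalar fails; moreover, even for $W=x^d$ the fact that $\ev_I$ and $\coev_I$ have R-charge zero is itself a nontrivial input (appendix~\ref{R-charge-app}).

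The more serious gap is the normalisation step. You propose to compute the block $A_I$ and match it on the nose against the corresponding entry $\id_{I_0}\tr\mu$ of $\rho_I$ (or $\mu\tr\id_{I_0}$ of $\lambda_I$), invoking remark~\ref{ony-check-f0=g0}. That remark concerns \emph{strict} equality of $d$-closed morphisms in $\DGW$, and $\ev_I$ and $\lambda_I$ are \emph{not} equal there --- they are only homotopic, which is precisely why the lemma is a statement about $\MFW$. Concretely, identifying $I_0\tr I_0\cong R\tc R\tc R$, the block $A_I$ sends the middle generator $1\tc x^k\tc 1$ to $-\tfrac{1}{2\pi\I}\oint\tfrac{(a-b-x)\,x^{k-1}}{x-b}\,\D x$, which equals $1$ for $k=0$ but $2b-a$ for $k=1$, whereas the corresponding entries of $\lambda_I$ and $\rho_I$ give $a^k$ and $b^k$. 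No scalar $c_0$ reconciles these values; the mismatches are multiples of $a-b$, i.e.\ homotopy terms, so entry-matching of particular representatives of homotopy classes is not a well-defined way to read off $c_0$. The paper's proof is engineered around exactly this point: it computes the composite $\ev_I\circ\lambda_I^{-1}$, whose precomposition with $[1\tc \id_{\check I_{0}}]^\wedge$ confines the evaluation to the sub-bimodule generated in middle degree zero, where the discrepancy vanishes; this yields the honest identity $\ev_I\circ\lambda_I^{-1}=\id_I$ in $\DGW$, to which remark~\ref{ony-check-f0=g0} then legitimately applies. To repair your proof you would have to replace the block comparison by this composite computation (or exhibit the homotopy explicitly) --- at which point you are performing the paper's calculation, for arbitrary one-variable $W$, and the grading detour becomes unnecessary.
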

\begin{proof}
By direct computation one finds $\lambda_{I}\circ \coev_{I} = \id_{I}$ in $\DGW$, and therefore $\coev_{I}=\lambda^{-1}_{I}=\rho^{-1}_{I}$ in $\MFW$. 

The $(1,1)$-entry of the $(2\times 2)$-matrix $\ev_{I}\circ \lambda^{-1}_{I}$ is given by $A_{I}\circ [1\tc \id_{\check I_{0}}]\hat~¬†$ which is equal to
\be
- \left[ \frac{1}{2\pi\I} \oint \frac{(a-b-x)\check d^I_{0}(x,b) \D x}{x(W(x)-W(b))} \right]^{\!\wedge} = - \left[ \frac{1}{2\pi\I} \oint \frac{(a-b-x) \D x}{x(x-b)} \right]^{\!\wedge} = \id_{I_{0}} \, .
\ee
By remark~\ref{ony-check-f0=g0}, this determines the $(2,2)$-entry to be $\id_{I_{1}}$, and we have $\ev_{I}\circ \lambda^{-1}_{I} = \id_{I}$ in $\DGW$ and thus $\ev_{I}=\lambda_{I}=\rho_{I}$ in $\MFW$. 
\end{proof}

\begin{remark}\label{manyduals}
Let us explain the relation between the duality structure discussed here and the one relevant for the category $\operatorname{MF}(W)$ of matrix factorisations (describing boundary conditions, not defects). This will allow us to argue that $\MFW$ is expected to be right rigid also in the multi-variable case. 
\begin{enumerate}
\item 
For an object $Q=(\begin{smallmatrix}0&q_{1}\\q_{0}&0\end{smallmatrix})\in\operatorname{MF}(W)$, its dual is given by $Q^*=(\begin{smallmatrix}0&-q^*_{0}\\q^*_{1}&0\end{smallmatrix})\in\operatorname{MF}(-W)$. This is the natural choice in the sense that we have isomorphisms of complexes
\be\label{HomTensor}
\Hom_{\operatorname{DG}(W)}(Q,P) \cong (P_{0} \oplus P_{1}) \otimes_{R} (Q_{0}^* \oplus Q_{1}^*)
\ee
where the differential on the right-hand side is the matrix factorisation $P \otimes_{R} Q^*$ of 
zero~\cite{kr0405232}. 
Furthermore, there are isomorphisms
\begin{align}
\Hom_{\MFW}(P^*\tc Q,I) & \cong \Hom_{\operatorname{MF}(W)}(Q,P) \, , \nonumber \\
 \Hom_{\MFW}(I,Q\tc P^*) & \cong \Hom_{\operatorname{MF}(W)}(P,Q) \, ,
\end{align}
see e.\,g.~\cite{brr0909.0696, dm1004.0687}. However, for a potential~$W$ in~$N$ variables we define duals as follows in $\MFW$: we set
\be\label{star-dual}
X^\star = 
\begin{pmatrix}
0 & -[(\check d^X_{0})^*(b_{1},\ldots,b_{N},a_{1},\ldots,a_{N})]\hat~ \\
[(\check d^X_{1})^*(b_{1},\ldots,b_{N},a_{1},\ldots,a_{N})]\hat~ & 0
\end{pmatrix} 
\ee
and
\be\label{general-dual}
X^\vee = T^{N}X^\star \, , \quad
\begin{pmatrix}
\varphi_{0} & 0 \\
0 & \varphi_{1}
\end{pmatrix}^\vee
=
T^N\begin{pmatrix}
\varphi_{0}^\vee & 0 \\
0 & \varphi_{1}^\vee
\end{pmatrix}
\ee
for objects~$X$ and morphisms~$\varphi$ in $\MFW$, where~$T$ is the shift functor (cf.~section~\ref{triacomp}). 
We note that the definition of $X^\vee$ coincides with~\eqref{Xdual} in the one-variable case. The crucial fact, proved e.\,g.~by generalising the method of~\cite{err0508} or the homological perturbation lemma analysis of~\cite{dm1004.0687},\footnote{We thank Daniel Murfet for a helpful discussion on this point.} is that only with this definition do we have
the natural (in $X$ and $Y$) isomorphisms
\begin{subequations}\label{isos}
\begin{align}
\Hom_{\MFW}(Y^\vee \otimes X, I) & \cong \Hom_{\MFW}(X,Y) \, , \\
\Hom_{\MFW}(I,X\otimes Y^\vee) & \cong \Hom_{\MFW}(Y,X)
\end{align}
\end{subequations}
in $\MFW$. Using $X^\star$ instead of $X^\vee$ gives rise to quasi-isomorphisms of non-zero degree in $\DGW$ if~$N$ is odd, which hence do not induce isomorphisms in $\MFW$, and the physical condition $I^\vee\cong I$ is only satisfied for the correct dual $(\,\cdot\,)^\vee$. 

\item
It is expected that one can use the isomorphisms~\eqref{isos} to prove that $\MFW$ is right rigid also in the general multi-variable case. Indeed, natural candidates for the evaluation and coevaluation maps can be constructed as the preimages of the identity (for $X=Y$) under the isomorphisms~\eqref{isos}. Naturality in $X$ and $Y$ of the maps~\eqref{isos} then implies that for any morphism $\varphi : X\rightarrow Y$ we have $\ev_Y \circ (\id_{Y^\vee} \otimes \varphi) = \ev_X \circ (\varphi^\vee \otimes \id_X)$, i.\,e.~the statement of lemma~\ref{ev-and-f-dual} holds.

Nonetheless, it would have to be checked separately if the Zorro moves are satisfied, and to do this explicitly is 
(in principle straightforward yet) 
rather involved for general~$W$. In this paper we are concerned with the one-variable case and we leave the multi-variable expressions for the evaluation and coevaluation maps to future work.

\item
\label{I-dual-several-var}
In the one-variable case we saw that the unit object is \textsl{equal} to its dual. In the many-variable case this is no longer true for our choice~\eqref{general-dual} of duals. However, it is straightforward to construct an isomorphism $\gamma:I^\vee\rightarrow I$; one finds that~$\gamma$ is given by a symmetric permutation matrix (with some negative entries). 

From part (ii) above we expect that there exists a right rigid structure $(X^\vee,\ev_{X},\coev_{X})$ on $\MFW$; let us assume that this is the case. By remark~\ref{left-rigid-all-equiv} all such structures are equivalent, and hence one could choose another right rigid structure on $\MFW$ which coincides with $(X^\vee,\ev_{X},\coev_{X})$ for all $X\neq I$, but for which the duality maps of~$I$ are defined by
\be\label{evI-coevI-are-lambdaI}
 \ev_I = \lambda_I \circ (\gamma \otimes \id_I)
 \, , \quad
 \coev_I = (\id_I \otimes \gamma^{-1}) \circ \lambda_I^{-1}  \, .
\ee
One easily verfies that $\ev_{I}$ and $\coev_{I}$ as above satisfy the Zorro moves. We note that the statement of lemma~\ref{evlr} can be rephrased as saying that with $\ev_X$ and $\coev_X$ as given in \eqref{evX} and \eqref{coevXf}, equation~\eqref{evI-coevI-are-lambdaI} holds with $\gamma = \id_I$.
\end{enumerate}
\end{remark}

\subsubsection{R-charge}\label{R-charge-short}

Instead of $\MFW$ one may also consider the \textsl{category of graded matrix bi-factorisations} $\MFR(W)$, see e.\,g.~\cite{hw0404196} and appendix~\ref{R-charge-app}. Its objects are matrix bi-factorisations~$X$ together with invertible even bimodule maps $U^X(\alpha): X_{0}\oplus X_{1}\to X_{0}\oplus X_{1}$ for all $\alpha\in\C$ subject to a group law (see appendix~\ref{R-charge-app}) and such that
\be
U^{X}(\alpha) \circ [\check X(\E^{\I q_{x}\alpha}a,\E^{\I q_{x}\alpha}b)]\hat~ \circ U^{X}(\alpha)^{-1} = \E^{\I\alpha} X
\ee
for all $\alpha\in\C$. Here we take $W$ to be homogeneous of polynomial degree $d$ and $q_{x}=2/d$ is the charge assigned to~$x$ in~$R$. A morphism~$\varphi\in\Hom_{\MFR(W)}(X,Y)$ is the same as a morphism in $\MFW$; it has \textsl{R-charge}~$p$ if
\be
U^{Y}(\alpha)\circ [\varphi(\E^{\I q_{x}\alpha}a, \E^{\I q_{x}\alpha}b)]\hat~ \circ U^X(\alpha)^{-1} = \E^{\I p \alpha} \varphi \, . 
\ee
It is shown in~\cite[sec.~2.3]{cr0909.4381} that with 
$U^I(\alpha) =
(\begin{smallmatrix}
1 & 0 \\
0 & \E^{\I\alpha(q_x-1)}
\end{smallmatrix})$, the isomorphisms $\alpha_{X}, \lambda_{X}, \rho_{X}$ and their inverses have R-charge zero. 

The dual of a graded matrix bi-factorisation $(X,U^X(\alpha))$ is
\be
(X^\vee, \E^{\I \alpha (q_x-1)} (U^X(\alpha)^{-1})^\vee) \, .
\ee
With this definition we have $I=I^\vee$ also as graded matrix bi-factorisation, and one can check that both $\ev_{X}$ and $\coev_{X}$ have R-charge zero. More details can be found in appendix~\ref{R-charge-app}.

\subsection{Pivotal structure}\label{pivotal}

The notion of a pivotal structure\footnote{For a more detailed discussion of pivotal structures one may e.\,g.~refer to \cite{freyd-yetter} (in the strict case), \cite[sec.~3.1]{maltsiniotis} (where the name ``sovereign'' is used), or \cite{mu0804.3587}.} will be needed when we derive the properties of defect operators in the next section. We will first state the general definition and then show that $\MFW$ has a natural pivotal structure (and also left duals).

\subsubsection{Definition and properties of pivotal structures}
  
Let $\mathcal M$ be a right rigid monoidal category as in definition \ref{defleftrigid}. We obtain a contravariant functor $(\,\cdot\,)^\vee : \mathcal M \rightarrow \mathcal M$ which acts as $X \mapsto X^\vee$ on objects and as \eqref{phi-dual-generadef} on morphisms. This functor can be equipped with a natural \textsl{monoidal structure} $\big( (\,\cdot\,)^\vee, \nu^2, \nu^0 \big)$, where $\nu^0 : I \rightarrow I^\vee$ is an isomorphism and $\nu^2$ is a natural family of isomorphisms
\be\label{nu2-iso}
\nu^2_{X,Y} : X^\vee \otimes Y^\vee \longrightarrow (Y\otimes X)^\vee \, .
\ee
Both~$\nu^0$ and~$\nu^2$ are given in terms of the right rigid structure, namely, $\nu^0 = \lambda_{I^\vee} \circ \coev_I$ and
\begin{align}\label{gamma}
\nu^2_{X,Y} & = 
\begin{tikzpicture}[very thick,scale=0.8,color=blue!50!black, baseline]
\draw[line width=1pt] 
(-2,-1) node[line width=0pt] (X) {{\small $X^\vee$}}
(-1,-1) node[line width=0pt] (Y) {{\small $Y^\vee$}}
(2,2) node[line width=0pt] (XY) {{\small $(Y\otimes X)^\vee$}}; 
\draw[directed] (0,0) .. controls +(0,1) and +(0,1) .. (-1,0);
\draw[directed] (1,0) .. controls +(0,2) and +(0,2) .. (-2,0);
\draw[directed] (2,0) .. controls +(0,-1) and +(0,-1) .. (0.5,0);
\draw (-1,0) -- (Y);
\draw (-2,0) -- (X);
\draw[dotted] (0,0) -- (1,0);
\draw (2,0) -- (XY);
\end{tikzpicture}
 \nonumber \\
& = \lambda_{(Y\otimes X)^\vee} \circ (\ev_{X} \otimes \id_{(Y\otimes X)^\vee}) \circ \alpha^{-1}_{X^\vee,X,(Y\otimes X)^\vee} \circ  (\id_{X^\vee} \otimes (\lambda_{X}\otimes \id_{(Y\otimes X)^\vee})) \nonumber \\
& \qquad \circ  ( \id_{X^\vee} \otimes ((\ev_{Y} \otimes \id_{X})\otimes \id_{(Y\otimes X)^\vee})) \nonumber \\
& \qquad \circ (\id_{X^\vee} \otimes ( (\alpha^{-1}_{Y^\vee,Y,X}\otimes \id_{(Y\otimes X)^\vee}) \circ \alpha^{-1}_{Y^\vee,Y\otimes X, (Y\otimes X)^\vee} )) \nonumber \\
& \qquad \circ (\id_{X^\vee} \otimes (\id_{Y^\vee} \otimes \coev_{Y\otimes X})) \circ (\id_{X^\vee} \otimes \rho^{-1}_{Y^\vee}) \, .
\end{align}
The isomorphisms $\nu^2_{X,Y}$ and $\nu^0$ have to satisfy the coherence conditions of a monoidal functor: using repeated Zorro moves and \eqref{dualphi} one verifies that the three diagrams
\be\label{cross}
\xymatrix{%
(X^\vee\otimes Y^\vee)\otimes Z^\vee \ar[d]_{\alpha_{X^\vee,Y^\vee,Z^\vee}} \ar[rr]^-{\nu^2_{X,Y}\otimes\id_{Z^\vee}} && (Y\otimes X)^\vee\otimes Z^\vee \ar[rr]^-{\nu^2_{Y\otimes X,Y}} && (Z\otimes (Y\otimes X))^\vee\ar[d]^{(\alpha^{-1}_{Z,Y,X})^\vee} \\
X^\vee\otimes (Y^\vee\otimes Z^\vee) \ar[rr]^-{\id_{X^\vee}\otimes \nu^2_{Y,Z}} && X^\vee\otimes (Z\otimes Y)^\vee \ar[rr]^-{\nu^2_{X,Z\otimes Y}} && ((Z\otimes Y)\otimes X)^\vee \\
}%
\ee
and
\begin{equation}
\parbox[c]{\textwidth}{
\xymatrix{%
I \otimes X^\vee \ar[r]^-{\lambda_{X^\vee}} \ar[d]_{\nu^0 \otimes \id_{X^\vee}} & X^\vee \ar[d]^{(\rho_X)^\vee} \\
I^\vee \otimes X^\vee \ar[r]^-{\nu^2_{I,X}} & (X\otimes I)^\vee
}}
\quad , \quad
\parbox[c]{\textwidth}{
\xymatrix{%
X^\vee \otimes I \ar[r]^-{\rho_{X^\vee}} \ar[d]_{\id_{X^\vee} \otimes \nu^0} & X^\vee \ar[d]^{(\lambda_X)^\vee} \\
X^\vee \otimes I^\vee \ar[r]^-{\nu^2_{X,I}} & (I\otimes X)^\vee
}}
\end{equation}
commute. 

We will need the covariant monoidal functor 
$\big( (\,\cdot\,)^{\vee\vee}, \omega^2, \omega^0 \big)$ whose isomorphism data are given by
\begin{align}\label{vv-monoidal}
\omega^0 & = \big( (\nu^0)^{-1} \big)^\vee \circ \nu^0 : I \longrightarrow I^{\vee\vee} \, , \nonumber \\
\omega^2_{X,Y} & = \big( (\nu^2_{Y,X})^{-1} \big)^\vee \circ \nu^2_{X^\vee,Y^\vee} : X^{\vee\vee} \otimes Y^{\vee\vee} \longrightarrow (X \otimes Y)^{\vee\vee} \, .
\end{align}
It follows from a straightforward calculation using the Zorro moves and from definition \eqref{phi-dual-generadef} of the
action of $(\,\cdot\,)^\vee$ on morphisms that these morphisms satisfy the following equalities:
\begin{align} \label{omega-IX-XI}
  \omega_{I,Y}^2 &= (\lambda_Y^{-1})^{\vee\vee} \circ \lambda_{Y^{\vee\vee}} 
  \circ \big( (\omega^0)^{-1} \otimes \id_{Y^{\vee\vee}} \big) \, ,
\nonumber\\
  \omega_{X,I}^2 &= (\rho_X^{-1})^{\vee\vee} \circ \rho_{X^{\vee\vee}} 
  \circ \big(   \id_{X^{\vee\vee}} \otimes (\omega^0)^{-1} \big) \, .
\end{align}

Let $\mathcal M$ and $\mathcal M'$ be monoidal categories, and let $F \equiv (F,F^2,F^0)$ and $G \equiv (G,G^2,G^0)$ be covariant monoidal functors $\mathcal M \rightarrow \mathcal M'$. We recall that a \textsl{monoidal natural transformation} is a natural transformation $\eta : F \Rightarrow G$ such that
\be\label{mnt}
\parbox[c]{\textwidth}{
\xymatrix{%
F(X)\otimes' F(Y) \ar[rr]^-{F^2_{X,Y}} \ar[d]_{\eta_{X}\otimes' \eta_{Y}} && F(X\otimes Y) \ar[d]^{\eta_{X\otimes Y}} \\
G(X)\otimes' G(Y) \ar[rr]^-{G^2_{X,Y}} && G(X\otimes Y)
}}
\quad\text{and}\quad
\parbox[c]{\textwidth}{
\xymatrix{%
I \ar[d]_{F^0} \ar[dr]^{G^0} & \\
F(I) \ar[r]^{\eta_{I}} & G(I)
}}
\ee
commute. 

\begin{definition}
A \textsl{pivotal structure} on a right rigid monoidal category is a monoidal natural isomorphism $\operatorname{Id}\Rightarrow (\,\cdot\,)^{\vee\vee}$. We will call a right rigid monoidal category with pivotal structure a \textsl{pivotal category}.
\end{definition}

Let $t_X : X \rightarrow X^{\vee\vee}$ be such a pivotal structure. It is proved e.\,g.~in \cite[prop.~A.1 (journal version)]{s0309131} that $t_X$ automatically satisfies the identity
\be \label{tXvtinv}
   t_{X^\vee}^{-1} = (t_X)^\vee : X^{\vee\vee\vee} \longrightarrow X^\vee \, .
\ee
Also note that any two pivotal structures $t_X$ and $s_X$ on a given right rigid monoidal category $\mathcal M$ differ by a monoidal natural isomorphism of the identity functor, namely $s_X^{-1} \circ t_X$. In other words, if $\mathcal M$ allows for a pivotal structure, the set of all pivotal structures on $\mathcal M$ forms a torsor over the group of monoidal natural isomorphisms of the identity functor.

We will later need to compare different pivotal categories. To prepare the definition, suppose we are given two right rigid monoidal categories $\mathcal C$ and $\mathcal D$, and a monoidal functor $(F,F^2,F^0) : \mathcal C \to \mathcal D$. By remark \ref{left-rigid-all-equiv}, we can construct a natural isomorphism $\psi : F\circ(\,\cdot\,)^\vee \Rightarrow (\,\cdot\,)^\vee\circ F$. Namely,
\be
\psi_X = 
\lambda_{F(X^\vee)}\circ (\ev^F_{X} \otimes \id_{F(X)^\vee}) \circ \alpha^{-1}_{F(X^\vee),F(X),F(X)^\vee} \circ (\id_{F(X)^\vee} \otimes \coev^{\mathcal D}_{F(X)}) \circ \rho_{F(X^\vee)}^{-1}
\ee
where $\ev^F_{X}$ is given by
\be
 \ev^F_{X} = \Big( \!
 \xymatrix{%
 F(X^\vee) \otimes F(X)
 \ar[rr]^-{F^2_{X^\vee,X}} && 
 F(X^\vee \otimes X) 
 \ar[rr]^-{F(\ev^{\mathcal{C}}_X)} && 
 F(I)
 \ar[rr]^-{(F^0)^{-1}} &&
 I \Big)
 } .
\ee

\begin{definition}\label{pivotal-functor-def}
Let $\mathcal C$ and $\mathcal D$ be pivotal categories with pivotal structures $t^{\mathcal C}$ and $t^{\mathcal D}$. A monoidal functor $(F,F^2,F^0) : \mathcal C \to \mathcal D$ is called \textsl{pivotal} if
\be
 t^{\mathcal{D}}_{F(X)} = (\psi_{X}^{-1})^\vee \circ \psi_{X^\vee} \circ F(t^{\mathcal C}_X)  
  :  F(X) \longrightarrow F(X)^{\vee\vee}
\ee
for all $X \in \mathcal C$. Two pivotal categories are \textsl{pivotally equivalent} if there are pivotal monoidal functors $F : \mathcal C \to \mathcal D$ and $G: \mathcal D \to \mathcal C$, such that $F \circ G$ and $G \circ F$ are naturally isomorphic to the identity functor.
\end{definition}

\begin{remark}\label{strict-pivotal}
\begin{enumerate}
\item Let $F : \mathcal C \to \mathcal D$ be a pivotal monoidal functor between two pivotal categories which is essentially surjective. This implies that $F$ is an equivalence of monoidal categories. However, it does \textsl{not} imply that $\mathcal C$ and $\mathcal D$ are pivotally equivalent, because there may not exist an inverse functor $G : \mathcal D \to \mathcal C$ which is also pivotal.
\item We say a pivotal category is \textsl{strictly pivotal} if $(\,\cdot\,)^{\vee\vee} = \mathrm{Id}$ and $t_X = \id_X$ for all $X$. Given a pivotal category $\mathcal C$, in~\cite{JoyalStreet2} a strictly pivotal category $\mathcal C_\text{str}$ is constructed such that there is a pivotal functor $F : \mathcal C \to \mathcal C_\text{str}$, which is essentially surjective. However, $\mathcal C$ is in general not pivotally equivalent to $\mathcal C_\text{str}$.

\end{enumerate}
\end{remark}

\subsubsection{Left duals}\label{rightduals}

Let $\mathcal M$ be a right rigid monoidal category and for $X \in \mathcal M$ let $t_X : X \rightarrow X^{\vee\vee}$ be a collection of isomorphisms (which need not be natural). Using the maps $t_X$ we can 
define \textsl{left duals}, i.\,e.~``tilded'' evaluation and coevaluation maps: 
\begin{align}
\tev_{X} & = \ev_{X^\vee} \circ (t_{X}\otimes \id_{X^\vee}): X \otimes X^\vee \longrightarrow I \, , \nonumber \\
\tcoev_{X} & = ( \id_{X^\vee}\otimes t_{X}^{-1} ) \circ \coev_{X^\vee} : I \longrightarrow X^\vee \otimes X \, . 
\label{tev-tcoev-def}
\end{align}
Pictorially we present these as
\be
\tev_{X} = 
\begin{tikzpicture}[very thick,scale=1.0,color=blue!50!black, baseline=.6cm]
\draw[line width=0pt] 
(2.5,1.6) node[line width=0pt] (I) {{\small$I$}}
(3,0) node[line width=0pt] (D) {{\small $X^\vee$}}
(2,0) node[line width=0pt] (s) {\small{$X\vphantom{A^\vee}$}}; 
\draw[redirected] (D) .. controls +(0,1) and +(0,1) .. (s);
\draw[dashed] (2.5,0.81) -- (I);
\end{tikzpicture}
\equiv
\begin{tikzpicture}[very thick,scale=1.0,color=blue!50!black, baseline=.6cm]
\draw[line width=0pt] 
(3,0) node[line width=0pt] (D) {{\small $X^\vee$}}
(2,0) node[line width=0pt] (s) {\small{$X\vphantom{A^\vee}$}}; 
\draw[redirected] (D) .. controls +(0,1) and +(0,1) .. (s);
\end{tikzpicture} , 
\quad
\tcoev_{X} = 
\begin{tikzpicture}[very thick,scale=1.0,color=blue!50!black, baseline=-.6cm,rotate=180]
\draw[line width=0pt] 
(2.5,1.6) node[line width=0pt] (I) {{\small$I$}}
(3,0) node[line width=0pt] (D) {{\small$X^\vee$}}
(2,0) node[line width=0pt] (s) {{\small$X\vphantom{A^\vee}$}}; 
\draw[directed] (D) .. controls +(0,1) and +(0,1) .. (s);
\draw[dashed] (2.5,0.81) -- (I);
\end{tikzpicture}
\equiv
\begin{tikzpicture}[very thick,scale=1.0,color=blue!50!black, baseline=-.6cm,rotate=180]
\draw[line width=0pt] 
(3,0) node[line width=0pt] (D) {{\small$X^\vee$}}
(2,0) node[line width=0pt] (s) {{\small$X\vphantom{A^\vee}$}}; 
\draw[directed] (D) .. controls +(0,1) and +(0,1) .. (s);
\end{tikzpicture}
\ee
and by construction $\tev_{X}, \tcoev_{X}$ satisfy the Zorro moves
\be
\begin{tikzpicture}[very thick,scale=1.0,color=blue!50!black, baseline=0cm]
\draw[line width=0] 
(1,1.25) node[line width=0pt] (A) {{\small $X$}}
(-1,-1.25) node[line width=0pt] (A2) {{\small $X$}}; 
\draw[redirected] (0,0) .. controls +(0,1) and +(0,1) .. (-1,0);
\draw[redirected] (1,0) .. controls +(0,-1) and +(0,-1) .. (0,0);
\draw (-1,0) -- (A2); 
\draw (1,0) -- (A); 
\end{tikzpicture}
=
\begin{tikzpicture}[very thick,scale=1.0,color=blue!50!black, baseline=0cm]
\draw[line width=0] 
(0,1.25) node[line width=0pt] (A) {{\small $X$}}
(0,-1.25) node[line width=0pt] (A2) {{\small $X$}}; 
\draw (A2) -- (A); 
\end{tikzpicture}
\, , \qquad
\begin{tikzpicture}[very thick,scale=1.0,color=blue!50!black, baseline=0cm]
\draw[line width=0] 
(-1,1.25) node[line width=0pt] (A) {{\small $X^\vee$}}
(1,-1.25) node[line width=0pt] (A2) {{\small $X^\vee$}}; 
\draw[redirected] (0,0) .. controls +(0,-1) and +(0,-1) .. (-1,0);
\draw[redirected] (1,0) .. controls +(0,1) and +(0,1) .. (0,0);
\draw (-1,0) -- (A); 
\draw (1,0) -- (A2); 
\end{tikzpicture}
=
\begin{tikzpicture}[very thick,scale=1.0,color=blue!50!black, baseline=0cm]
\draw[line width=0] 
(0,1.25) node[line width=0pt] (A) {{\small $X^\vee$}}
(0,-1.25) node[line width=0pt] (A2) {{\small $X^\vee$}}; 
\draw (A2) -- (A); 
\end{tikzpicture}
\ee
since $\ev_{X}, \coev_{X}$ satisfy~\eqref{sternstern}. 

Left and right dualities defined as above satisfy the following standard identities which we will need in the next section; for the convenience of the reader, we have included a proof in appendix~\ref{left-from-right-lemma}.

\begin{lemma} \label{isos-nat-monoid}
Let $\mathcal M$ and $t_X$ be as above.
\begin{enumerate}
\item
$t$ is a natural isomorphism $\operatorname{Id}\Rightarrow (\,\cdot\,)^{\vee\vee}$ iff for all $X,Y \in \mathcal M$ and all $\varphi : X \rightarrow Y$ we have
\be\label{sovereign}
\begin{tikzpicture}[very thick,scale=1.0,color=blue!50!black, baseline]
\draw[line width=1pt] 
(2,-1) node[line width=0pt] (Y) {{\small $Y^\vee$}}
(0,1) node[line width=0pt] (X) {{\small $X^\vee$}}
(1,0) node[inner sep=2pt,draw] (f) {{\small$\varphi$}}; 
\draw[odirected] (f) .. controls +(0,1) and +(0,1) .. (2,0);
\draw[udirected] (0,0) .. controls +(0,-1) and +(0,-1) .. (f);
\draw (Y) -- (2,0);
\draw (X) -- (0,0);
\end{tikzpicture}
=
\begin{tikzpicture}[very thick,scale=1.0,color=blue!50!black, baseline]
\draw[line width=1pt] 
(0,-1) node[line width=0pt] (Y) {{\small $Y^\vee$}}
(2,1) node[line width=0pt] (X) {{\small $X^\vee$}}
(1,0) node[inner sep=2pt,draw] (f) {{\small$\varphi$}}; 
\draw[odirected] (f) .. controls +(0,1) and +(0,1) .. (0,0);
\draw[udirected] (2,0) .. controls +(0,-1) and +(0,-1) .. (f);
\draw (Y) -- (0,0);
\draw (X) -- (2,0);
\end{tikzpicture}
\, .
\ee
\item 
If~$t$ is a natural isomorphism, then~$t$ is monoidal iff for all $X,Y \in \mathcal M$ one has
\be\label{bends}
\begin{tikzpicture}[very thick,scale=0.8,color=blue!50!black, baseline]
\draw[line width=1pt] 
(-2,-1) node[line width=0pt] (X) {{\small $Y^\vee$}}
(-1,-1) node[line width=0pt] (Y) {{\small $X^\vee$}}
(2,2) node[line width=0pt] (XY) {{\small $(X\otimes Y)^\vee$}}; 
\draw[directed] (0,0) .. controls +(0,1) and +(0,1) .. (-1,0);
\draw[directed] (1,0) .. controls +(0,2) and +(0,2) .. (-2,0);
\draw[directed] (2,0) .. controls +(0,-1) and +(0,-1) .. (0.5,0);
\draw (-1,0) -- (Y);
\draw (-2,0) -- (X);
\draw[dotted] (0,0) -- (1,0);
\draw (2,0) -- (XY);
\end{tikzpicture}
=
\begin{tikzpicture}[very thick,scale=0.8,color=blue!50!black, baseline]
\draw[line width=1pt] 
(2,-1) node[line width=0pt] (Y) {{\small $Y^\vee$}}
(3,-1) node[line width=0pt] (X) {{\small $X^\vee$}}
(-1,2) node[line width=0pt] (XY) {{\small $(X\otimes Y)^\vee$}}; 
\draw[directed] (1,0) .. controls +(0,1) and +(0,1) .. (2,0);
\draw[directed] (0,0) .. controls +(0,2) and +(0,2) .. (3,0);
\draw[directed] (-1,0) .. controls +(0,-1) and +(0,-1) .. (0.5,0);
\draw (2,0) -- (Y);
\draw (3,0) -- (X);
\draw[dotted] (0,0) -- (1,0);
\draw (-1,0) -- (XY);
\end{tikzpicture}
\, .
\ee
\end{enumerate}
\end{lemma}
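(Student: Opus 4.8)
The two statements are the classical characterisations of a sovereign (pivotal) structure, and I would prove them by pure graphical calculus with the right-duality data $\ev_X,\coev_X$, using the identities already at our disposal: the Zorro moves~\eqref{sternstern}, lemma~\ref{ev-and-f-dual} together with its coevaluation counterpart~\eqref{stern}, and the definition~\eqref{phi-dual-generadef} of $\varphi^\vee$. The starting point is to read the two sides of~\eqref{sovereign} correctly: under the convention that a cup or cap drawn in the mirrored orientation denotes the left-duality map $\tcoev$ resp.\ $\tev$ of~\eqref{tev-tcoev-def}, the left-hand side of~\eqref{sovereign} is exactly the left dual ${}^\vee\varphi:Y^\vee\to X^\vee$ assembled from $\tcoev_X$ and $\tev_Y$, whereas the right-hand side is the right dual $\varphi^\vee$.

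For the forward implication of part~(i) I would substitute the definitions $\tcoev_X=(\id_{X^\vee}\otimes t_X^{-1})\circ\coev_{X^\vee}$ and $\tev_Y=\ev_{Y^\vee}\circ(t_Y\otimes\id_{Y^\vee})$ into ${}^\vee\varphi$. This exhibits the composite $t_Y\circ\varphi\circ t_X^{-1}:X^{\vee\vee}\to Y^{\vee\vee}$ on the through-strand, flanked by $\coev_{X^\vee}$ at the bottom and $\ev_{Y^\vee}$ at the top. Naturality of $t$ is precisely the statement $t_Y\circ\varphi\circ t_X^{-1}=\varphi^{\vee\vee}$; applying lemma~\ref{ev-and-f-dual} to $\varphi^\vee$, so that $\ev_{Y^\vee}\circ(\varphi^{\vee\vee}\otimes\id_{Y^\vee})=\ev_{X^\vee}\circ(\id_{X^{\vee\vee}}\otimes\varphi^\vee)$, brings $\coev_{X^\vee}$ and $\ev_{X^\vee}$ into a single Zorro configuration for the object $X^\vee$, which~\eqref{sternstern} collapses to leave exactly $\varphi^\vee$; this is~\eqref{sovereign}. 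For the converse I would run the same chain backwards: the contraction $g\mapsto\rho_{X^\vee}\circ(\id_{X^\vee}\otimes\ev_{Y^\vee})\circ\cdots\circ(\coev_{X^\vee}\otimes\id_{Y^\vee})\circ\lambda_{Y^\vee}^{-1}$ from $\Hom(X^{\vee\vee},Y^{\vee\vee})$ to $\Hom(Y^\vee,X^\vee)$ is a bijection by the Zorro moves, so the equality of the two contractions (of $t_Y\circ\varphi\circ t_X^{-1}$ and of $\varphi^{\vee\vee}$) forces $t_Y\circ\varphi\circ t_X^{-1}=\varphi^{\vee\vee}$, i.e.\ naturality of $t$.

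For part~(ii), assuming $t$ natural, I would first identify via~\eqref{gamma} the left-hand side of~\eqref{bends} with the right-duality gluing isomorphism $\nu^2_{Y,X}$ and, using the same mirroring convention as above, its right-hand side with the left-duality gluing isomorphism $\widetilde\nu^2_{X,Y}$ obtained by replacing every $\ev,\coev$ in~\eqref{gamma} by $\tev,\tcoev$. Substituting~\eqref{tev-tcoev-def}, and simplifying with part~(i), with~\eqref{tXvtinv}, and with the coherence pentagon~\eqref{cross} for $\nu^2$, I would rewrite $\widetilde\nu^2_{X,Y}$ purely in terms of the $\nu^2_{\bullet,\bullet}$ and of $t_X,t_Y,t_{X\otimes Y}$. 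Comparing with the definition~\eqref{vv-monoidal}, namely $\omega^2_{X,Y}=((\nu^2_{Y,X})^{-1})^\vee\circ\nu^2_{X^\vee,Y^\vee}$, and using invertibility of the $\nu^2$'s and the $t$'s, the identity~\eqref{bends} then rearranges into $t_{X\otimes Y}=\omega^2_{X,Y}\circ(t_X\otimes t_Y)$, which is exactly the nontrivial square in~\eqref{mnt} for $\eta=t$, $F=\operatorname{Id}$, $G=(\,\cdot\,)^{\vee\vee}$. Finally the unit compatibility $t_I=\omega^0$ is recovered for free by specialising this square to $X=I$ and invoking~\eqref{omega-IX-XI} (alternatively by a direct computation in the spirit of lemma~\ref{evlr}), so that~\eqref{bends} is equivalent to full monoidality of $t$.

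The difficulty is bookkeeping rather than conceptual. In both parts one must keep the suppressed associators $\alpha$ and unit isomorphisms $\lambda,\rho$ under control while sliding $t$ past $\varphi$ and while matching the tilded gluing morphism against $\omega^2$. I expect the genuine obstacle to be the rewriting of $\widetilde\nu^2_{X,Y}$ in terms of $\nu^2$ and $t$ in part~(ii): this is where~\eqref{cross} and several Zorro moves intervene, and where care is needed that the instance of $\varphi^\vee$ produced by part~(i) acts on the correct tensor factor of $(X\otimes Y)^\vee$.
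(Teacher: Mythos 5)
Part (i) of your proposal is correct and essentially coincides with the paper's own argument: both transport the naturality square $\varphi^{\vee\vee}\circ t_X=t_Y\circ\varphi$ through the bending bijection $\Hom(X^{\vee\vee},Y^{\vee\vee})\cong\Hom(Y^\vee,X^\vee)$ supplied by $\coev_{X^\vee}$, $\ev_{Y^\vee}$ and the Zorro moves, after inserting the definitions \eqref{tev-tcoev-def}. Your skeleton for part (ii) also parallels the proof in appendix~\ref{left-from-right-lemma} --- read \eqref{bends} as equality of the right-duality gluing $\nu^2_{Y,X}$ with its left-duality counterpart, and show this to be equivalent to the square $t_{X\otimes Y}=\omega^2_{X,Y}\circ(t_X\otimes t_Y)$ --- and your observation that the unit constraint $t_I=\omega^0$ then follows from this square at $X=I$ together with \eqref{omega-IX-XI} and naturality is a genuine improvement, since the paper never addresses the unit axiom of \eqref{mnt}. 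But one of the tools you invoke in part (ii) is not available: the identity \eqref{tXvtinv}.

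The identity \eqref{tXvtinv}, $t_{X^\vee}^{-1}=(t_X)^\vee$, is proved in \cite{s0309131} only for \emph{pivotal} structures, i.e.\ for $t$ natural \emph{and} monoidal, and monoidality is exactly what part (ii) is supposed to characterize. Naturality alone does not imply it: for finite-dimensional vector spaces take $t_V=c\,\delta_V$ with $\delta$ the canonical isomorphism $V\to V^{**}$ and $c^2\neq 1$; this is a natural isomorphism, yet $t_{V^\vee}^{-1}=c^{-1}\,\delta_{V^\vee}^{-1}$ whereas $(t_V)^\vee=c\,\delta_{V^\vee}^{-1}$. Hence a rewriting of the tilded gluing morphism that relies on \eqref{tXvtinv} is justified only once $t$ is already known to be pivotal: as proposed, your argument yields ``monoidal $\Rightarrow$ \eqref{bends}'' but is circular in the direction ``\eqref{bends} $\Rightarrow$ monoidal''. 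The gap can be closed because \eqref{tXvtinv} is in fact unnecessary: after substituting \eqref{tev-tcoev-def}, the right-hand side of \eqref{bends} is the bending of $(t_X\otimes t_Y)\circ t^{-1}_{X\otimes Y}$; using only \eqref{stern} with $\varphi=(\nu^2_{Y,X})^{-1}$, the Zorro-move identity $\ev_{X^\vee}\circ(\id\otimes\ev_{Y^\vee}\otimes\id)=\ev_{Y^\vee\otimes X^\vee}\circ(\nu^2_{X^\vee,Y^\vee}\otimes\id)$, and bijectivity of $(\,\cdot\,)^\vee$ on morphism spaces, it equals $\nu^2_{Y,X}\circ\psi$ for the unique $\psi$ with $\psi^\vee=\nu^2_{X^\vee,Y^\vee}\circ(t_X\otimes t_Y)\circ t^{-1}_{X\otimes Y}\circ((\nu^2_{Y,X})^{-1})^\vee$, so that \eqref{bends} holds iff $\psi=\id$ iff the square holds --- with no appeal to \eqref{tXvtinv}. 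This is what the paper's purely graphical manipulation accomplishes implicitly; with this replacement the rest of your part (ii) goes through.
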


\subsubsection{Pivotal structure for matrix factorisations}

After the general discussion, we now turn to $\MFW$. For any $X\in\MFW$, we can use bimodule maps $\delta_{X_{i}}$ from~\eqref{Delta} to obtain the following isomorphisms in $\MFW$, which we also denote by $\delta$,
\be\label{delta-for-MFbi}
\delta_{X} = \begin{pmatrix}\delta_{X_{0}}&0\\0&\delta_{X_{1}}\end{pmatrix} : X \longrightarrow X^{\vee\vee} \, .
\ee

\begin{theorem}\label{thm-pivotal}
$\delta:\operatorname{Id} \Rightarrow (\,\cdot\,)^{\vee\vee}$ endows $\MFW$ with a pivotal structure. 
\end{theorem}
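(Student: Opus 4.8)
The plan is to establish the three defining properties of a pivotal structure for $\delta$: that each $\delta_X$ is an isomorphism in $\MFW$, that $\{\delta_X\}$ is natural as a transformation $\operatorname{Id}\Rightarrow(\,\cdot\,)^{\vee\vee}$, and that it is monoidal. The first two reduce to elementary properties of the double dual on free bimodules; monoidality, which I would extract from the graphical criterion \eqref{bends} of lemma~\ref{isos-nat-monoid}, is the genuinely technical point.

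I would first check that $\delta_X$ is $d$-closed. Since $X^{\vee\vee}=(X_0^{\vee\vee},X_1^{\vee\vee},(d_0^X)^{\vee\vee},(d_1^X)^{\vee\vee})$ by \eqref{Xdual}, the condition that $\delta_X$ of \eqref{delta-for-MFbi} commute with the differentials reads $\delta_{X_1}\circ d_0^X=(d_0^X)^{\vee\vee}\circ\delta_{X_0}$ and $\delta_{X_0}\circ d_1^X=(d_1^X)^{\vee\vee}\circ\delta_{X_1}$; these are the naturality squares of the canonical bimodule map $\delta$ of \eqref{Delta} evaluated on $d_0^X$ and $d_1^X$, and they hold for any free bimodules. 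For finite-rank $X$ each $\delta_{X_i}$ is in addition a bimodule isomorphism, so $\delta_X$ is already invertible in $\DGW$; for general $X\in\MFW$, picking an isomorphism $\vartheta:X\to Z$ to a finite-rank $Z$ and invoking naturality gives $\delta_Z\circ\vartheta=\vartheta^{\vee\vee}\circ\delta_X$, so $\delta_X$ is an isomorphism in $\MFW$ too. Naturality of the family then follows componentwise: with the explicit formula \eqref{dual-morphism-phi} for $\varphi^\vee$ — which by the discussion around \eqref{dualphi} coincides with the abstract dual \eqref{phi-dual-generadef} — one has $\varphi^{\vee\vee}=(\begin{smallmatrix}\varphi_0^{\vee\vee}&0\\0&\varphi_1^{\vee\vee}\end{smallmatrix})$, so $\delta_Y\circ\varphi=\varphi^{\vee\vee}\circ\delta_X$ splits into $\delta_{Y_i}\circ\varphi_i=\varphi_i^{\vee\vee}\circ\delta_{X_i}$, once more the naturality of the bimodule double dual. (Equivalently one checks \eqref{sovereign}, where by remark~\ref{ony-check-f0=g0} matching one graded component suffices.)

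The remaining and main obstacle is monoidality. As $\delta$ is natural, lemma~\ref{isos-nat-monoid}(ii) reduces this to the single identity \eqref{bends}: its left-hand side is the coherence isomorphism $\nu^2_{Y,X}$ of \eqref{gamma}, assembled from the right-duality maps $\ev$ and $\coev$, while its right-hand side is built from the left-duality maps determined by $t=\delta$ via \eqref{tev-tcoev-def}. I would verify \eqref{bends} by direct computation, inserting the tensor product \eqref{XtensorY}, the explicit maps $\ev_X$ and $\coev_X$ of \eqref{evX} and \eqref{coevXf}, and the componentwise $\delta$ of \eqref{delta-for-MFbi}. The labour is to write both sides as block matrices of bimodule maps and match them, tracking the contributions of $A_X$, $B_X$, $C_X$ and of the contour map $\mathcal F$ through the $\Z_2$-graded tensor product; by remark~\ref{ony-check-f0=g0} one may compare a single graded entry at each step. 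I expect the resulting bimodule identities to follow from naturality of the double-dual map $\delta$ of \eqref{Delta} and its interplay with the pairings $e_M$ and $c_M$, the only delicate point being the signs introduced by the grading. Owing to its length, this verification is naturally relegated to an appendix.
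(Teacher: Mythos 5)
Your preliminary steps are sound and coincide with the paper's: $d$-closedness and invertibility of $\delta_X$, and naturality of the family, all reduce componentwise to elementary properties of the bimodule-level map \eqref{Delta}, and the reduction of monoidality to the single identity \eqref{bends} via lemma~\ref{isos-nat-monoid}(ii) is exactly the paper's route as well.

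The gap is in how you propose to verify \eqref{bends}. You plan to write both sides as block matrices of bimodule maps and ``match them'' entry by entry, expecting the resulting identities to follow from naturality of $\delta$ and its interplay with $e_M$ and $c_M$, with only signs being delicate. This cannot succeed, because the two sides of \eqref{bends} are \emph{not} equal in $\DGW$: carrying out the computation (as the paper does), the even part of the left-hand side $L=\nu^2_{X,Y}$ has its unique non-vanishing entry mapping $(X^\vee)_0\otimes(Y^\vee)_0$ into $(Y_1\otimes X_0)^\vee$, whereas the even part of the right-hand side $R$ has its unique non-vanishing entry mapping $(X^\vee)_0\otimes(Y^\vee)_0$ into $(Y_0\otimes X_1)^\vee$ --- the nonzero entries sit in \emph{different} blocks, so no entry-wise matching is possible. (This also invalidates your appeal to remark~\ref{ony-check-f0=g0}, which presupposes equality of $d$-closed morphisms in $\DGW$.) The identity \eqref{bends} holds only up to homotopy, i.e.~in $\MFW$; note the contrast with the Zorro moves of theorem~\ref{MFleftdual}, which do hold strictly in $\DGW$ and may be what misled you. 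The missing idea is therefore a homotopy-theoretic device: the paper invokes the criterion (going back to Eisenbud) that two morphisms of matrix (bi-)factorisations agree in the homotopy category if their even components induce the same map $\operatorname{coker}(d_1)\rightarrow\operatorname{coker}(d_1)$; it then replaces $R$ by a cokernel-equivalent morphism $R'$ --- using that the contour-integral map sends $x^k$ to a polynomial in $W(y)$, and that left and right multiplication by $W$ agree after projecting to the cokernel --- and finally exhibits an explicit odd map $h$ realising $L-R'$ as null-homotopic on cokernels. Without this (or an equivalent construction of a homotopy), your direct computation terminates with two visibly different matrices and the proof stalls precisely at its main point.
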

\begin{proof}
We first show that~$\delta$ really is a natural transformation. This means that for any $\varphi\in\Hom_{\MFW}(X,Y)$ we must have $\varphi^{\vee\vee}\circ \delta_{X}=\delta_{Y}\circ \varphi$. Writing out this condition as a matrix equation, it immediately follows from the corresponding identity on the level of bimodule maps. Indeed, for $m\in X_{i}$ and $w\in Y_{i}^\vee$ we have
\begin{align}
((\varphi_{i}^{\vee\vee}\circ\delta_{X_{i}})(m))(w) 
& = (\delta_{X_{i}}(m))(\varphi_{i}^\vee(w)) = \sigma_{R,R}((\varphi_{i}^\vee(w))(m))
\nonumber  \\
& = \sigma_{R,R}(w(\varphi_{i}(m))) = (\delta_{Y_{i}}(\varphi_{i}(m)))(w) 
\nonumber  \\
& = ((\delta_{Y_{i}}\circ \varphi)(m))(w) \, . 
\end{align}

Now we shall prove that~$\delta$ is also monoidal. By lemma~\ref{isos-nat-monoid}(ii), doing so is equivalent to establishing that
\be
L :=
\begin{tikzpicture}[very thick,scale=0.8,color=blue!50!black, baseline]
\draw[line width=1pt] 
(-2,-1) node[line width=0pt] (X) {{\small $X^\vee$}}
(-1,-1) node[line width=0pt] (Y) {{\small $Y^\vee$}}
(2,2) node[line width=0pt] (XY) {{\small $(Y\otimes X)^\vee$}}; 
\draw[directed] (0,0) .. controls +(0,1) and +(0,1) .. (-1,0);
\draw[directed] (1,0) .. controls +(0,2) and +(0,2) .. (-2,0);
\draw[directed] (2,0) .. controls +(0,-1) and +(0,-1) .. (0.5,0);
\draw (-1,0) -- (Y);
\draw (-2,0) -- (X);
\draw[dotted] (0,0) -- (1,0);
\draw (2,0) -- (XY);
\end{tikzpicture}
=
\begin{tikzpicture}[very thick,scale=0.8,color=blue!50!black, baseline]
\draw[line width=1pt] 
(2,-1) node[line width=0pt] (Y) {{\small $X^\vee$}}
(3,-1) node[line width=0pt] (X) {{\small $Y^\vee$}}
(-1,2) node[line width=0pt] (XY) {{\small $(Y\otimes X)^\vee$}}; 
\draw[directed] (1,0) .. controls +(0,1) and +(0,1) .. (2,0);
\draw[directed] (0,0) .. controls +(0,2) and +(0,2) .. (3,0);
\draw[directed] (-1,0) .. controls +(0,-1) and +(0,-1) .. (0.5,0);
\draw (2,0) -- (Y);
\draw (3,0) -- (X);
\draw[dotted] (0,0) -- (1,0);
\draw (-1,0) -- (XY);
\end{tikzpicture}
=: R
\ee
for all $X,Y\in\MFW$. Note that $L=\nu_{X,Y}^2$ as in~\eqref{gamma}. 

Written in terms of their matrix representatives, $L$ and $R$ are of the form
\be
\bordermatrix{
   & \text{{\scriptsize {$(X^\vee)_{0} \otimes (Y^\vee)_{0}$}}} & \text{{\scriptsize {$(X^\vee)_{1} \otimes (Y^\vee)_{1}$}}} & \text{{\scriptsize {$(X^\vee)_{1} \otimes (Y^\vee)_{0}$}}} & \text{{\scriptsize {$(X^\vee)_{0} \otimes (Y^\vee)_{1}$}}} \cr
\text{{\scriptsize {$(Y_{1} \otimes X_{0})^\vee$}}}   & * & * & 0 & 0 \cr
\text{{\scriptsize {$(Y_{0} \otimes X_{1})^\vee$}}}   & * & * & 0 & 0 \cr
\text{{\scriptsize {$(Y_{0} \otimes X_{0})^\vee$}}}   & 0 & 0 & * & * \cr
\text{{\scriptsize {$(Y_{1} \otimes X_{1})^\vee$}}}   & 0 & 0 & * & *
}
\label{L-long-expression}
\ee
where the column and row labels indicate the entries' source and target, respectively, as dictated by our convention~\eqref{XtensorY} for tensor products (here and below, ``$\otimes$'', when applied to $R$-bimodules, stands for ``$\otimes_R$''). Recall also from \eqref{Xdual} that $(X^\vee)_0 = X_1^\vee$, etc.
A straightforward but tedious calculation (aided by the fact that certain entries of $\lambda,\rho$ and $\ev,\tev$ are zero) shows that both~$L$ and~$R$ have only one non-vanishing entry in each $\Z_{2}$-degree, e.\,g.~$L_{0}=(\begin{smallmatrix}l_{11}&0\\0&0\end{smallmatrix})$ with
\begin{align}
l_{11} & = \Big( \Big\{ \mu \circ A_{X} \circ  
\left( \id_{X^\vee_{1}} \otimes (\mu \circ A_{Y}) \otimes \id_{X_{0}}\right) \!\Big\} \otimes \id_{(Y_{1}\otimes X_{0})^\vee} \Big) \nonumber \\
& \qquad \circ \Bigg( \id_{X^\vee_{1}} \otimes \id_{Y^\vee_{1}} \otimes \left\{ \left[\tfrac{\check d^{Y\otimes X}_{1}(a,x) - \check d^{Y\otimes X}_{1}(b,x)}{a-b}\tc \id_R \tc \id_{(\check Y_1 \otimes R \otimes \check X_0)^*} \right]^{\!\wedge} \right\} \nonumber \\
&  \qquad \qquad \circ \pi^{1010} \circ c_{Y\otimes X} \Bigg) \circ \left(\id_{X^\vee_{1}} \otimes [\id_{\check Y^*_{1}} \tc 1]\hat~ \right)
\, ,
\end{align}
where $\pi^{1010}$ projects to $Y_{1}\otimes X_{0}\otimes (Y_{1}\otimes X_{0})^\vee$, and similarly for $R_{0}=(\begin{smallmatrix}0&0\\r_{21}&0\end{smallmatrix})$. These expressions can be simplified (in a way very similar to~\eqref{puh}) and one finds
\be
l_{11} = f \circ \left(d_{0}^{X^\vee} \otimes \id_{Y^\vee_{1}} \right) \, , \quad
r_{21} = g \circ \left( \id_{X^\vee_{1}} \otimes d_{0}^{Y^\vee} \right)
\ee
with maps 
\begin{align}
f & : X^\vee_{0} \otimes Y^\vee_{1} \longrightarrow R \otimes \check X^*_{0} \otimes R^* \otimes \check Y^*_{1} \otimes R \subset \left( (Y\otimes X)^\vee \right)_{0} \, , \\
g & : X^\vee_{1} \otimes Y^\vee_{0} \longrightarrow R \otimes \check X^*_{1} \otimes R^* \otimes \check Y^*_{0} \otimes R \subset \left( (Y\otimes X)^\vee \right)_{0} 
\end{align}
given by
\be
f = 
\begin{tikzpicture}[very thick,scale=0.8,color=blue!50!black, baseline=(F.base),line/.style={&gt;=latex}]
\draw[line width=1pt] 
(1,0) node[inner sep=2pt] (Rl) {{\small$R$}} 
(2,0) node[inner sep=2pt] (X) {{\small$\check X^*_{0}$}}
(3,0) node[inner sep=2pt] (Rm) {{\small$R$}} 
(4,0) node[inner sep=2pt] (Y) {{\small$\check Y^*_{1}$}}
(5,0) node[inner sep=2pt] (Rr) {{\small$R$}} 

(2.5, 2.5) node[inner sep=2pt,draw] (F) {{\small$F$}} 

(1,5) node[inner sep=2pt] (Rlu) {{\small$R$}} 
(2,5) node[inner sep=2pt] (Xu) {{\small$\check X^*_{0}$}}
(3,5) node[inner sep=2pt] (Rmu) {{\small$R^*$}} 
(4,5) node[inner sep=2pt] (Yu) {{\small$\check Y^*_{1}$}}
(5,5) node[inner sep=2pt] (Rru) {{\small$R$}};

\draw[dotted] (Rl) -- (Rlu); 
\draw[dotted] (Rr) -- (Rru); 
\draw (Y) -- (Yu); 
\draw (X) -- (Xu); 

\draw[dotted] (Rmu) -- (3, 2.4); 
\draw[dotted] (3, 2.4) .. controls +(0,-1) and +(0,-1) .. (F);

\draw[dotted] (Rm) .. controls +(0,1) and +(0,-1) .. (F);
\draw[dotted] (F) .. controls +(0,1) and +(0,-1) .. (1,4);
\end{tikzpicture} 
\, , \quad
g = 
\begin{tikzpicture}[very thick,scale=0.8,color=blue!50!black, baseline=(F.base),line/.style={&gt;=latex}]
\draw[line width=1pt] 
(1,0) node[inner sep=2pt] (Rl) {{\small$R$}} 
(2,0) node[inner sep=2pt] (X) {{\small$\check X^*_{1}$}}
(3,0) node[inner sep=2pt] (Rm) {{\small$R$}} 
(4,0) node[inner sep=2pt] (Y) {{\small$\check Y^*_{0}$}}
(5,0) node[inner sep=2pt] (Rr) {{\small$R$}} 

(3.5, 2.5) node[inner sep=2pt,draw] (F) {{\small$F$}} 

(1,5) node[inner sep=2pt] (Rlu) {{\small$R$}} 
(2,5) node[inner sep=2pt] (Xu) {{\small$\check X^*_{1}$}}
(3,5) node[inner sep=2pt] (Rmu) {{\small$R^*$}} 
(4,5) node[inner sep=2pt] (Yu) {{\small$\check Y^*_{0}$}}
(5,5) node[inner sep=2pt] (Rru) {{\small$R$}};

\draw[dotted] (Rl) -- (Rlu); 
\draw[dotted] (Rr) -- (Rru); 
\draw (Y) -- (Yu); 
\draw (X) -- (Xu); 

\draw[dotted] (Rmu) -- (3, 2.4); 
\draw[dotted] (3, 2.4) .. controls +(0,-1) and +(0,-1) .. (F);

\draw[dotted] (Rm) .. controls +(0,1) and +(0,-1) .. (F);
\draw[dotted] (F) .. controls +(0,1) and +(0,-1) .. (5,4);
\end{tikzpicture} 
\ee
where $F: \C[x] \to \C[y]$, $x^k \mapsto \frac{1}{2\pi\I} \oint \frac{x^k \D x}{W(x)-W(y)}$ for appropriate variables~$x$ and~$y$, and we implicitly insert the multiplication~$\mu$ where the dotted $R$-lines meet. 
(The map $F$ arises from $\mathcal F$ in \eqref{Fcal} after composing with $\mu$ in \eqref{L-long-expression}, which cancels the $a-b$ part.)

We now claim that $L-R$ is null-homotopic, so that $L=R$ in $\MFW$. To show this we will use the fact that two morphisms $\phi,\psi:Z\rightarrow Z'$ in $\MFW$ are the same if (but not only if)
$\phi_{0}$ and $\psi_{0}$ induce the same maps $\operatorname{coker}(d^Z_{1})\rightarrow \operatorname{coker}(d^{Z'}_{1})$, see e.\,g.~\cite{eisen1980}.

As a first step we note that
\be
R'_{0} = \begin{pmatrix}
0 & 0 \\
g'  \circ ( \id_{X^\vee_{1}} \otimes d_{0}^{Y^\vee} ) & 0 
\end{pmatrix}
\, , \quad 
g' = 
\begin{tikzpicture}[very thick,scale=0.8,color=blue!50!black, baseline=(F.base),line/.style={&gt;=latex}]
\draw[line width=1pt] 
(1,0) node[inner sep=2pt] (Rl) {{\small$R$}} 
(2,0) node[inner sep=2pt] (Y) {{\small$\check X^*_{1}$}}
(3,0) node[inner sep=2pt] (Rm) {{\small$R$}} 
(4,0) node[inner sep=2pt] (X) {{\small$\check Y^*_{0}$}}
(5,0) node[inner sep=2pt] (Rr) {{\small$R$}} 

(2.5, 2.5) node[inner sep=2pt,draw] (F) {{\small$F$}} 

(1,5) node[inner sep=2pt] (Rlu) {{\small$R$}} 
(2,5) node[inner sep=2pt] (Yu) {{\small$\check X^*_{1}$}}
(3,5) node[inner sep=2pt] (Rmu) {{\small$R^*$}} 
(4,5) node[inner sep=2pt] (Xu) {{\small$\check Y^*_{0}$}}
(5,5) node[inner sep=2pt] (Rru) {{\small$R$}};

\draw[dotted] (Rl) -- (Rlu); 
\draw[dotted] (Rr) -- (Rru); 
\draw (X) -- (Xu); 
\draw (Y) -- (Yu); 

\draw[dotted] (Rmu) -- (3, 2.4); 
\draw[dotted] (3, 2.4) .. controls +(0,-1) and +(0,-1) .. (F);

\draw[dotted] (Rm) .. controls +(0,1) and +(0,-1) .. (F);
\draw[dotted] (F) .. controls +(0,1) and +(0,-1) .. (1,4);
\end{tikzpicture} 
\, ,
\ee
induces the same map on cokernels as $R_{0}$. This follows from the identity $\pi\circ f=\pi\circ g$ where~$\pi$ denotes the projection to the cokernel. This last identity, in turn, follows since $F(x^k) = c_0 \cdot 1 + c_1 \cdot W(y) + c_2 \cdot W(y)^2 + \ldots$ for some $k$-dependent numbers $c_i \in \C$, and after composing with $\pi$ the left action of $W$ on a bimodule equals the right action of $W$. As a result there is a morphism~$R'$ with $R=R'$ in $\MFW$ whose even component is $R'_{0}$. 

Secondly, consider the odd map
\be
h = 
\bordermatrix{
   & \text{{\scriptsize {$(X^\vee)_{0} \otimes (Y^\vee)_{0}$}}} & \text{{\scriptsize {$(X^\vee)_{1} \otimes (Y^\vee)_{1}$}}} & \text{{\scriptsize {$(X^\vee)_{1} \otimes (Y^\vee)_{0}$}}} & \text{{\scriptsize {$(X^\vee)_{0} \otimes (Y^\vee)_{1}$}}} \cr
\text{{\scriptsize {$(Y_{1} \otimes X_{0})^\vee$}}}   & 0 & 0& f & 0 \cr
\text{{\scriptsize {$(Y_{0} \otimes X_{1})^\vee$}}}   & 0 & 0 & 0 & -g' \cr
\text{{\scriptsize {$(Y_{0} \otimes X_{0})^\vee$}}}   & 0 & e & 0 & 0 \cr
\text{{\scriptsize {$(Y_{1} \otimes X_{1})^\vee$}}}   & 0 & 0 & 0 & 0
}
\label{L-long-expression}
\ee
with
\be
e = 
\begin{tikzpicture}[very thick,scale=0.8,color=blue!50!black, baseline=(F.base),line/.style={&gt;=latex}]
\draw[line width=1pt] 
(1,0) node[inner sep=2pt] (Rl) {{\small$R$}} 
(2,0) node[inner sep=2pt] (Y) {{\small$\check X^*_{0}$}}
(3,0) node[inner sep=2pt] (Rm) {{\small$R$}} 
(4,0) node[inner sep=2pt] (X) {{\small$\check Y^*_{0}$}}
(5,0) node[inner sep=2pt] (Rr) {{\small$R$}} 

(2.5, 2.5) node[inner sep=2pt,draw] (F) {{\small$F$}} 

(1,5) node[inner sep=2pt] (Rlu) {{\small$R$}} 
(2,5) node[inner sep=2pt] (Yu) {{\small$\check X^*_{0}$}}
(3,5) node[inner sep=2pt] (Rmu) {{\small$R^*$}} 
(4,5) node[inner sep=2pt] (Xu) {{\small$\check Y^*_{0}$}}
(5,5) node[inner sep=2pt] (Rru) {{\small$R$}};

\draw[dotted] (Rl) -- (Rlu); 
\draw[dotted] (Rr) -- (Rru); 
\draw (X) -- (Xu); 
\draw (Y) -- (Yu); 

\draw[dotted] (Rmu) -- (3, 2.4); 
\draw[dotted] (3, 2.4) .. controls +(0,-1) and +(0,-1) .. (F);

\draw[dotted] (Rm) .. controls +(0,1) and +(0,-1) .. (F);
\draw[dotted] (F) .. controls +(0,1) and +(0,-1) .. (1,4);
\end{tikzpicture}  
\, .
\ee
This provides a homotopy between $L$ and $R'$. Indeed, one can directly verify that $L-R'$ and $d^{(Y\otimes X)^\vee} \circ h + h \circ d^{X^\vee\otimes Y^\vee}$ induce the same map $\operatorname{coker}(d^{X^\vee\otimes Y^\vee}) \rightarrow \operatorname{coker}(d^{(Y\otimes X)^\vee})$. 
\end{proof}

Setting $t=\delta$ in subsection \ref{rightduals}, it follows that matrix bi-factorisations also have left duals. There is an analogous result to lemma~\ref{evlr} (and also a result analogous to remark \ref{manyduals}(iii)): 

\begin{lemma}\label{tevlr}
We have $\tev_{I}=\lambda_{I}=\rho_{I}$ and $\tcoev_{I}=\lambda^{-1}_{I}=\rho^{-1}_{I}$ in $\MFW$. 
\end{lemma}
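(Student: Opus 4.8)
The plan is to reduce the entire statement to the single fact that the pivotal structure evaluated on the unit, $\delta_I = t_I$, is the identity morphism in $\MFW$. Once $\delta_I = \id_I$ is known, the lemma is immediate: in the one-variable case we have $I^\vee = I$, so the defining equations~\eqref{tev-tcoev-def} collapse to
\[
\tev_I = \ev_{I^\vee} \circ (\delta_I \otimes \id_{I^\vee}) = \ev_I \circ \id_{I\otimes I} = \ev_I \, , \quad
\tcoev_I = (\id_{I^\vee} \otimes \delta_I^{-1}) \circ \coev_{I^\vee} = \id_{I\otimes I}\circ \coev_I = \coev_I \, ,
\]
and the right-hand sides equal $\lambda_I = \rho_I$ and $\lambda_I^{-1} = \rho_I^{-1}$ respectively by lemma~\ref{evlr}. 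So the only real work is to verify $\delta_I = \id_I$.

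For this step I would argue through the monoidal structure rather than through an explicit matrix computation. By theorem~\ref{thm-pivotal}, $\delta$ is a monoidal natural isomorphism $\operatorname{Id}\Rightarrow(\,\cdot\,)^{\vee\vee}$. Applying the unit compatibility square in~\eqref{mnt} with $F=\operatorname{Id}$ (strictly monoidal, so $F^0=\id_I$) and $G=(\,\cdot\,)^{\vee\vee}$ (with $G^0=\omega^0$) then forces $\delta_I = \omega^0$. It therefore suffices to show $\omega^0=\id_I$. Using the formula $\omega^0 = \big((\nu^0)^{-1}\big)^\vee\circ\nu^0$ from~\eqref{vv-monoidal}, I would compute the structure map $\nu^0 = \lambda_{I^\vee}\circ\coev_I$: since $I^\vee=I$ and $\coev_I=\lambda_I^{-1}$ by lemma~\ref{evlr}, this is $\nu^0=\lambda_I\circ\lambda_I^{-1}=\id_I$. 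As $(\,\cdot\,)^\vee$ is a functor it sends $\id_I$ to $\id_{I^\vee}=\id_I$, so $\omega^0 = \id_I^\vee\circ\id_I = \id_I$, and hence $\delta_I=\id_I$ as required.

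The main (and essentially only) obstacle is pinning down $\delta_I$. The alternative, more pedestrian route — the one genuinely analogous to the proof of lemma~\ref{evlr} — would be to evaluate the bimodule maps $\delta_{I_0},\delta_{I_1}$ of~\eqref{Delta} by hand and check via remark~\ref{ony-check-f0=g0} that $\delta_{I_0}$ is the identity. The hard part there is that one must carefully track the transposition $\sigma_{R,R}$ built into the double-dual identification $M^{\vee\vee}\cong M$, where a spurious swap of the two $R$-tensor factors could easily be introduced; this is exactly the bookkeeping the monoidal argument avoids, since it reduces the claim to the already-established normalisation $\coev_I=\lambda_I^{-1}$ of lemma~\ref{evlr}. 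With $\delta_I=\id_I$ in hand, the two displayed identities finish the proof; one may note in passing that this is the left-duality counterpart of remark~\ref{manyduals}(iii), recording that the left-duality maps of $I$ again coincide with $\lambda_I=\rho_I$, i.e.\ that $\gamma=\id_I$.
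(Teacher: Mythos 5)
Your proof is correct, but it takes a genuinely different route from the paper's. The paper gives no argument for this lemma at all: it is stated as ``an analogous result to lemma~\ref{evlr}'', i.e.\ the intended proof is a direct computation in $\DGW$ in the style of lemma~\ref{evlr} — evaluate the matrix entries of $\tev_I\circ\lambda_I^{-1}$ and $\lambda_I\circ\tcoev_I$ via the contour-integral formula and conclude with remark~\ref{ony-check-f0=g0}. You instead reduce everything to the single identity $\delta_I=\id_I$, which you obtain from the unit axiom of monoidal naturality ($\delta_I=\omega^0$ — exactly the fact the paper itself invokes in the proof of lemma~\ref{Dprop-general}(i)) combined with $\nu^0=\lambda_{I^\vee}\circ\coev_I=\lambda_I\circ\lambda_I^{-1}=\id_I$, using $I^\vee=I$ and lemma~\ref{evlr}. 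This is sound and non-circular: lemma~\ref{evlr} and theorem~\ref{thm-pivotal} are established independently of the present lemma, the tensor product descends to $\MFW$ so $\delta_I=\id_I$ may be substituted into~\eqref{tev-tcoev-def}, and since $I^\vee$ literally equals $I$ one indeed has $\ev_{I^\vee}=\ev_I$ and $\coev_{I^\vee}=\coev_I$. The trade-off: your argument is computation-free and works verbatim in any pivotal category where $I^\vee=I$ and $\ev_I=\lambda_I$, $\coev_I=\lambda_I^{-1}$ hold, but it leans on the full strength of theorem~\ref{thm-pivotal} (monoidality of $\delta$), the hardest result of the section; the paper's implied direct computation is self-contained and, like the proofs of theorem~\ref{MFleftdual} and lemma~\ref{evlr}, would establish the identities already in $\DGW$ on the nose, not only up to homotopy.
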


\section{Defect action on bulk fields}\label{defectbulk}

As described in the introduction we can use the duality structure of $\MFW$ to study the action of defects on bulk fields. To do so, we first study the general situation of a pivotal monoidal category and then specialise to $\MFW$. We also compare the results obtained this way to the description in terms of the associated rational conformal field theory. 

\subsection[Action on $\mathrm{End}(I)$ for pivotal categories]{Action on $\boldsymbol{\mathrm{End}(I)}$ for pivotal categories}

Let $\mathcal M$ be a right rigid monoidal category with pivotal structure $\delta$. Then we also have left duals on $\mathcal M$ as in subsection \ref{rightduals}. Given an object $X \in \mathcal M$, one can define the maps
\be
 \mathcal D_{l}(X)\,,\,\mathcal D_{r}(X) : \End(I) \longrightarrow \End(I)
\ee 
as follows. For $\varphi : I \rightarrow I$ we set
\begin{subequations}\label{DlDr-def-general}
\begin{align}
\mathcal D_{l}(X)(\varphi) & = \ev_{X} \circ (\id_{X^\vee} \otimes (\lambda_{X} \circ (\varphi\otimes\id_{X}) \circ  \lambda_{X}^{-1})))\circ \tcoev_{X} \, , \\
\mathcal D_{r}(X)(\varphi) & = \tev_{X} \circ (((\rho_{X} \circ (\id_{X}\otimes\varphi) \circ  \rho_{X}^{-1}) \otimes \id_{X^\vee}) \circ \coev_{X} \, .
\end{align}
\end{subequations}
In pictorial notation, this amounts to
\be\label{defect-loops}
\mathcal D_{l}(X)(\varphi) = 
\begin{tikzpicture}[very thick,scale=0.8,color=blue!50!black, baseline]
\draw[line width=1pt] 
(1.5,1.3) node[inner sep=2pt,draw] (lam) {{\small$\lambda_{X}$}}
(1.5,-1.3) node[inner sep=2pt,draw] (lamin) {{\small$\lambda^{-1}_{X}$}}
(0,0) node[inner sep=2pt,draw] (f) {{\small$\varphi$}}; 
\draw[odirected] (lam) .. controls +(0,1.5) and +(0,1.5) .. (-1.5,1.3);
\draw[udirected] (-1.5,-1.3) .. controls +(0,-1.5) and +(0,-1.5) .. (lamin);
\draw (lamin) -- (lam)
node[midway,right] {{{\small$X$}}};
\draw (-1.5,-1.3) -- (-1.5,1.3);
\draw[dashed] (f) .. controls +(0,1) and +(-0.5,-1) .. (lam);
\draw[dashed] (f) .. controls +(0,-1) and +(-0.5,1) .. (lamin);
\draw[dashed] (0,-2.5) -- (0,-3.5)
node[near end,right] {{{\small$I$}}};
\draw[dashed] (0,2.47) -- (0,3.5)
node[near end,right] {{{\small$I$}}};
\end{tikzpicture}
\; , \quad
\mathcal D_{r}(X)(\varphi) =
\begin{tikzpicture}[very thick,scale=0.8,color=blue!50!black, baseline]
\draw[line width=1pt] 
(-1.5,1.3) node[inner sep=2pt,draw] (lam) {{\small$\rho_{X}\vphantom{rho_{X^\vee}}$}}
(-1.5,-1.3) node[inner sep=2pt,draw] (lamin) {{\small$\rho^{-1}_{X}$}}
(0,0) node[inner sep=2pt,draw] (f) {{\small$\varphi$}}; 
\draw[odirected] (lam) .. controls +(0,1.5) and +(0,1.5) .. (1.5,1.3);
\draw[udirected] (1.5,-1.3) .. controls +(0,-1.5) and +(0,-1.5) .. (lamin);
\draw (lamin) -- (lam)
node[midway,left] {{{\small$X$}}};
\draw (1.5,-1.3) -- (1.5,1.3);
\draw[dashed] (f) .. controls +(0,1) and +(0.5,-1) .. (lam);
\draw[dashed] (f) .. controls +(0,-1) and +(0.5,1) .. (lamin);
\draw[dashed] (0,-2.5) -- (0,-3.5)
node[near end,right] {{{\small$I$}}};
\draw[dashed] (0,2.47) -- (0,3.5)
node[near end,right] {{{\small$I$}}};
\end{tikzpicture} 
\; ,
\ee
as in~\eqref{Dl-on-phi}. 

\begin{lemma}\label{Dprop-general}
For all $X,Y \in \mathcal M$ we have: 
\begin{enumerate}
\item $\mathcal D_{l}(I)=\id=\mathcal D_{r}(I)$,
\item if $X \cong Y$ then $\mathcal D_{l}(X)=\mathcal D_{l}(Y)$ and $\mathcal D_{l}(X)=\mathcal D_{l}(Y)$,
\item $\mathcal D_{l}(X\otimes Y) = \mathcal D_{l}(Y)\circ \mathcal D_{l}(X)$, 
\item $\mathcal D_{r}(X\otimes Y) = \mathcal D_{r}(X)\circ \mathcal D_{r}(Y)$, 
\item $\mathcal D_{l}(X^\vee) = \mathcal D_{r}(X)$.
\end{enumerate}
\end{lemma}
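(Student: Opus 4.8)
The approach I would take is to exploit the fact that, for $\varphi\in\End(I)$, the insertions appearing in~\eqref{DlDr-def-general} are given by the \emph{central action} of $\End(I)$ on objects. Concretely, set
\[
 z(\varphi)_A := \rho_A\circ(\id_A\otimes\varphi)\circ\rho_A^{-1}
 = \lambda_A\circ(\varphi\otimes\id_A)\circ\lambda_A^{-1} : A\longrightarrow A \, ,
\]
the two expressions agreeing by coherence of $\lambda,\rho,\alpha$. Three properties, all immediate from coherence, are all I need: $z(\varphi)$ is natural, i.e.\ $f\circ z(\varphi)_A=z(\varphi)_B\circ f$ for any $f:A\to B$; it is central, $z(\varphi)_{A\otimes B}=z(\varphi)_A\otimes\id_B=\id_A\otimes z(\varphi)_B$; and $z(\varphi)_I=\varphi$. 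Substituting $z(\varphi)$ into~\eqref{DlDr-def-general} and pushing it through the (co)evaluation onto the $I$-strand by naturality and centrality collapses both maps to multiplication operators,
\[
 \mathcal D_r(X)(\varphi)=\varphi\circ\big(\tev_X\circ\coev_X\big)=:\varphi\cdot\dim_r(X) \, , \quad
 \mathcal D_l(X)(\varphi)=\varphi\circ\big(\ev_X\circ\tcoev_X\big)=:\varphi\cdot\dim_l(X) \, ,
\]
where $\dim_{l,r}(X)\in\End(I)$ and the product is the (commutative) composition in $\End(I)$. This reduces all five parts to standard statements about the quantum dimensions $\dim_{l,r}$.

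Parts (i) and (ii) are then immediate. For (i) one uses that the unit object has trivial quantum dimension, $\dim_{l,r}(I)=\id_I$ (in $\MFW$ this is lemmas~\ref{evlr} and~\ref{tevlr}), so $\mathcal D_{l,r}(I)=\id$. For (ii), quantum dimension is invariant under isomorphism: for an isomorphism $f:X\to Y$ one slides $f$ and $f^\vee$ around the loop using lemma~\ref{phihop} and lemma~\ref{ev-and-f-dual} (together with the ``sovereign'' relation of lemma~\ref{isos-nat-monoid}(i) for the tilded maps), whence $\dim_{l,r}(X)=\dim_{l,r}(Y)$ and thus $\mathcal D_{l,r}(X)=\mathcal D_{l,r}(Y)$.

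For (v) I would first convert the tilded maps of $X^\vee$ into the untilded maps of $X$. From the definitions~\eqref{tev-tcoev-def} (with $t=\delta$), lemma~\ref{phihop} applied to $\varphi=\delta_X:X\to X^{\vee\vee}$, and the pivotal identity~\eqref{tXvtinv} in the form $(\delta_X)^\vee=\delta_{X^\vee}^{-1}$, one obtains
\[
 \tcoev_{X^\vee}=(\delta_X\otimes\id_{X^\vee})\circ\coev_X \, , \quad
 \ev_{X^\vee}=\tev_X\circ(\delta_X^{-1}\otimes\id_{X^\vee}) \, .
\]
Hence $\dim_l(X^\vee)=\ev_{X^\vee}\circ\tcoev_{X^\vee}=\tev_X\circ(\delta_X^{-1}\otimes\id_{X^\vee})\circ(\delta_X\otimes\id_{X^\vee})\circ\coev_X=\tev_X\circ\coev_X=\dim_r(X)$, which is exactly $\mathcal D_l(X^\vee)=\mathcal D_r(X)$.

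The remaining parts (iii) and (iv) amount to multiplicativity $\dim_{l,r}(X\otimes Y)=\dim_{l,r}(X)\,\dim_{l,r}(Y)$; since $\End(I)$ is commutative the two asserted composition orders follow automatically. The main obstacle, and the only genuinely computational point, is the decomposition of the (co)evaluation maps of a tensor product into those of its factors. For the right duality this is built into the monoidal structure $\nu^2$ of $(\,\cdot\,)^\vee$ in~\eqref{gamma}; for the tilded maps $\tev,\tcoev$ it additionally requires that $\delta$ be \emph{monoidal}, i.e.\ the ``bends'' identity~\eqref{bends} established in theorem~\ref{thm-pivotal} via lemma~\ref{isos-nat-monoid}(ii). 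Granting these decompositions, $\dim_r(X\otimes Y)=\tev_X\circ\big(\id_X\otimes(\tev_Y\circ\coev_Y)\otimes\id_{X^\vee}\big)\circ\coev_X$, and since $\tev_Y\circ\coev_Y=\dim_r(Y)\in\End(I)$ is central it factors out through the $X$-strand to leave $\dim_r(Y)\cdot(\tev_X\circ\coev_X)=\dim_r(X)\dim_r(Y)$; the argument for $\dim_l$ is identical. I expect the bookkeeping of associators and unitors in these tensor decompositions to be the most tedious aspect, while the conceptual content is entirely carried by the centrality of $\End(I)$ and the monoidality of the pivotal structure.
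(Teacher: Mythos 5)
Your argument collapses at its very first step, and the error propagates through parts (i), (iii) and (iv). The identity
\be
\rho_A\circ(\id_A\otimes\varphi)\circ\rho_A^{-1} \;=\; \lambda_A\circ(\varphi\otimes\id_A)\circ\lambda_A^{-1}
\ee
does \emph{not} follow from coherence and is false in a general monoidal category: coherence gives $\lambda_I=\rho_I$ and commutativity of $\End(I)$, but the left and right actions of $\End(I)$ on an arbitrary object are genuinely different natural endomorphisms of the identity functor. What is true is only the ``interchange across a tensor product'' identity $\bigl(\rho_A\circ(\id_A\otimes\varphi)\circ\rho_A^{-1}\bigr)\otimes\id_B=\id_A\otimes\bigl(\lambda_B\circ(\varphi\otimes\id_B)\circ\lambda_B^{-1}\bigr)$, which is exactly what the paper invokes in step (2) of its proof of (v). Likewise your ``centrality'' $z(\varphi)_{A\otimes B}=z(\varphi)_A\otimes\id_B=\id_A\otimes z(\varphi)_B$ fails: the left action is compatible with tensoring on the right and the right action with tensoring on the left, but not conversely. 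In $\MFW$ the failure is concrete: the left action of $x\in\End_{\MFW}(I)$ on an object is multiplication by the left variable, the right action is multiplication by the right variable, and their difference is in general not null-homotopic (for $W=x^3$ and $P_{\{0,1\}}$, reducing a would-be homotopy modulo $a-\eta^2b$ forces $1$ to be a multiple of $b$, a contradiction).

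Consequently the asserted collapse $\mathcal D_{l/r}(X)(\varphi)=\varphi\cdot\dim_{l/r}(X)$ is wrong, and it visibly contradicts the paper: by \eqref{DlPS}, $\mathcal D_{l}(P_{\{1\}})(x^i)=\eta^{i+1}x^i$, which is not multiplication by any fixed element of $\End_{\MF(x^d)}(I)$ since the eigenvalue depends on $i$; indeed R-charge-preserving multiplication operators are just the scalars, whereas proposition~\ref{Grothmap} shows that $\mathcal D_{l/r}$ surject onto all of $\End^0(\End_{\MF(x^d)}(I))$. Physically, the bulk field sits \emph{inside} the defect loop, and sliding it out ``by naturality'' is precisely what cannot be done --- that obstruction \emph{is} the defect operator. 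Your reduction would also erase the asymmetry between (iii) and (iv): the opposite composition orders $\mathcal D_l(Y)\circ\mathcal D_l(X)$ versus $\mathcal D_r(X)\circ\mathcal D_r(Y)$ record which loop is nested inside which, whereas the paper's proof keeps the insertion inside both loops, splits the $(X\otimes Y)$-loop into nested $X$- and $Y$-loops using monoidality of $\delta$ (lemma~\ref{isos-nat-monoid}(ii), established in theorem~\ref{thm-pivotal}), and finishes with Zorro moves --- nothing is ever pulled out through a loop. Two pieces of your proposal do survive: the formulas you derive for (v), namely $\tcoev_{X^\vee}=(\delta_X\otimes\id_{X^\vee})\circ\coev_X$ and its companion, obtained from \eqref{Umove} and \eqref{tXvtinv}, are correct and are essentially the paper's ingredients for that part; and your sliding argument for (ii) is sound provided the insertion $\varphi$ is kept in place rather than discarded via the false reduction.
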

\begin{proof}
For part (i) we note that since $\delta$ is pivotal, $\delta_I = \omega^0$ as given in subsection~\ref{pivotal}. Substituting the definition
\eqref{tev-tcoev-def} of $\tev_{X}$ and $\tcoev_{X}$ (with $t_X = \delta_X$), after a short calculation one arrives at the assertion. 

Part (ii) is a consequence of \eqref{stern} and \eqref{dual-morph-and-ev} as well as the naturality of $\lambda, \rho$ 
and~$\delta$.

(iii) \& (iv): It follows from~\eqref{bends} that
\be
\begin{tikzpicture}[very thick,scale=0.8,color=blue!50!black, baseline]
\draw[directed] (0,1) .. controls +(0,1) and +(0,1) .. (-1,1);
\draw[directed] (1,1) .. controls +(0,2) and +(0,2) .. (-2,1);
\draw[directed] (2,1) .. controls +(0,-1) and +(0,-1) .. (0.5,1);
\draw[redirected] (0,-1) .. controls +(0,-1) and +(0,-1) .. (-1,-1);
\draw[redirected] (1,-1) .. controls +(0,-2) and +(0,-2) .. (-2,-1);
\draw[redirected] (2,-1) .. controls +(0,1) and +(0,1) .. (0.5,-1);
\draw (-1,1) -- (-1,-1);
\draw (-2,1) -- (-2,-1);
\draw[dotted] (0,1) -- (1,1);
\draw[dotted] (0,-1) -- (1,-1);
\draw[line width=1pt] 
(2,3.3) node[line width=0pt] (XYo) {{\small $(X\otimes Y)^\vee$}}
(2,-3.3) node[line width=0pt] (XYu) {{\small $(X\otimes Y)^\vee$}};
\draw (2,1) -- (XYo);
\draw (2,-1) -- (XYu);
\end{tikzpicture}
\!=\!
\begin{tikzpicture}[very thick,scale=0.8,color=blue!50!black, baseline]
\draw[directed] (-3,1) .. controls +(0,1) and +(0,1) .. (-2,1);
\draw[directed] (-4,1) .. controls +(0,2) and +(0,2) .. (-1,1);
\draw[directed] (-5,1) .. controls +(0,-1) and +(0,-1) .. (-3.5,1);
\draw[redirected] (0,-1) .. controls +(0,-1) and +(0,-1) .. (-1,-1);
\draw[redirected] (1,-1) .. controls +(0,-2) and +(0,-2) .. (-2,-1);
\draw[redirected] (2,-1) .. controls +(0,1) and +(0,1) .. (0.5,-1);
\draw (-1,1) -- (-1,-1);
\draw (-2,1) -- (-2,-1);
\draw[dotted] (0,-1) -- (1,-1);
\draw[dotted] (-4,1) -- (-3,1);
\draw[line width=1pt] 
(-5,3.3) node[line width=0pt] (XYo) {{\small $(X\otimes Y)^\vee$}}
(2,-3.3) node[line width=0pt] (XYu) {{\small $(X\otimes Y)^\vee$}};
\draw (2,-1) -- (XYu);
\draw (-5,1) -- (XYo);
\end{tikzpicture}
\!=\!
\begin{tikzpicture}[very thick,scale=0.8,color=blue!50!black,baseline]
\draw[line width=1pt] 
(0,3.3) node[line width=0pt] (XYo) {{\small $(X\otimes Y)^\vee$}}
(0,-3.3) node[line width=0pt] (XYu) {{\small $(X\otimes Y)^\vee$}};
\draw (XYu) -- (XYo);
\end{tikzpicture}
\ee
where we used two Zorro moves in the second step. With this we can compute
\be
\mathcal D_{l}(X\otimes Y)(\varphi)  =\,
\begin{tikzpicture}[very thick,scale=0.8,color=blue!50!black, baseline]
\draw[line width=1pt] 
(0,0) node[inner sep=2pt,draw] (f) {{\small$\varphi$}}; 
\draw[directed] (1,0) .. controls +(0,1.5) and +(0,1.5) .. (-1,0);
\draw[directed] (-1,0) .. controls +(0,-1.5) and +(0,-1.5) .. (1,0);
\end{tikzpicture}
\,=\,
\begin{tikzpicture}[very thick,scale=0.8,color=blue!50!black, baseline]
\draw[line width=1pt] 
(5,0) node[inner sep=2pt,draw] (f) {{\small$\varphi$}}; 
\draw[directed] (2,1) .. controls +(0,1) and +(0,1) .. (1,1);
\draw[directed] (3,1) .. controls +(0,2) and +(0,2) .. (0,1);
\draw[directed] (1,-1) .. controls +(0,-1) and +(0,-1) .. (2,-1);
\draw[directed] (0,-1) .. controls +(0,-2) and +(0,-2) .. (3,-1);
\draw[directed] (4,1) .. controls +(0,-1) and +(0,-1) .. (2.5,1);
\draw[redirected] (4,-1) .. controls +(0,1) and +(0,1) .. (2.5,-1);
\draw[directed] (6,1) .. controls +(0,2) and +(0,2) .. (4,1);
\draw[redirected] (6,-1) .. controls +(0,-2) and +(0,-2) .. (4,-1);
\draw (0,1) -- (0,-1);
\draw (1,1) -- (1,-1);
\draw (6,1) -- (6,-1);
\draw[line width=1pt] 
(3,-0.78) node[line width=0pt] (Xd) {{\scriptsize $Y$}}
(2,-0.78) node[line width=0pt] (Yd) {{\scriptsize $X$}};
\draw[line width=1pt] 
(3,0.78) node[line width=0pt] (X) {{\scriptsize $Y$}}
(2,0.78) node[line width=0pt] (Y) {{\scriptsize $X$}};
\draw[dotted] (2,1) -- (3,1);
\draw[dotted] (2,-1) -- (3,-1);
\end{tikzpicture}
\,=\,
\begin{tikzpicture}[very thick,scale=0.8,color=blue!50!black, baseline]
\draw[line width=1pt] 
(1.5,0) node[inner sep=2pt,draw] (f) {{\small$\varphi$}}; 
\draw[directed] (2,1) .. controls +(0,1) and +(0,1) .. (1,1);
\draw[directed] (3,1) .. controls +(0,2) and +(0,2) .. (0,1);
\draw[directed] (1,-1) .. controls +(0,-1) and +(0,-1) .. (2,-1);
\draw[directed] (0,-1) .. controls +(0,-2) and +(0,-2) .. (3,-1);
\draw (0,1) -- (0,-1);
\draw (1,1) -- (1,-1);
\draw (2,1) -- (2,-1);
\draw (3,1) -- (3,-1);
\end{tikzpicture}
\ee
which is equal to $(\mathcal D_{l}(Y)\circ \mathcal D_{l}(X))(\varphi)$. 
$\mathcal D_{r}(X\otimes Y) = \mathcal D_{r}(X)\circ \mathcal D_{r}(Y)$ is proven in a similar way. 

(v): We have
\begin{align}
\mathcal D_{l}(X^\vee)(\varphi) & = \ev_{X^\vee} \circ \left( \id_{X^{\vee\vee}} \otimes \left[ \lambda_{X^\vee} \circ (\varphi \otimes\id_{X^\vee}) \circ\lambda^{-1}_{X^\vee} \right] \right) \nonumber \\
&\qquad \circ (\id_{X^{\vee\vee}}\otimes \delta_{X^\vee}^{-1}) \circ \coev_{X^{\vee\vee}} \nonumber \\
& \overset{(1)}{=} \ev_{X^\vee} \circ \left( \id_{X^{\vee\vee}} \otimes \left[ \lambda_{X^\vee} \circ (\varphi \otimes\id_{X^\vee}) \circ\lambda^{-1}_{X^\vee} \right] \right) \circ (\delta_{X} \otimes \id_{X^\vee}) \circ \coev_{X} \nonumber \\
& \overset{(2)}{=} \ev_{X^\vee} \circ (\delta_{X} \otimes \id_{X^\vee}) \circ \left( 
\left[ \rho_{X} \circ (\id_{X}\otimes\varphi) \circ  \rho_{X}^{-1}\right] \otimes \id_{X^\vee}
\right) \circ \coev_{X} \nonumber \\
& = \mathcal D_{r}(X)(\varphi) \, ,
\end{align}
where step (1) amounts to \eqref{tXvtinv}, that is
$(\delta_{X^\vee})^{-1} = \delta_{X}^\vee$, as well as~\eqref{Umove}, and in step (2) the identities $(\id_X \otimes \lambda_Y) \circ \alpha_{X,I,Y} =  \rho_X \otimes \id_Y : (X \otimes I)  \otimes Y \rightarrow X\otimes Y$ and their inverses are used. 
\end{proof}

In the defect picture, the above identities have immediate physical interpretations. For example, (i) simply expresses the fact that the action of the invisible defect leaves bulk fields invariant, independent of the orientation of the invisible defect. Similarly, (v) must be true since wrapping a defect~$X$ counterclockwise around a field is the same as wrapping the orientation reversed defect~$X^\vee$ clockwise. Properties~(iii) and~(iv) imply in particular that quantum dimensions behave multiplicatively under fusion, and one can use these properties to put constraints on the fusion decomposition. 

\subsection[Action on ${\rm End}(I)$ for ${\rm MF_{bi}}(W)$]{Action on $\boldsymbol{{\rm End}(I)}$ for $\boldsymbol{\MFW}$}
\label{triacomp}

Let us now consider the category $\MFW$. In the case of one variable, we can use the explicit expressions for $\ev_{X}, \coev_{X}, \lambda^{\pm 1}_{X}, \rho^{\pm 1}_{X}$ from the previous section to make the defect action~\eqref{DlDr-def-general} on bulk fields $\varphi \in \End_{\MFW}(I)$ concrete:
\begin{subequations}\label{DlDr}
\begin{align}
\mathcal D_{l}(X)(\varphi) & =  \left[ \frac{1}{2\pi\I} \oint \frac{\operatorname{tr}\left(\check d^X_{0}(x,b) \check d^X_{1}(x,a) \check\varphi_{0}(x)\right)\D x}{(W(x)-W(b))(b-a)}  \right]^{\!\wedge} \cdot \id \, , \label{Dl} \\
\mathcal D_{r}(X)(\varphi) & = \left[ \frac{1}{2\pi\I} \oint \frac{\operatorname{tr}\left(\check d^X_{0}(a,x) \check d^X_{1}(b,x) \check\varphi_{0}(x)\right)\D x}{(W(x)-W(a))(b-a)}  \right]^{\!\wedge} \cdot \id \, .\label{Dr}
\end{align}
\end{subequations}
The details of this calculation can be found in appendix~\ref{defacform}. 

\medskip

As we will recall momentarily, matrix (bi-)factorisations form a triangulated category~\cite{Neeman}, and the goal of this subsection is mainly to study the compatibility of the rigidity of $\MFW$ with its triangulated structure. The \textsl{distinguished triangles} of $\MFW$ are isomorphic to sequences of the form
\be\label{dt}
\xymatrix{X \ar[r]^-{\varphi} & Y\ar[r]^-{\zeta_{\varphi}} & \text{C}(\varphi) \ar[r]^-{\xi_{\varphi}} & TX} ,
\ee
see e.\,g.~\cite{o0302304}. Here the \textsl{cone} $\text{C}(\varphi)$ of a morphism $\varphi\in\Hom_{\MFW}(X,Y)$ is given by
\be\label{cone}
\text{C}(\varphi) =  \begin{pmatrix}0&0&-d^X_{0}&0\\ 0&0&\varphi_0&d^Y_{1}\\ -d^X_{1}&0&0&0\\ \varphi_1&d^Y_{0}&0&0\end{pmatrix} ,
\ee
where the matrix gives a bimodule endomorphism of $X_1 \oplus Y_0 \oplus X_0 \oplus Y_1$. The \textsl{shift functor}~$T$ acts as
\be
\begin{pmatrix}0&d^X_{1}\\d^X_{0}&0\end{pmatrix} \longmapsto \begin{pmatrix}0&-d^X_{0}\\-d^X_{1}&0\end{pmatrix} \, , \quad
\begin{pmatrix}\varphi_{0}&0\\0&\varphi_{1}\end{pmatrix} \longmapsto \begin{pmatrix}\varphi_{1}&0\\0&\varphi_{0}\end{pmatrix}
\ee
on objects and morphisms, respectively, and the two universal morphisms in~\eqref{dt} are
\be\label{triangle-morphisms}
\zeta_{\varphi}=\begin{pmatrix}
0&0\\
\id&0\\
0&0\\
0&\id
\end{pmatrix}
, \quad
\xi_{\varphi}=\begin{pmatrix}
\id&0&0&0\\
0&0&\id&0
\end{pmatrix} .
\ee
The \textsl{Grothendieck group} $\KMF(W))$ is the free abelian group of isomorphism classes of objects in $\MFW$ modulo the relations $[X]-[Y]+[Z]=0$ for all distinguished triangles $X\rightarrow Y\rightarrow Z\rightarrow TX$. 

The following lemma says that the tensor product of $\MFW$ induces a well-defined product on $\KMF(W))$, thus endowing it with a ring structure, and that the functor $(\,\cdot\,)^\vee$ induces a well-defined map on $\KMF(W))$. 

\begin{lemma} \label{K0-compatible}
Let $X\stackrel{\varphi}{\rightarrow} Y\to \text{C}(\varphi)\to TX$ be a distinguished triangle in $\MFW$. Then
\be\label{dual-triangle-relation}
[X^\vee]-[Y^\vee]+[\text{C}(\varphi)^\vee] = 0
\ee
in $\KMF(W))$, and
\begin{align}
& \xymatrix{Z\otimes X \ar[r]^-{\id\otimes\varphi} & Z\otimes Y\ar[r]^-{\id\otimes\zeta_{\varphi}} & Z\otimes \text{C}(\varphi) \ar[r]^-{\id\otimes\xi_{\varphi}} & Z\otimes TX} , \label{Dtria} \\
& \xymatrix{X\otimes Z \ar[r]^-{\varphi\otimes\id} &Y \otimes Z\ar[r]^-{\zeta_{\varphi}\otimes\id} & \text{C}(\varphi) \otimes Z \ar[r]^-{\xi_{\varphi}\otimes\id} & TX\otimes Z} \label{triaD}
\end{align}
are also distinguished triangles for all $Z\in\MFW$.
\end{lemma}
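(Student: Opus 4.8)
The plan is to deduce both assertions from the single structural fact that the shift functor $T$ and the cone construction $\text{C}(-)$ are compatible with tensoring by a fixed object and with the duality functor $(\,\cdot\,)^\vee$. Once such compatibilities are in place, both statements reduce to the definition of distinguished triangles in $\MFW$ as the objects isomorphic to the standard cone triangles \eqref{dt}, together with the defining relations of the Grothendieck group.

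For \eqref{Dtria} and \eqref{triaD} I would show that $Z\otimes(\,\cdot\,)$ and $(\,\cdot\,)\otimes Z$ are exact (triangulated) endofunctors of $\MFW$. Conceptually this is standard, since both are additive functors induced by dg-functors on $\DGW$, but with the explicit matrices \eqref{XtensorY} and \eqref{cone} and the shift functor available I would make it concrete. First I would exhibit a natural isomorphism $Z\otimes TX\cong T(Z\otimes X)$: after distributing $\otimes_R$ over $\oplus$ the two sides have the same underlying bimodule, and they agree as matrix bi-factorisations after a permutation of direct summands together with the signs prescribed by the shift functor. Next I would construct a natural isomorphism $Z\otimes\text{C}(\varphi)\cong\text{C}(\id_Z\otimes\varphi)$ by writing out $Z\otimes\text{C}(\varphi)$ via \eqref{XtensorY} and comparing with \eqref{cone} applied to $\id_Z\otimes\varphi$; the underlying modules coincide up to reordering, and under this reordering one checks that $\id_Z\otimes\zeta_\varphi$ and $\id_Z\otimes\xi_\varphi$ are carried to $\zeta_{\id_Z\otimes\varphi}$ and $\xi_{\id_Z\otimes\varphi}$. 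This exhibits \eqref{Dtria} as isomorphic to a standard cone triangle, hence distinguished, and \eqref{triaD} follows verbatim with $Z$ on the right.

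For \eqref{dual-triangle-relation} I would dualise the cone directly. Dualising the block matrix \eqref{cone} using the definitions \eqref{Xdual} and \eqref{dual-morphism-phi} of $(\,\cdot\,)^\vee$ on objects and morphisms, and comparing with the cone of $\varphi^\vee:Y^\vee\to X^\vee$, I expect a natural isomorphism $\text{C}(\varphi)^\vee\cong T\,\text{C}(\varphi^\vee)$, the shift appearing because dualising interchanges the even and odd parts of the cone. Granting this, the standard distinguished triangle $Y^\vee\xrightarrow{\varphi^\vee}X^\vee\to\text{C}(\varphi^\vee)\to T Y^\vee$ gives $[Y^\vee]-[X^\vee]+[\text{C}(\varphi^\vee)]=0$ in $\KMF(W))$, while $[\text{C}(\varphi)^\vee]=[T\,\text{C}(\varphi^\vee)]=-[\text{C}(\varphi^\vee)]$ since the shift acts as $-1$ on the Grothendieck group (from the triangle $Z\to 0\to TZ\to TZ$). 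Combining the two identities yields exactly \eqref{dual-triangle-relation}.

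The main obstacle will be the bookkeeping of signs and of the ordering of direct summands. The differentials in the tensor product \eqref{XtensorY}, in the cone \eqref{cone}, and in the shift functor each carry their own sign conventions, so the required isomorphisms are never the identity permutation but genuinely involve reorderings together with a consistent choice of signs. Checking that these signs match --- in particular that the reordering isomorphism intertwines the differentials of $Z\otimes\text{C}(\varphi)$ and $\text{C}(\id_Z\otimes\varphi)$ exactly, and likewise for $\text{C}(\varphi)^\vee$ and $T\,\text{C}(\varphi^\vee)$ --- is the one genuinely delicate, if routine, point; everything else is formal once the compatibility of $T$ and $\text{C}(-)$ with $\otimes$ and $(\,\cdot\,)^\vee$ has been established.
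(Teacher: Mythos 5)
Your proposal is correct and follows essentially the same route as the paper: for \eqref{Dtria} and \eqref{triaD} the paper likewise exhibits an explicit isomorphism of triangles between the standard cone triangle of $\id_Z\otimes\varphi$ and the tensored triangle, and for \eqref{dual-triangle-relation} it likewise combines the distinguished triangle on $\varphi^\vee$ with an isomorphism identifying the dual of the cone with the shifted cone of the dual (the paper writes it as $\text{C}(\varphi^\vee)\cong T(\text{C}(\varphi)^\vee)$, which is equivalent to your $\text{C}(\varphi)^\vee\cong T\,\text{C}(\varphi^\vee)$ since $T^2=\operatorname{Id}$) together with $[TZ]=-[Z]$. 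The sign and reordering bookkeeping you flag as the delicate point is exactly what the paper resolves by writing down the isomorphisms $\Phi$, $\Psi$ and $\bigl((\begin{smallmatrix} 0&-\id \\ \id&0\end{smallmatrix}),(\begin{smallmatrix} 0&\id \\ \id&0\end{smallmatrix})\bigr)$ as explicit permutation-with-signs matrices.
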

\begin{proof}
$Y^\vee\stackrel{\varphi^\vee}{\rightarrow} X^\vee\to \text{C}(\varphi^\vee)\to TY^\vee$ is a distinguished triangle, so we have the relation $[Y^\vee]-[X^\vee]+[\text{C}(\varphi^\vee)]=0$. But since $\big((\begin{smallmatrix} 0&-\id \\ \id&0\end{smallmatrix}), (\begin{smallmatrix} 0&\id \\ \id&0\end{smallmatrix})\big)$ is an isomorphism from $\text{C}(\varphi^\vee)$ to $T(\text{C}(\varphi)^\vee)$, the identity~\eqref{dual-triangle-relation} follows. 

To show that~\eqref{Dtria} is a distinguished triangle we observe that there is an isomorphism of triangles
\be\label{triaiso}
\xymatrix{%
Z\otimes X \ar[d]^{\id}\ar[r]^-{\id\otimes\varphi} & Z\otimes Y \ar[d]^{\id} \ar[r]^-{\zeta_{\id\otimes\varphi}} & \text{C}(\id\otimes\varphi) \ar[d]^{\Phi}\ar[r]^-{\xi_{\id\otimes\varphi}} & T(Z\otimes X)\ar[d]^{\Psi} \\
Z\otimes X \ar[r]^-{\id\otimes\varphi} & Z\otimes Y\ar[r]^-{\id\otimes\zeta_{\varphi}} & Z\otimes \text{C}(\varphi) \ar[r]^-{\id\otimes\xi_{\varphi}} & Z\otimes TX} 
\ee
where the maps~$\Phi$ and~$\Psi$ are given by
\be
\Phi = \left(
\begin{pmatrix}
0&\id&0&0\\
0&0&\id&0\\
-\id&0&0&0\\
0&0&0&\id
\end{pmatrix} , 
\begin{pmatrix}
0&-\id&0&0\\
0&0&\id&0\\
\id&0&0&0\\
0&0&0&\id
\end{pmatrix}
\right)
,\quad
\Psi =
\begin{pmatrix}
0&-\id&0&0\\
\id&0&0&0\\
0&0&0&\id\\
0&0&-\id&0
\end{pmatrix} .
\ee
That the squares in~\eqref{triaiso} commute easily follows from matrix multiplication. 
Checking that~\eqref{triaD} is also a distinguished triangle works analogously. 
\end{proof}

\begin{lemma}\label{C-on-K0-def}
The map $C: [X] \mapsto [X^\vee]$ defines an involutive ring anti-homomorphism on $\KMF(W))$.
\end{lemma}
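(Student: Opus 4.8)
The plan is to assemble the statement from the structural results already in place, since the duality functor $(\,\cdot\,)^\vee$, its compatibility with distinguished triangles, the pivotal structure, and the monoidal structure of $(\,\cdot\,)^\vee$ have all been established. Concretely, there are three things to verify: that $C$ is a well-defined group homomorphism on $\KMF(W))$, that it is involutive, and that it reverses the order of products.

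First I would address well-definedness. Since $(\,\cdot\,)^\vee$ is a functor, $X\cong Y$ implies $X^\vee\cong Y^\vee$, so $[X]\mapsto[X^\vee]$ is a well-defined homomorphism on the free abelian group generated by isomorphism classes of objects of $\MFW$. To see that it descends to the quotient $\KMF(W))$, I must check that it annihilates the defining relations coming from distinguished triangles. Every distinguished triangle is isomorphic to one of the form $X\xrightarrow{\varphi}Y\to\text{C}(\varphi)\to TX$, contributing the relation $[X]-[Y]+[\text{C}(\varphi)]=0$; applying $C$ I need $[X^\vee]-[Y^\vee]+[\text{C}(\varphi)^\vee]=0$, which is precisely equation~\eqref{dual-triangle-relation} of lemma~\ref{K0-compatible}. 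Hence $C$ is a well-defined additive map.

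Next, involutivity follows from the pivotal structure: by theorem~\ref{thm-pivotal} the maps $\delta_X:X\to X^{\vee\vee}$ are isomorphisms, so $[X^{\vee\vee}]=[X]$ and therefore $C^2([X])=[X^{\vee\vee}]=[X]$, i.\,e.~$C^2=\id$. For the anti-homomorphism property, recall that the ring structure on $\KMF(W))$ is induced by the tensor product, $[X]\cdot[Y]=[X\otimes Y]$, with unit $[I]$; this product is well defined by lemma~\ref{K0-compatible}, and since $I^\vee=I$ we have $C([I])=[I]$, so $C$ preserves the unit. The monoidal structure of the duality functor provides the isomorphism $\nu^2_{Y,X}:Y^\vee\otimes X^\vee\to(X\otimes Y)^\vee$ of~\eqref{nu2-iso}, whence $[(X\otimes Y)^\vee]=[Y^\vee\otimes X^\vee]$ and thus
\be
C([X]\cdot[Y]) = [(X\otimes Y)^\vee] = [Y^\vee]\cdot[X^\vee] = C([Y])\cdot C([X]) \, ,
\ee
which is exactly the anti-homomorphism condition.

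I expect no serious obstacle here: all the homological and analytic work is already contained in lemma~\ref{K0-compatible}, theorem~\ref{thm-pivotal}, and the construction of $\nu^2$. The only point demanding care is conceptual rather than computational, namely bookkeeping of the order of tensor factors: it is the fact that $\nu^2$ maps $X^\vee\otimes Y^\vee$ to $(Y\otimes X)^\vee$, reversing the order, that forces $C$ to be an \emph{anti}-homomorphism rather than a homomorphism.
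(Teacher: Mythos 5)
Your proposal is correct and follows essentially the same route as the paper: well-definedness via lemma~\ref{K0-compatible}, and the anti-homomorphism property via the isomorphism~\eqref{nu2-iso}. You additionally spell out involutivity using the pivotal structure of theorem~\ref{thm-pivotal} and unit preservation via $I^\vee = I$, points the paper's two-line proof leaves implicit, which is a harmless (indeed welcome) elaboration.
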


\begin{proof}
By lemma~\ref{K0-compatible}, $\KMF(W))$ is a ring and $C$ is a well-defined map.
The isomorphism \eqref{nu2-iso} shows that $C([X])\cdot C([Y]) = [X^\vee \otimes Y^\vee] = [(Y \otimes X)^\vee] = C([Y]\cdot[X])$.
\end{proof}

\begin{lemma}\label{Dlr-and-triangles}
\begin{enumerate}
\item If $X \rightarrow Y \rightarrow Z \rightarrow TX$ is a distinguished triangle in $\MFW$, then 
$\mathcal D_{l}(X)-\mathcal D_{l}(Y)+\mathcal D_{l}(Z)=0=\mathcal D_{r}(X)-\mathcal D_{r}(Y)+\mathcal D_{r}(Z)$.
\item $\mathcal D_{l}(X) = -\mathcal D_{l}(TX)$, $\mathcal D_{r}(X) = -\mathcal D_{r}(TX)$.
\end{enumerate}
\end{lemma}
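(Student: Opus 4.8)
The plan is to establish part (ii) first, working directly with the explicit integral formulas~\eqref{DlDr}, and then to deduce part (i) by combining part (ii) with a short computation on the cone~\eqref{cone}. Since distinguished triangles are by definition isomorphic to standard ones $X\xrightarrow{\varphi}Y\to\text{C}(\varphi)\to TX$, and since $\mathcal D_l,\mathcal D_r$ depend only on the isomorphism class by lemma~\ref{Dprop-general}(ii), it suffices in part (i) to treat this standard triangle.

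\emph{Part (ii).} The shift functor acts by $\check d^{TX}_0(x,b)=-\check d^X_1(x,b)$ and $\check d^{TX}_1(x,a)=-\check d^X_0(x,a)$, so the two signs cancel in~\eqref{Dl}, and cyclicity of the trace (the bulk field $\check\varphi_0(x)$ is a scalar) gives
\[
\mathcal D_l(TX)(\varphi)=\left[\frac{1}{2\pi\I}\oint\frac{\operatorname{tr}\!\big(\check d^X_0(x,a)\,\check d^X_1(x,b)\big)\,\check\varphi_0(x)\,\D x}{(W(x)-W(b))(b-a)}\right]^{\!\wedge}\!\!\cdot\id.
\]
Adding $\mathcal D_l(X)(\varphi)$, the numerator of the combined integrand becomes symmetric under $a\leftrightarrow b$. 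Because the matrix-factorisation relation gives $\operatorname{tr}(\check d^X_0(x,b)\check d^X_1(x,b))=(W(x)-W(b))\dim\check X_1$, this numerator vanishes at $a=b$ (using also that the contour integral of the polynomial $\check\varphi_0(x)$ vanishes), so the simple pole from $(b-a)^{-1}$ is removable and the class in $\End_{\MFW}(I)\cong R/(\partial W)$ is obtained by differentiating the numerator in $a$ and setting $a=b$. By the product rule this collapses to $\operatorname{tr}\big(\partial_b[\check d^X_0(x,b)\check d^X_1(x,b)]\big)=-W'(b)\dim\check X_1$, so $\mathcal D_l(X)(\varphi)+\mathcal D_l(TX)(\varphi)$ equals $W'(b)$ times a contour integral. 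Since $W'=\partial W$ this is zero in $R/(\partial W)$, proving $\mathcal D_l(X)=-\mathcal D_l(TX)$. The statement for $\mathcal D_r$ is identical, now differentiating $\check d^X_0(a,x)\check d^X_1(a,x)=(W(a)-W(x))\id$ in the variable $a$ that occurs in the denominator of~\eqref{Dr}.

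\emph{Part (i).} From~\eqref{cone}, writing $\text{C}(\varphi)_0=X_1\oplus Y_0$ and $\text{C}(\varphi)_1=X_0\oplus Y_1$, the twisted differentials are the block-lower-triangular matrices $\check d^{\text{C}(\varphi)}_0=\big(\begin{smallmatrix}-\check d^X_1&0\\\check\varphi_1&\check d^Y_0\end{smallmatrix}\big)$ and $\check d^{\text{C}(\varphi)}_1=\big(\begin{smallmatrix}-\check d^X_0&0\\\check\varphi_0&\check d^Y_1\end{smallmatrix}\big)$. The trace in~\eqref{Dl} only sees the diagonal blocks of $\check d^{\text{C}(\varphi)}_0(x,b)\check d^{\text{C}(\varphi)}_1(x,a)$, so the off-diagonal $\varphi$-terms drop out and one is left with $\operatorname{tr}(\check d^X_1(x,b)\check d^X_0(x,a))+\operatorname{tr}(\check d^Y_0(x,b)\check d^Y_1(x,a))$. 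The first summand is exactly the numerator of $\mathcal D_l(TX)$ and the second that of $\mathcal D_l(Y)$, whence $\mathcal D_l(\text{C}(\varphi))=\mathcal D_l(TX)+\mathcal D_l(Y)$. Part (ii) then yields $\mathcal D_l(X)-\mathcal D_l(Y)+\mathcal D_l(\text{C}(\varphi))=0$; the computation for $\mathcal D_r$ is the same.

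The routine pieces are the block multiplication for the cone and the sign bookkeeping under $T$. The genuine obstacle is the residue step in part (ii): one must verify that the $(b-a)^{-1}$ prefactor combines with the $a\leftrightarrow b$-symmetric numerator to produce precisely $\partial_b$ of the matrix-factorisation relation, so that the outcome is a multiple of $\partial W$. Making the passage from $[h(a,b)]^{\!\wedge}$ (with $h$ carrying a $(b-a)^{-1}$ factor) to a well-defined element of $\End_{\MFW}(I)$ rigorous is where the care lies; everything else is formal.
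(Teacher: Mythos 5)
Your proof is correct, but it runs in the opposite logical direction to the paper's. The paper proves (i) first --- asserting it follows immediately from \eqref{DlDr} and \eqref{cone} since the off-diagonal, $\varphi$-dependent blocks of the cone cannot contribute to the trace --- and then deduces (ii) from (i) by applying the triangle relation to the rotated triangle $X\to 0\to TX\to TX$, using $\text{C}(\id_X)\cong 0$ and the rotation axiom of triangulated categories. You prove (ii) first by a direct residue computation, and then get (i) from the block-triangular trace computation together with (ii). Your ordering is in fact the more transparent one: dropping the $\varphi$-blocks only gives $\mathcal D_{l}(\text{C}(\varphi))=\mathcal D_{l}(TX)+\mathcal D_{l}(Y)$ (the diagonal blocks of the cone are those of $TX\oplus Y$, the two minus signs cancelling in the product), so turning this into the triangle relation still needs $\mathcal D_{l}(TX)=-\mathcal D_{l}(X)$, i.e.\ exactly part (ii); the paper's ``immediately'' conceals this step, and its subsequent derivation of (ii) from (i) would otherwise appear circular. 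What the paper's route buys is that, granting (i), part (ii) is purely formal (no analysis, only triangulated axioms and $\text{C}(\id_X)\cong 0$); what your route buys is self-containedness: (ii) is established by hand, and (i) follows without invoking the rotation axiom at all. Your residue argument is sound --- the $a\leftrightarrow b$-symmetrised numerator $N(x,a,b)$ gives a polynomial $H(a,b)=\tfrac{1}{2\pi\I}\oint\frac{N(x,a,b)\,\D x}{W(x)-W(b)}$ vanishing at $a=b$, so dividing by $b-a$ and evaluating at $a=b$ yields $\mp\,\partial_a H|_{a=b}$, which by the product rule equals $W'(b)\dim\check X_1$ times a contour integral, hence zero in $\End_{\MFW}(I)\cong R/(\partial W)$. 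Only your wording ``this numerator vanishes at $a=b$'' is imprecise: $N(x,b,b)=(W(x)-W(b))\,\dim\check X_1\,\check\varphi_0(x)$ is not zero; it is $H$, not $N$, that vanishes there, as your parenthetical remark in effect acknowledges.
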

\begin{proof}
(i): This follows immediately from the explicit expressions~\eqref{DlDr} and~\eqref{cone} as the (off-diagonal) morphism dependent part of the cone cannot contribute to the trace in~\eqref{DlDr}. 

(ii): By the axioms of triangulated categories, $X\to Y\to\text{C}(\varphi)\to TX$ is a distinguished triangle iff $Y\to\text{C}(\varphi)\to TX\to TY$ is distinguished, and one easily checks that $\text{C}(\varphi)\cong 0$ for an isomorphism~$\varphi$. Hence if we set $X=Y$ and $\varphi=\id_{X}$, then it follows from part~(i) that $\mathcal D_{l/r}(X) = -\mathcal D_{l/r}(TX)$. 
\end{proof}

The above lemma shows that the maps $\mathcal D_{l/r}$ descend to $K_0(\MFW)$. From their explicit form one also sees that the operators $\mathcal D_{l/r}$ are degree preserving. In other words:  
\begin{proposition}\label{prop:map-on-K0}
In the one-variable case the maps $\mathcal D_{l/r}$ induce ring (anti-) homomorphisms
\be\label{DDllrr}
K_0(\MFW) \longrightarrow \End^0(\End_{\MFW}(I)) \, .
\ee
\end{proposition}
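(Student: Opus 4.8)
The plan is to assemble the statement from the structural results already in place, since all the substantive input is provided by Lemmas~\ref{Dprop-general}, \ref{Dlr-and-triangles} and~\ref{K0-compatible}; no further computation with the duality maps is needed. Recall that $K_0(\MFW)$ is by definition the free abelian group~$A$ on isomorphism classes of objects, modulo the subgroup generated by the relators $[X]-[Y]+[\text{C}(\varphi)]$ arising from the distinguished triangles~\eqref{dt}. First I would extend $\mathcal D_{l}$ and $\mathcal D_{r}$ $\Z$-linearly to group homomorphisms $A \to \End_{\C}(\End_{\MFW}(I))$, sending $\sum_i n_i [X_i]$ to $\sum_i n_i\, \mathcal D_{l/r}(X_i)$. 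By Lemma~\ref{Dprop-general}(ii) the values $\mathcal D_{l/r}(X)$ depend only on the isomorphism class of~$X$, so these assignments are well defined on the generators of~$A$.

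Next I would show that they descend to $K_0(\MFW)$, which amounts to checking that each relator is sent to zero. This is precisely Lemma~\ref{Dlr-and-triangles}(i): for any distinguished triangle $X \to Y \to \text{C}(\varphi)\to TX$ one has $\mathcal D_{l/r}(X) - \mathcal D_{l/r}(Y) + \mathcal D_{l/r}(\text{C}(\varphi)) = 0$. (Lemma~\ref{Dlr-and-triangles}(ii), $\mathcal D_{l/r}(TX) = -\mathcal D_{l/r}(X)$, is then the special case obtained by rotating the triangle $X \xrightarrow{\id} X \to 0 \to TX$, and provides a consistency check.) Hence $\mathcal D_{l}$ and $\mathcal D_{r}$ induce well-defined additive maps $K_0(\MFW) \to \End_{\C}(\End_{\MFW}(I))$.

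I would then verify that these maps land in the grade-preserving endomorphisms $\End^0(\End_{\MFW}(I))$. This is the one point that requires more than bookkeeping: reading off the explicit formulae~\eqref{DlDr}, one checks that the degree in $R/(\partial W)\cong\End_{\MFW}(I)$ of $\mathcal D_{l/r}(X)(\varphi)$ equals that of~$\varphi$, so each $\mathcal D_{l/r}(X)$ preserves the grading. Conceptually this reflects subsection~\ref{R-charge-short}, where $\ev_X$, $\coev_X$, $\lambda_X$ and~$\rho_X$ are all shown to carry vanishing R-charge; since $\mathcal D_{l/r}(X)$ is built solely from these maps together with~$\varphi$, it commutes with the grading operator on $\End_{\MFW}(I)$ and is therefore grade preserving. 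I expect this to be the main (though still routine) obstacle.

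Finally, for the ring structure I would invoke Lemma~\ref{K0-compatible}, which ensures that the tensor product descends to an associative product on $K_0(\MFW)$ with unit $[I]$, while the target $\End^0(\End_{\MFW}(I))$ is a unital ring under composition with unit $\id_{\End_{\MFW}(I)}$. Unitality of $\mathcal D_{l/r}$ is Lemma~\ref{Dprop-general}(i), namely $\mathcal D_{l/r}([I]) = \mathcal D_{l/r}(I) = \id$, and additivity has already been arranged. Multiplicativity follows from Lemma~\ref{Dprop-general}(iii) and~(iv): using $[X]\cdot[Y] = [X\otimes Y]$ one gets $\mathcal D_{l}([X]\cdot[Y]) = \mathcal D_{l}(Y)\circ\mathcal D_{l}(X)$, so $\mathcal D_{l}$ reverses the order of multiplication and is a ring anti-homomorphism, whereas $\mathcal D_{r}([X]\cdot[Y]) = \mathcal D_{r}(X)\circ\mathcal D_{r}(Y)$ exhibits $\mathcal D_{r}$ as a ring homomorphism. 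Together these establish the proposition.
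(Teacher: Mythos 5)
Your proposal is correct and takes essentially the same approach as the paper: descent to $K_0(\MFW)$ via lemma~\ref{Dlr-and-triangles}(i), degree preservation read off from the explicit formulae~\eqref{DlDr} (equivalently, the R-charge-zero property of the duality data), and the unital ring (anti-)homomorphism properties from lemmas~\ref{K0-compatible} and~\ref{Dprop-general}. The paper compresses exactly this argument into the two sentences preceding the proposition, so your write-up just makes the same reasoning explicit.
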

We expect this to remain true in the case of many variables.

\begin{remark}\label{remark-defect-action}
There are alternative methods to compute the action of defects on bulk fields. Instead of using rigidity as in~\eqref{defect-loops} one may also employ the folding trick. Indeed, it suggests that an action $\mathcal D_{X}$ of a defect~$X$ on a bulk field~$\varphi$ is obtained as the one-point-correlator of~$\varphi$, viewed as a field in the folded theory, in the presence of the boundary condition $B_{X}$ corresponding to~$X$. More precisely, one expects
\be\label{folding-defect-action}
\left\langle \mathcal D_{X}(\varphi) \, \psi \right\rangle
=
\left\langle\;\,
\begin{tikzpicture}[very thick,scale=0.8,color=blue!50!black, inner sep=7mm, baseline] 
\node[circle,circular glow={fill= blue!65!white},fill=white,draw= blue!50!black,thick] at (0,0) {}; 
\draw[line width=0pt] (0,0) node[inner sep=2pt] (f) {{\small$\varphi\otimes\psi$}}; 
\draw[line width=0pt] (0,-0.85) node[inner sep=2pt] (f) {{\small$B_{X}$}};
\end{tikzpicture}
\;\,\right\rangle
\ee
to hold for all bulk fields $\varphi,\psi$, where the left-hand side is a bulk correlator, and the right-hand side is a one-point-correlator of a bulk field in the presence of a boundary condition. 

In the case of topologically B-twisted Landau-Ginzburg models, such correlators can be computed with the residue formulas of~\cite{v1991} and~\cite{kl0305, hl0404184} (see also~\cite{s0904.1339, m0912.1629, dm1004.0687}): The two-point-correlator in the bulk is given by
\be\label{top-bulk-metric}
\langle \varphi \, \psi \rangle = \frac{1}{(2\pi\I)^N} \oint_{ \{ |\partial_{i}W| =1\} } \frac{\varphi\psi\, \D x_{1}\wedge\ldots\wedge\D x_{N}}{\partial_{1}W\ldots \partial_{N}W} \, ,
\ee
and the one-point-correlator of a bulk field~$\varphi$ in the presence of a boundary condition described by a matrix factorisation~$Q$ is
\be\label{top-bulk-bound}
\langle \varphi \rangle_{Q} = \frac{1}{(2\pi\I)^N} \oint_{ \{ |\partial_{i}W| =1\} } \frac{\varphi \, \operatorname{str}\left( \partial_{1} Q \ldots \partial_{N} Q \right) \D x_{1}\wedge\ldots\wedge\D x_{N}}{\partial_{1}W\ldots \partial_{N}W} \, .
\ee
Thus we can read off from~\eqref{folding-defect-action} that the defect action $\mathcal D_{X}$ is given by
\be\label{folded-defect-action}
\varphi \longmapsto \frac{1}{(2\pi\I)^N} \oint_{ \{ |\partial_{x_{i}}W| =1\} } \frac{\varphi \,\operatorname{str}\left( \partial_{x_{1}} X \partial_{y_{1}} X \ldots \partial_{y_{N}} X \right) \D x_{1}\wedge\ldots\wedge\D x_{N}}{\partial_{x_{1}}W\ldots \partial_{x_{N}}W} 
\ee
where the right-hand side is an element of the Jacobi ring of~$W$ in the $y$-variables. In the one-variable case one can check that the above $\mathcal D_{X}$ precisely coincides with our defect operator $\mathcal D_{l}(X)$ in~\eqref{Dl} for all the classes of examples that we will discuss below. 

Another way to arrive at~\eqref{folded-defect-action} is to use the theory of differential graded categories as follows. The space of bulk fields is isomorphic to the Hochschild homology $HH_{\bullet}(\operatorname{DG}(W))$ \cite[thm.~5.7]{d0904.4713}, on which the action of $X\in\MFW$ induces a map \cite[thm.~3.4]{s0710.1937} in terms of the canonical pairing on $HH_{\bullet}(\operatorname{DG}(W))$ and the Chern character of~$X$. Using \cite[cor.~4.1.3]{pv1002.2116} to make this explicit for Landau-Ginzburg models, one recovers~\eqref{folded-defect-action}. 
\end{remark}

\subsection{Examples}\label{examples}

All explicitly known matrix bi-factorisations of $W=x^d$ are isomorphic to direct sums of two distinct families of objects in $\MF(x^d)$ \cite{o0302304, add0401}. One of these is formed by the factorised matrix bi-factorisations
\be\label{Fact}
F_{i,j} = \begin{pmatrix} 0&  \hat a^i \\ \hat a^{d-i} & 0\end{pmatrix} \tc 
\begin{pmatrix} 0&  \hat b^j \\ \hat b^{d-j} & 0\end{pmatrix} \, , \quad i,j\in\{ 1,\ldots,d-1 \} \, . 
\ee
The other family is made up of elements labelled by all subsets $S\subset \{0,\ldots,d-1\}$ and given by
\be\label{PS}
P_{S} = \begin{pmatrix}0& \hat p_{S} \\ \hat p_{\{0,\ldots,d-1\}\backslash S} & 0\end{pmatrix}
\ee
where $p_{S}(a,b) = \prod_{i\in S}(a-\eta^i b)$ and $\eta=\E^{2\pi\I/d}$. In this subsection we study the action of such defects on bulk fields. We shall find agreement (up to phases) with results obtained by a dual description in terms of rational conformal field theory, and that the maps~\eqref{DDllrr} are bijective when restricted to the subring of $\KMF(x^d))$ generated by the isomorphism classes of~\eqref{Fact} and~\eqref{PS}. 

It was shown in~\cite[sec.~3.3.2]{bg0503} that $[F_{i,j}]=0$ in $\KMF(x^d))$, and a direct computation using the explicit expressions~\eqref{DlDr} shows that also
\be\label{DlrF}
\mathcal D_{l/r}(F_{i,j}) = 0 \, ,
\ee
as has to be the case by lemma~\ref{Dlr-and-triangles}.

Turning to the rank-one matrix bi-factorisations~$P_{S}$, we will now compute $\mathcal D_{l}(P_{S})$. Let us identify~$x^{i}$ with~$(\begin{smallmatrix}\hat a^{i} & 0 \\ 0 & \hat a^{i}\end{smallmatrix})\in\End_{\MF(W)}(I)\cong R/(\partial W)$. Substituting~\eqref{PS} into~\eqref{Dl} we find
\begin{align}\label{DlPS}
\mathcal D_{l}(P_{S})(x^{i}) & =  \left[ \frac{1}{2\pi\I} \oint \frac{x^i \prod_{l\notin S} (x-\eta^l b) \prod_{l\in S} (x-\eta^l a) \D x}{(x^d-b^d)(b-a)}  \right]^{\!\wedge} \nonumber \\
& =  \sum_{k=0}^{d-1}  \left[ (\eta^k b)^i \frac{\prod_{l\notin S} (\eta^k b-\eta^l b) \prod_{l\in S} (\eta^k b-\eta^l a) }{\prod_{m\neq k} (\eta^k b - \eta^m b) (b-a)}  \right]^{\!\wedge} \nonumber \\
& =  \sum_{k\in S} \left[ \eta^{ki} b^i \frac{\prod_{l\notin S} (\eta^k b-\eta^l b) \prod_{l\in S, l\neq k} (\eta^k b-\eta^l a) \eta^k (b-a) }{\prod_{m\neq k} (\eta^k b - \eta^m b) (b-a)}  \right]^{\!\wedge} \nonumber\\
& = \sum_{k\in S}\eta^{(i+1)k} x^i \, ,
\end{align}
where we used that $\hat a = \hat b$ on $\End_{\MFW}(I)$ in the last step. Similarly one obtains
\be\label{DrPS}
\mathcal D_{r}(P_{S})(x^{i}) = \sum_{k\in S} \eta^{-(i+1)k} x^i \, .
\ee

In the one-variable case with potential $W(x)=x^d$ the bulk two-point-correlator~\eqref{top-bulk-metric} simplifies to $\langle x^i \, x^j \rangle = \delta_{i+j,d-2}$. Hence it follows from~\eqref{DlPS} and~\eqref{DrPS} that $\mathcal D_{l}$ and $\mathcal D_{r}$ are adjoint in the following sense. 
\begin{proposition}
Let $X\in\MF(x^d)$ be isomorphic to a direct sum of objects of the form~\eqref{Fact} and~\eqref{PS}. Then
\be\label{adjoint-defect-in-2pf}
\langle \mathcal D_{l}(X)(\varphi) \, \psi \rangle = \langle \varphi \, \mathcal D_{r}(X)(\psi) \rangle \, . 
\ee
\end{proposition}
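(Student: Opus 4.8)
The plan is to reduce the identity \eqref{adjoint-defect-in-2pf} to a pointwise check on a monomial basis, exploiting that both sides are linear in the bulk fields $\varphi,\psi$ and additive in the defect~$X$. First I would use the bilinearity of the bulk two-point pairing together with the $\C$-linearity of the maps $\mathcal D_{l}(X),\mathcal D_{r}(X)$ on $\End_{\MFW}(I)\cong R/(\partial W)$ to reduce to the case $\varphi = x^{i}$, $\psi = x^{j}$ with $0\le i,j\le d-2$. Next, since $\mathcal D_{l/r}$ are additive under direct sums (the trace in \eqref{DlDr} is additive on the block-diagonal twisted differentials of a direct sum; equivalently one may apply lemma~\ref{Dlr-and-triangles}(i) to a split triangle) and depend only on the isomorphism class of~$X$ by lemma~\ref{Dprop-general}(ii), it suffices to verify the identity separately for the two families of generating objects $F_{i,j}$ and $P_{S}$.

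For $X = F_{i,j}$ both sides vanish identically, as $\mathcal D_{l}(F_{i,j}) = \mathcal D_{r}(F_{i,j}) = 0$ by \eqref{DlrF}. The substantive case is $X = P_{S}$, where I would substitute the explicit diagonal actions \eqref{DlPS} and \eqref{DrPS}. Using $\langle x^{i}x^{j}\rangle = \delta_{i+j,d-2}$, the left-hand side becomes $\big(\sum_{k\in S}\eta^{(i+1)k}\big)\,\delta_{i+j,d-2}$ while the right-hand side becomes $\big(\sum_{k\in S}\eta^{-(j+1)k}\big)\,\delta_{i+j,d-2}$. Both sides therefore vanish unless $i+j = d-2$, so the only thing left to establish is that the two phase sums agree on this diagonal.

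The key step, and the only place where anything genuinely happens, is the observation that on the support $i+j=d-2$ of the pairing one has $(i+1)+(j+1)=d$, hence $-(j+1) = (i+1)-d \equiv i+1 \pmod{d}$, so that $\eta^{-(j+1)k} = \eta^{(i+1)k}$ for every $k$ because $\eta^{d}=1$. The two phase sums then coincide term by term, and \eqref{adjoint-defect-in-2pf} follows for $P_{S}$ and hence for all~$X$ in the stated class. I expect no real obstacle beyond this modular-arithmetic bookkeeping: the off-diagonal phases in $\mathcal D_{l}$ and $\mathcal D_{r}$ genuinely differ, but they are precisely the ones annihilated by the degenerate bulk pairing, and the adjointness survives only because the nonzero pairing is concentrated on the diagonal where $\eta^{-(j+1)} = \eta^{i+1}$.
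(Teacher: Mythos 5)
Your proof is correct and is essentially the paper's own argument: the proposition is presented there as an immediate consequence of \eqref{DlPS}, \eqref{DrPS}, \eqref{DlrF} and the diagonal pairing $\langle x^i \, x^j \rangle = \delta_{i+j,d-2}$, which is exactly the reduction you carry out. The only difference is one of explicitness --- you spell out the linearity/direct-sum/isomorphism reductions and the key mod-$d$ identity $\eta^{-(j+1)k}=\eta^{(i+1)k}$ on the support $i+j=d-2$, all of which the paper leaves implicit.
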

This result has a physical interpretation. Let us consider a worldsheet that is the Riemann sphere and that has two field insertions around one of which a topological defect is wrapped counterclockwise. As the defect is topological, the associated correlator has the same value if the defect is moved around the sphere to wrap the second field insertion: 
\be\label{beauty}
\left\langle
\begin{tikzpicture}[baseline]
\def\R{1.85}
\def\angEl{45}
\filldraw[ball color= white!70!blue,draw=white] (0,0) circle (\R);
\DrawLatitudeCircleU[\R,rotate=130,very thick, blue]{65}
\fill (-0.95,-0.83) circle (1pt) node[above] {{\small$\varphi$}}; 
\fill (0.95,-0.83) circle (1pt) node[above] {{\small$\psi$}}; 
\end{tikzpicture}
\right\rangle
=
\left\langle
\begin{tikzpicture}[baseline]
\def\R{1.85}
\def\angEl{45}
\filldraw[ball color= white!70!blue,draw=white] (0,0) circle (\R);
\DrawLongitudeCircle[\R]{80}
\fill (-0.95,-0.83) circle (1pt) node[above] {{\small$\varphi$}}; 
\fill (0.95,-0.83) circle (1pt) node[above] {{\small$\psi$}}; 
\end{tikzpicture}
\right\rangle
=
\left\langle
\begin{tikzpicture}[baseline]
\def\R{1.85}
\def\angEl{45}
\filldraw[ball color= white!70!blue,draw=white] (0,0) circle (\R);
\DrawLatitudeCircle[\R,rotate=-130, very thick, blue]{65}
\fill (-0.95,-0.83) circle (1pt) node[above] {{\small$\varphi$}}; 
\fill (0.95,-0.83) circle (1pt) node[above] {{\small$\psi$}}; 
\end{tikzpicture}\right\rangle .
\ee
Such a relation is expected to hold in any category of matrix bi-factorisations. Indeed, if we replace $\mathcal D_{l}(X)$ by $\mathcal D_{X}$ as in~\eqref{folded-defect-action} and $\mathcal D_{r}(X)$ by $\mathcal D_{X^\vee}$, then one easily checks that equation~\eqref{adjoint-defect-in-2pf} holds in general. 

We close this subsection by proving that the maps~\eqref{DDllrr} are bijective on all explicitly known matrix bi-factorisations of~$x^d$. 
\begin{proposition}\label{Grothmap}
The linear maps
\be
\mathcal D_{l/r} : \KMF(x^d)) \otimes_\Z \C \longrightarrow \End^0(\End_{\MF(x^d)}(I))
\ee
are surjective algebra (anti-)homomorphisms. Furthermore, when restricted to the subalgebra generated by elements of type~\eqref{Fact} and~\eqref{PS} they are isomorphisms. 
\end{proposition}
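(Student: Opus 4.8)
The plan is to identify the target precisely, prove surjectivity by an explicit discrete-Fourier argument, and then pin down the subalgebra generated by \eqref{Fact} and \eqref{PS} by computing its dimension. First I would determine the target. Since $W=x^d$ we have $\End_{\MF(x^d)}(I)\cong R/(\partial W)=\C[x]/(x^{d-1})$, with basis $\{1,x,\ldots,x^{d-2}\}$. Under the R-charge grading of subsection~\ref{R-charge-short} the monomial $x^i$ carries charge $2i/d$, and these $d-1$ charges are pairwise distinct; hence a grade-preserving endomorphism must act diagonally in this basis. Therefore $\End^0(\End_{\MF(x^d)}(I))$ is exactly the algebra of diagonal operators, of dimension $d-1$, and under composition it is isomorphic to $\C^{d-1}$ with componentwise product. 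That $\mathcal D_{l/r}$ are algebra (anti-)homomorphisms which descend to $\KMF(x^d))$ is already contained in proposition~\ref{prop:map-on-K0} and lemma~\ref{Dlr-and-triangles}; tensoring with $\C$ extends them $\C$-linearly, so only surjectivity and the bijectivity claim remain.

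For surjectivity I would use the explicit eigenvalues \eqref{DlPS}: the operator $\mathcal D_l(P_{\{k\}})$ is diagonal with eigenvalue $\eta^{(i+1)k}$ on $x^i$. Writing $m=i+1\in\{1,\ldots,d-1\}$, the eigenvalue vectors of the $d$ operators $\mathcal D_l(P_{\{k\}})$, $k=0,\ldots,d-1$, are the columns of $M=(\eta^{mk})_{1\le m\le d-1,\,0\le k\le d-1}$. This is the invertible $d\times d$ discrete Fourier matrix $(\eta^{mk})_{0\le m,k\le d-1}$ with its $m=0$ row deleted, so it has full row rank $d-1$, and its columns therefore span $\C^{d-1}$. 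Consequently the $\mathcal D_l(P_{\{k\}})$ span all of $\End^0$ and $\mathcal D_l$ is surjective; the identical argument with $\eta^{-mk}$ from \eqref{DrPS} handles $\mathcal D_r$.

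Now let $A\subset\KMF(x^d))\otimes_\Z\C$ be the subalgebra generated by \eqref{Fact} and \eqref{PS}. Since the spanning operators $\mathcal D_l(P_{\{k\}})$ already lie in $\mathcal D_l(A)$, the restriction $\mathcal D_l|_A$ is surjective, giving $\dim_\C A\ge d-1$; it then suffices to prove $\dim_\C A\le d-1$ and conclude that $\mathcal D_{l/r}|_A$ is an isomorphism of $(d-1)$-dimensional algebras. To bound the dimension I would establish three facts in $\KMF(x^d))$: (a) $[F_{i,j}]=0$, as recorded in~\eqref{DlrF} and~\cite{bg0503}; (b) the fusion rule $[P_{\{j\}}]\cdot[P_{\{k\}}]=[P_{\{(j+k)\bmod d\}}]$, so the singletons form a cyclic group $\Z_d$ under multiplication with unit $[P_{\{0\}}]=[I]$; and (c) the additivity relation $[P_S]=\sum_{k\in S}[P_{\{k\}}]$. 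Together (a)--(c) show that $A$ is spanned by $\{[P_{\{k\}}]:k=0,\ldots,d-1\}$ and is generated as an algebra by $g:=[P_{\{1\}}]$. Finally, for $S=\{0,\ldots,d-1\}$ one has $p_{\{0,\ldots,d-1\}}(a,b)=a^d-b^d$ with complementary factor $p_\emptyset=1$, so one differential of $P_{\{0,\ldots,d-1\}}$ is the identity and hence invertible; thus $P_{\{0,\ldots,d-1\}}$ is contractible and $[P_{\{0,\ldots,d-1\}}]=0$. Combined with (c) this gives $\sum_{k=0}^{d-1}g^k=0$, so $A\cong\C[g]/(1+g+\cdots+g^{d-1})\cong\C^{d-1}$, whence $\dim_\C A=d-1$ and both $\mathcal D_{l}|_A$ and $\mathcal D_{r}|_A$ are isomorphisms.

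The main obstacle is establishing (b) and (c) at the level of the Grothendieck ring. For (b) I would compute the tensor product \eqref{XtensorY} of the rank-one factorisations $P_{\{j\}}$ and $P_{\{k\}}$, fusing over the middle variable, and identify the contractible direct summand, leaving $P_{\{(j+k)\bmod d\}}$. For (c) I would construct, for $k\notin S$, an explicit distinguished triangle relating $P_S$, $P_{S\cup\{k\}}$ and $P_{\{k\}}$ (built from the factorisation $p_{S\cup\{k\}}=p_S\cdot(a-\eta^k b)$), which yields $[P_{S\cup\{k\}}]=[P_S]+[P_{\{k\}}]$, and then iterate. Both computations are reassuringly consistent with the defect action, since \eqref{DlPS} already shows that $\mathcal D_l(P_{\{j\}})\circ\mathcal D_l(P_{\{k\}})$ and $\mathcal D_l(P_{\{(j+k)\bmod d\}})$ have the same eigenvalues and that $\mathcal D_l(P_S)=\sum_{k\in S}\mathcal D_l(P_{\{k\}})$. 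However, since $\mathcal D_l$ is not yet known to be injective on all of $\KMF(x^d))$, these consistency checks do not by themselves produce the $K_0$-relations, which is precisely why the explicit tensor-product and cone computations are needed.
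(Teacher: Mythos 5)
Your proposal is correct, and it reaches the same two milestones as the paper's proof — surjectivity via the Fourier/Vandermonde argument on the singleton defects $\mathcal D_{l/r}(P_{\{k\}})$, and $K_0$-relations among the $[P_S]$ established by explicit distinguished triangles — but it finishes the bijectivity claim by a genuinely different argument. The paper proves the single inclusion--exclusion relation $[P_S]+[P_{S'}]=[P_{S\cup S'}]+[P_{S\cap S'}]$ (matching \eqref{DSSs}) via one morphism $\Phi$ whose cone is $P_{S'}$, cf.~\eqref{ker-f-triangle}, and then runs a kernel-factorisation argument: it introduces the vector space $V=\C\{e_S\}$, the surjection $f(e_S)=\mathcal D_{l/r}(P_S)$ and the map $g(e_S)=[P_S]$, checks that $g$ annihilates a basis of $\operatorname{Ker}(f)$, and concludes bijectivity from the resulting commuting diagram. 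You instead iterate the disjoint case of that triangle to get $[P_S]=\sum_{k\in S}[P_{\{k\}}]$, add the $\Z_d$ fusion rule $[P_{\{j\}}]\cdot[P_{\{k\}}]=[P_{\{(j+k)\bmod d\}}]$ — a computation the paper never performs — and obtain the presentation $\mathcal P\cong\C[g]/(1+g+\cdots+g^{d-1})\cong\C^{d-1}$, after which surjectivity plus equality of dimensions finishes the proof. Each route has a payoff: yours gives a concrete identification of the Grothendieck subalgebra as a quotient of the group algebra of $\Z_d$, and — importantly — it explicitly establishes that the linear span of the $[P_S]$ is closed under multiplication. The paper's argument uses this closure tacitly when it asserts that $g:V\to\mathcal P$ is \emph{surjective} onto the generated \emph{subalgebra}; that step silently relies on the known fusion of the $P_S$ defects (computed in the cited earlier work), which is exactly what your item (b) supplies. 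Conversely, the paper's linear-algebra endgame avoids the tensor-product computation altogether and treats all pairs $S,S'$ uniformly. Your two deferred computations are genuine obligations but both are sound as sketched: (c) is precisely the paper's triangle \eqref{ker-f-triangle} specialised to $S'=\{k\}$ with $S\cap S'=\emptyset$ (the morphism $(\widehat{(a-\eta^k b)},1):P_S\to P_{S\cup\{k\}}$ has cone homotopy equivalent to $P_{\{k\}}$ because the unit entry allows one to split off a contractible summand), and (b) holds since the singleton defects implement the $\Z_d$ symmetry $x\mapsto\eta^k x$; your closing remark that the eigenvalue identities \eqref{DlPS}, \eqref{DrPS} cannot substitute for these $K_0$-level facts, because injectivity is not yet known, correctly identifies the one circularity that must be avoided.
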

\begin{proof}
To see that~$\mathcal D_{l}$ is surjective (the case of~$\mathcal D_{r}$ works analogously) we will show that $\{\mathcal D_{l}(P_{\{ k \}})\}$ with $k\in\{ 0,\ldots, d-2\}$ is a basis for $\End^0(\End_{\MF(x^d)}(I))$. As $\End_{\MF(x^d)}(I)\cong R/(\partial x^d)$ is $(d-1)$-dimensional, an arbitrary element of $\End^0(\End_{\MF(x^d)}(I))$ is of the form $\text{diag}(\alpha_{0},\ldots,\alpha_{d-2})$ with~$\alpha_{l}$ in~$\C$. Let  us write such an element as $\sum_{k=0}^{d-2} \alpha_{l}\chi^l$. Then by~\eqref{DlPS}, $\mathcal D_{l}(P_{\{ k \}})$ is identified with $\sum_{l=0}^{d-2} \eta^{k(l+1)}\chi^l$. 

If we define numbers $\beta_{k}=\frac{1}{d}\sum_{i=0}^{d-1} \alpha_{i} \eta^{-k(i+1)}$, then any $\sum_{k=0}^{d-2} \alpha_{l}\chi^l$ is given by $\sum_{k=0}^{d-1} \beta_{k} \mathcal D_{l}(P_{\{ k \}}) = \sum_{k=0}^{d-1} \beta_{k} \sum_{l=0}^{d-2} \eta^{k(l+1)} \chi^l = \sum_{k=0}^{d-2} \alpha_{l}\chi^l$,
where we set $\alpha_{d-1}=0$. Thus~$\mathcal D_{l}$ and~$\mathcal D_{r}$ are surjective. 

As a first step to prove the second part of the proposition, we observe that it follows immediately from~\eqref{DlPS} and~\eqref{DrPS} that
\be\label{DSSs}
\mathcal D_{l/r}(P_{S}) + \mathcal D_{l/r}(P_{S'}) = \mathcal D_{l/r}(P_{S\cup S'}) + \mathcal D_{l/r}(P_{S\cap S'})
\ee
for all $S,S'\subset\{0,\ldots,d-1\}$. Let us denote by $\mathcal P\subset \KMF(x^d)) \otimes_\Z \C$ the subalgebra generated by all $[P_{S}]$. Then for $\mathcal D_{l/r}|_{\mathcal P}$ to be injective, we also must have $[P_{S}]+[P_{S'}]=[P_{S\cup S'}]+[P_{S\cap S'}]$ in~$\mathcal P$ for compatibility with~\eqref{DSSs}. This is indeed true, as we have distinguished triangles
\be\label{ker-f-triangle}
\xymatrix{%
P_{S} \ar[r]^-{\Phi} & P_{S\cup S'} \ar[r] & \text{C}(\Phi) \ar[r] & T P_{S}
}
\ee
where
\be
\Phi =
\begin{pmatrix}
\hat p_{S'\backslash (S\cap S')} & 0 \\
1&0\\
0&1\\
0& \hat p_{S\backslash (S\cap S')}
\end{pmatrix} ,
\ee
and by row and column manipulations one can show that $\text{C}(\Phi)\cong P_{S'}$. 

We now conclude the argument that $\mathcal D_{l/r}|_{\mathcal P}$ is bijective by simple linear algebra. Let us introduce a vector space~$V$ whose basis is labelled by all non-empty subsets~$S$ of $\{ 0,\ldots, d-1\}$, $V=\C \{ e_{S} \}$, and a linear map $f: V\to\End^0(\End_{\MF(x^d)}(I))$ with $f(e_{S})=\mathcal D_{l/r}(P_{S})$. Since~$f$ is surjective, $\operatorname{Ker}(f)$ is of codimension $d-1$. A convenient basis of $\operatorname{Ker}(f)$ is 
\be
  \{ e_{\{0,\ldots,d-1\}} \} \cup \{ e_{S}-{\textstyle \sum_{i\in S}e_{\{ i \}}} |\, |S|> 2 \} \, ,
\ee  
which is missing $e_{\{i\}}$ with $i=0,\dots,d-2$ to be a basis of $V$ (the element $e_{\{0\}} + \cdots +e_{\{d-1\}}$ is contained in the span of the above vectors) and so has the correct dimension.
Define the linear map $g : V \rightarrow\KMF(x^d)) \otimes_\Z \C$ via $g(e_{S})=[P_{S}]$. Since $P_{\{0,\dots,d-1\}} \cong 0$ in $\MFW$ we have $g(e_{\{0,\ldots,d-1\}}) = 0$, and from the triangle \eqref{ker-f-triangle} we see that $g(e_{S}+e_{S'}-e_{S\cup S'}-e_{S\cap S'})=0$. One checks that every vector in the above basis of $\operatorname{Ker}(f)$ can be written as a linear combination of vectors on which $g$ vanishes. Thus $g$ factors through $V/\operatorname{Ker}(f)$, and we have the commuting diagram
\be
\xymatrix{%
V/(\operatorname{Ker}(f)) \ar@{^{(}->>}[drrrr] \ar@{->>}[ddr] &&&& \\
& V \ar@{->>}[ul] \ar@{->>}[rrr]^-{f} \ar@{->>}[d]^-{g} &&& \End^0(\End_{\MF(x^d)}(I)) \\
& \mathcal P \ar@{->>}[rrru]_-{\mathcal D_{l/r}|_{\mathcal P}} &&&
}
\ee
which implies that $\mathcal D_{l/r}|_{\mathcal P}$ is bijective. 
\end{proof}

\subsection{Comparison with conformal field theory results}\label{CFTcomp}

We will now review the description of topological defects in rational conformal field theories associated to Landau-Ginzburg models with potential~$x^d$; then we shall compare defect actions and pivotal structures in both theories.

\subsubsection[Topological defects in $\mathcal N=2$ minimal models]{Topological defects in $\boldsymbol{\mathcal N=2}$ minimal models}

The vertex operator algebras $\sVir_d$ for the $\mathcal N=2$ minimal models form a discrete series labelled by an integer $d \in \Z_{\ge 3}$ and have central charge $c=3-6/d$. The bosonic part $(\sVir_d)_\text{bos}$ of $\sVir_d$ can be obtained via the coset construction from $\big(\widehat{\mathfrak{su}}(2)_{d-2} \oplus \widehat{\mathfrak{u}}(1)_4 \big) / \widehat{\mathfrak{u}}(1)_{2d}$. Accordingly, the isomorphism classes of irreducible representations of $(\sVir_d)_\text{bos}$ are labelled by the set
\begin{align}
\mathcal I = \big\{ (l,m,s) \;|\; & l \in \{0,1,\dots,d-2\} , m \in \{0,1,\dots,2d-1\},\nonumber \\
 & s \in \{0,1,2,3\}, l+m+s \text{ even} \big\} / \sim
\end{align}
where the equivalence relation $\sim$ identifies $(l,m,s)$ with $(d-2-l,m+d,s+2)$ for all $(l,m,s) \in \mathcal I$. We denote elements of  $\mathcal I$ by $[l,m,s]$, and hence we have $[l,m,s] = [d-2-l,m+d,s+2]$. For each isomorphism class of irreducible representations one may choose a representative $R_{[l,m,s]}$. We denote the category of representations of $(\sVir_d)_\text{bos}$ by 
$\mathcal{C}_d^{\mathcal N=2}$ (it is a $\C$-linear semisimple abelian braided monoidal category, which is in addition ribbon and modular).

The modular $S$-matrix for the characters of $R_{[l,m,s]}$ can also be found from the coset construction, and in the present case it is a simple product of the individual $S$-matrices, up to an overall factor: 
\be
 S_{[l,m,s],[x,y,z]} = 2 S^{\widehat{\mathfrak{su}}(2)_{d-2}}_{l,x} (S^{\widehat{\mathfrak{u}}(1)_{d}}_{m,y})^{*} S^{\widehat{\mathfrak{u}}(1)_2}_{s,z}
\ee
where 
\be
 S^{\widehat{\mathfrak{su}}(2)_{d-2}}_{a,b} = \sqrt{\frac{2}{d}} \sin\Big( \frac{\pi}{d}(a+1)(b+1) \Big)
 \, , \quad 
 S^{\widehat{\mathfrak{u}}(1)_N}_{a,b} = \sqrt{\frac{1}{2N}} \E^{-\pi \I ab/N} \, .
\ee

We consider the A-type $\mathcal N=2$ minimal models. The bosonic part of their space of states is given by
\be
\mathcal{H}_\text{bos} = \bigoplus_{[l,m,s] \in \mathcal I}
R_{[l,m,s]} \otimes \bar R_{[l,m,-s]} \, .
\label{eq:N=2-mm-bulk}
\ee
The chiral primaries are the highest weight states $\phi_{l,l,0}$ in the direct summands 
$R_{[l,l,0]} \otimes \bar R_{[l,l,0]}$ for $l \in \{0,1,\dots,d-2\}$. The fields $\phi_{l,l,0}$ have left/right conformal weight given by $h=\bar h = l/(2d)$, which for chiral primaries is also equal to half the $U(1)$-charge. 

The two-point-correlator of two fields $\psi,\psi' \in \mathcal{H}_\text{bos}$ on the Riemann sphere $\mathbbm P^1$ is given by 
\be
 \langle \psi(z) \psi'(w) \rangle = \kappa_{\psi\psi'} \, (z-w)^{-h_{\psi}-h_{\psi'}} (z^*-w^*)^{-\bar h_{\psi}-\bar h_{\psi'}} \, .
\ee
If $\psi$ and $\psi'$ are quasi-primary, the constant $\kappa_{\psi\psi'}$ can be non-zero only if $h_{\psi} = h_{\psi'}$ and $\bar h_{\psi} = \bar h_{\psi'}$. Let $\tilde\phi_{d-2,d+2,0}$ be a ground state in $R_{[d-2,d+2,0]} \otimes \bar R_{[d-2,d+2,0]}$ such that 
$\langle \tilde\phi_{d-2,d+2,0}(z)\phi_{d-2,d-2,0}(w) \rangle \neq 0$. Note that in order to have a non-zero two-point-correlator, by $U(1)$-charge conservation, $\tilde\phi_{d-2,d+2,0}$ needs to have minus the charge of $\phi_{d-2,d-2,0}$.
  
Consider the three-point-correlator with $\tilde\phi_{d-2,d+2,0}$ placed at infinity (with standard local coordinates around infinity on $\mathbbm P^1$), and insertions of $\phi_{r,r,0}$ and $\phi_{s,s,0}$ at $z$ and $w$. For an appropriate normalisation of the fields we have
\be \label{3pt-with-charge}
 \langle \tilde\phi_{d-2,d+2,0}(\infty) \phi_{r,r,0}(z) \phi_{s,s,0}(w) \rangle = \delta_{r+s,d-2} \, .
\ee
There is no position dependence because by $U(1)$-charge conservation, the correlator can be non-zero only for $r+s=d-2$, and in this case $h_{[d-2,d+2,0]}-h_{[r,r,0]}-h_{[s,s,0]} = 0$. The above three-point-correlator will correspond to the two-point-correlator \eqref{top-bulk-metric} in the topological Landau-Ginzburg model, which in the one-variable case just reads $\langle x^r \, x^s \rangle = \delta_{r+s,d-2}$. 

\medskip

The possible defects (preserving the holomorphic and anti-holomorphic copy of $(\sVir_d)_\text{bos}$) can be computed using the methods of \cite{Petkova:2000ip} (as done in \cite{br0707.0922}) or those of \cite{Frohlich:2006ch} (as done in \cite{cr0909.4381}). One finds that the elementary defects are also labelled by the set $\mathcal{I}$; we denote them as
\be
 X_{[l,m,s]} 
 \, , \quad [l,m,s] \in \mathcal{I} \, .
\ee
From \cite{tft1,Frohlich:2006ch} we know that the topological defects $X_{[l,m,s]}$ are simple objects in a monoidal category $\mathcal{D}_d^{\mathcal N=2}$ (see \cite[sec.~3.1\,\&\,app.~A.2]{cr0909.4381} for more details in the case at hand). The tensor product corresponds to fusion of defects and the morphism spaces
$\text{Hom}_{\mathcal{D}_d^{\mathcal N=2}}(X_1 \otimes \ldots \otimes X_m, Y_1 \otimes \ldots \otimes Y_n)$ are the spaces of defect junction fields of left/right conformal weight $(0,0)$ that
are inserted at a junction point with $m$ incoming defect lines labelled $X_1,\dots,X_m$ and
$n$ outgoing defect lines labelled $Y_1,\dots,Y_n$. As is the case for any rational conformal field theory, the category $\mathcal{D}_d^{\mathcal N=2}$ is (left and right) rigid and has a pivotal structure \cite{Frohlich:2006ch}.

Let us denote the defect operators (acting on bulk fields) of a defect $X$ by $\mathcal{D}^{\text{CFT}}_{l/r}(X)$. According to \cite[app.\,A.2]{cr0909.4381}, braided induction provides a monoidal equivalence $\mathcal{C}_d^{\mathcal N=2} \cong \mathcal{D}_d^{\mathcal N=2}$. 
The description of CFT correlators via three-dimensional topological field theory (3dTFT) \cite{tft1,Frohlich:2006ch} shows that in this case the defect operators are simply given by ratios of $S$-matrix elements.
For a field~$\psi$ in $R_{[l,m,s]} \otimes \bar R_{[l,m,-s]}$ one finds (see also \cite{br0707.0922})
\be
 \mathcal{D}^{\text{CFT}}_{l}(X_{[x,y,z]})(\psi) = \frac{S_{[l,m,s],[x,-y,-z]}}{S_{[l,m,s],[0,0,0]}} \psi
 \, , \quad
 \mathcal{D}^{\text{CFT}}_{r}(X_{[x,y,z]})(\psi) = \frac{S_{[l,m,s],[x,y,z]}}{S_{[l,m,s],[0,0,0]}} \psi \, .
\ee
We will be particularly interested in the action of defects $X_{x,y,0}$ on the chiral primaries $\phi_{l,l,0}$, for which we get explicitly
\begin{align}
 \mathcal{D}^{\text{CFT}}_{l}(X_{[x,y,0]})(\phi_{l,l,0}) &= \frac{\sin( \pi(x+1)(l+1)/d )}{\sin(\pi (l+1)/d)} \E^{+\pi \I y l/d} \phi_{l,l,0} \, ,
\nonumber\\
 \mathcal{D}^{\text{CFT}}_{r}(X_{[x,y,0]})(\phi_{l,l,0}) &= \frac{\sin( \pi(x+1)(l+1)/d )}{\sin(\pi (l+1)/d)} \E^{- \pi \I y l/d} \phi_{l,l,0} \, .
\label{X_xyz-on-phill0}
\end{align}

One can prove in the 3dTFT approach \cite{Frohlich:2006ch} that for all bulk fields $\psi,\psi' \in \mathcal{H}_\text{bos}$ and for all defects $X$ one has
\be
 \left\langle \mathcal{D}^{\text{CFT}}_{l}(X)\big(\psi(z)\big) \psi'(w) \right\rangle = 
 \left\langle \psi(z) \mathcal{D}^{\text{CFT}}_{r}(X)\big(\psi'(w)\big) \right\rangle \, ,
\ee
see the illustration~\eqref{beauty}. 
If $\psi$ and $\psi'$ are the identity field $\one$, this implies that 
\be \label{cft-def-Dl1=Dr1}
  \mathcal{D}^{\text{CFT}}_{l}(X) (\one) = \mathcal{D}^{\text{CFT}}_{r}(X) (\one) \, ,
\ee  
a result which holds for all rational conformal field theories whose left and right chiral symmetries coincide, so that they admit a description via the 3dTFT approach. The equality \eqref{cft-def-Dl1=Dr1} can also be read off from \eqref{X_xyz-on-phill0} upon setting $l=0$ as the multiplicative constant is then the quantum dimension of the representation $R_{[x,y,0]}$.

Again with the help of the 3dTFT approach one verifies that for all fields $\psi$,
\begin{align}
 &\mathcal{D}^{\text{CFT}}_{r}(X_{[x,y,z]})\big(\tilde\phi_{d-2,d+2,0}(w_1)\psi(w_2)\big)
\nonumber \\
 &= \frac{S_{[d-2,d+2,0],[x,y,z]}}{S_{[0,0,0],[x,y,z]}}
 \tilde\phi_{d-2,d+2,0}(w_1) \mathcal{D}^{\text{CFT}}_{r}(X_{[x,y,z]})\big(\psi(w_2)\big)
\nonumber  \\
 &= (-1)^{x+y} \E^{-2\pi \I y/d}
 \tilde\phi_{d-2,d+2,0}(w_1) \mathcal{D}^{\text{CFT}}_{r}(X_{[x,y,z]})\big(\psi(w_2)\big) \, .
\end{align}
Inserting this into \eqref{3pt-with-charge} results in
\begin{align}
&  \left\langle \tilde\phi_{d-2,d+2,0}(\infty) 
 \mathcal{D}^{\text{CFT}}_{l}(X_{[x,y,z]})\big(\phi_{r,r,0}(w_1) \big)
 \phi_{s,s,0}(w_2) 
 \right\rangle 
\nonumber \\
&  = (-1)^{z} \E^{-2\pi \I y/d}
 \left\langle \tilde\phi_{d-2,d+2,0}(\infty) 
 \phi_{r,r,0}(w_1) 
 \mathcal{D}^{\text{CFT}}_{r}(X_{[x,y,z]})\big( \phi_{s,s,0}(w_2) \big)
 \right\rangle  \, .
\label{3pt-with-charge-move-defect}
\end{align}

\subsubsection{Comparison of defect operators}

In \cite{br0707.0922}, $X_{[b,a+2b,0]}$ is identified as the conformal field theory equivalent of the  Landau-Ginzburg defect described by the matrix factorisation $P_{\{a,\dots,a+b\}}$ of \eqref{PS}. We write this as
\be\label{CFT-MF-fun-on-obj}
 F(X_{[b,2a+b,0]} ) = P_{\{a,\dots,a+b\}}
\ee
which will later provide the action of a functor $F$ on objects. The actions \eqref{DlPS} and \eqref{DrPS} of $P_{\{a,\dots,a+b\}}$ on $x^l$ can be rewritten as
\begin{align}
\mathcal{D}^\text{MF}_{l}(P_{\{a,\dots,a+b\}}) (x^l)
&= \frac{\sin( \pi(b+1)(l+1)/d )}{\sin(\pi (l+1)/d)} \E^{+\pi \I (l+1)(2a+b)/d} x^l \, ,
\nonumber \\
\mathcal{D}^\text{MF}_{r}(P_{\{a,\dots,a+b\}}) (x^l)
&= \frac{\sin( \pi(b+1)(l+1)/d )}{\sin(\pi (l+1)/d)} \E^{-\pi \I (l+1)(2a+b)/d} x^l \, .
\label{P_ab-on-xl}
\end{align}
The two actions \eqref{X_xyz-on-phill0} and \eqref{P_ab-on-xl} do not quite agree, for example
\be
  \mathcal{D}^{\text{MF}}_{l}(P_{\{a,\dots,a+b\}}) (1) = \E^{2\pi \I (2a+b)/d} \, \mathcal{D}^{\text{MF}}_{r}(P_{\{a,\dots,a+b\}}) (1) \, ,
\ee
so that \eqref{cft-def-Dl1=Dr1} does not hold for $\mathcal{D}^{\text{MF}}_{l/r}(X)$.
In general, if we define a linear map~$f$ from the space of chiral primaries to $\End_{\MF(x^d)}(I)$ by setting $f(\phi_{l,l,0}) = x^l$, then
\begin{align} 
\mathcal{D}^\text{MF}_{l}( F(X_{[x,y,0]}) ) (f( \phi_{l,l,0} ))
&= \E^{+\pi \I y/d} f( \mathcal{D}^\text{CFT}_{l}( X_{[x,y,0]} )  (\phi_{l,l,0}) )
\nonumber \, , \\
\mathcal{D}^\text{MF}_{r}( F(X_{[x,y,0]}) ) (f( \phi_{l,l,0}) )
&= \E^{-\pi \I y/d} f( \mathcal{D}^\text{CFT}_{r}( X_{[x,y,0]} )  (\phi_{l,l,0}) ) \, .
\label{action-on-chiral-prim-with-phase}
\end{align}
In fact, these prefactors are precisely what is needed in order to make~\eqref{adjoint-defect-in-2pf} and \eqref{3pt-with-charge-move-defect} fit to the observation that the pairing~\eqref{top-bulk-metric} 
is given by the three-point-correlator~\eqref{3pt-with-charge}.

\subsubsection{Comparison of pivotal structures}

Denote by $\mathcal{D}^{\mathcal N=2}_{d,s=0}$ the full subcategory of $\mathcal{D}^{\mathcal N=2}_{d}$ consisting of objects isomorphic to direct sums of objects of the form $X_{[l,m,0]}$; this is a monoidal subcategory (and hence also rigid and pivotal).
Similarly, let $(\mathcal{P}_{d})_{0}$ be the (non-full) subcategory of $\MF(x^d)$ whose objects are isomorphic to direct sums of objects of the form $P_{\{a,\dots,a+b\}}$ and whose morphisms are morphisms of R-charge zero in $\MF(x^d)$; this is again a monoidal (and rigid and pivotal) subcategory.

The assignment \eqref{CFT-MF-fun-on-obj} extends to an equivalence $F : \mathcal{D}^{\mathcal N=2}_{d,s=0} \rightarrow (\mathcal{P}_{d})_{0}$ of $\C$-linear semisimple categories (because it is bijective on representatives of the isomorphism classes of simple objects). It was conjectured in \cite{cr0909.4381} (and already established on the level of objects in  \cite{br0707.0922}) that $F$ can be extended to an equivalence $(F,F^2,F^0)$ of monoidal categories. From remark~\ref{left-rigid-all-equiv} we see that $(F,F^2,F^0)$ together with the right dualities of the source and target categories gives natural isomorphisms $\phi_X : F(X^\vee) \rightarrow F(X)^\vee$. 

We now want to see if $F$ is in addition pivotal in the sense of definition \ref{pivotal-functor-def}. This will turn out to be not the case, and to illustrate this we consider all pivotal structures on $\mathcal{D}^{\mathcal N=2}_{d,s=0}$ simultaneously by defining
\be
  \tilde \delta^{\text{CFT}}_X = \eta_X \circ \delta^{\text{CFT}}_X \, ,
\ee
where $\eta$ is a natural monoidal transformation of the identity functor on $\mathcal{D}^{\mathcal N=2}_{d,s=0}$. The functor $F$ will be pivotal for a unique choice of $\eta$. If this $\eta$ is different from the identity, $\mathcal{D}^{\mathcal N=2}_{d,s=0}$ and $(\mathcal{P}_{d})_{0}$ are not pivotally equivalent with their standard pivotal structures. We can fix $\eta_X$ on simple objects by noting that the requirement 
$\delta^{\text{MF}}_{F(X)} = (\psi_{X}^{-1})^\vee \circ \psi_{X^\vee} \circ F(\tilde \delta^{\text{CFT}}_X)$ from definition \ref{pivotal-functor-def}  implies
\be
 (F^0)^{-1} \circ F( \tev_X \circ (\eta_X \otimes \id_{X^\vee}) \circ \coev_X ) \circ F^0 = \tev_{F(X)} \circ \coev_{F(X)} \, .
\ee
Namely, if we write $\eta_{X_{[x,y,0]}} = \eta_{x,y} \id_{X_{[x,y,0]}}$, then $\eta_{x,y} F( \mathcal{D}^{\text{CFT}}_{r}(X_{[x,y,0]})(\one) )
= \mathcal{D}^\text{MF}_{r}(F(X_{[x,y,0]}))(\id)$, and comparison with~\eqref{action-on-chiral-prim-with-phase} shows that $\eta_{x,y} = \E^{-\pi \I y /d}$. Note that this is compatible with the fusion rules as is required for a natural monoidal transformation. Also, since $\eta_{x,y}$ is different from the identity,
\begin{quote} 
$\mathcal{D}^{\mathcal N=2}_{d,s=0}$ and $(\mathcal{P}_{d})_{0}$ are \textsl{not} pivotally equivalent. 
\end{quote}

\begin{remark}
This result raises the question if the difference of pivotal structures just observed can be avoided by redefining the pivotal structure on $\MF(x^d)$ in~\eqref{delta-for-MFbi}. In answer to this, we first note that it is of course possible to use the equivalence~$F$ to transport the pivotal structure from $ \mathcal{D}^{\mathcal N=2}_{d,s=0}$ to $(\mathcal{P}_{d})_{0}$, but it is not obvious that this pivotal structure then extends to all of $\MF(x^d)$. But rather than pursuing this point, we would like to offer an alternative perspective which we believe to be the correct interpretation of the above discrepancy. 

Our starting assumption is that a fundamental property of the notion of a ``defect operator'' should be that it factors through the relevant Grothendieck group of the category of defect conditions. This is satisfied on the CFT side (since $\mathcal{D}^{\mathcal N=2}_{d,s=0}$ is semi-simple, and so $K_{0}(\mathcal{D}^{\mathcal N=2}_{d,s=0})$ coincides with the free abelian group of isomorphism classes modulo the direct sum relation), and by proposition~\ref{prop:map-on-K0} it is also satisfied on the Landau-Ginzburg side with the pivotal structure~\eqref{delta-for-MFbi}. 

The property to factor through the Grothendieck group is tied to the pivotal structure and will in general fail if the pivotal structure is modified. In our example, this can be seen explicitly as follows. Let us consider the defect operator $\mathcal D_l$, the analysis for $\mathcal D_r$ gives the same result. Consider the identity defect~$I$ and its image $TI$ under the shift functor. The triangulated structure on $\MF(x^d)$ demands $[TI]=-[I]$ in $K_{0}(\MF(x^d))$. From the definition of the shift functor and the form of $P_S$ in~\eqref{PS} we conclude that $TI = TP_{\{0\}} \cong P_{\{1,\ldots,d-1\}}$. Formula~\eqref{P_ab-on-xl} now reproduces the answer we already knew from lemma~\ref{Dprop-general}(i) and proposition~\ref{prop:map-on-K0}: $\mathcal D_l(I)(x^0) = x^0$ and $\mathcal D_l(TI)(x^0) = - x^0$. On the other hand, it is equally easy to verify that the pivotal structure on $(\mathcal P_d)_0$ obtained by transporting that of $\mathcal{D}^{\mathcal N=2}_{d,s=0}$ does \textsl{not} factor through $K_{0}(\MF(x^d))$. Namely, by~\eqref{CFT-MF-fun-on-obj} we have $TI \cong F(X_{[d-2,d,0]})$ and from~\eqref{X_xyz-on-phill0} we see $\mathcal D^{\text{CFT}}_l(X_{[0,0,0]})(\one) = \one$ and $\mathcal D^{\text{CFT}}_l(X_{[d-2,d,0]})(\one) = \one$.

This calculation illustrates that the pivotal structure we chose in~\eqref{delta-for-MFbi} is adapted to the triangulated structure on $\MF(x^d)$ (in the sense that the defect operator factors through $K_{0}(\MF(x^d))$), while the pivotal structure obtained on $(\mathcal P_d)_0$ by transporting the one from $\mathcal{D}^{\mathcal N=2}_{d,s=0}$ (independent of whether it extends to all of $\MF(x^d)$ or not) is not. 
\end{remark}

The above observation shows that if we want to use the rigid structure to aid the comparison between matrix factorisation and conformal field theory data, we should look for quantities \textsl{independent} of the pivotal structure. One such quantity is the following. Let $\mathcal M$ be a $\C$-linear rigid monoidal category, and let~$\delta$ and $\delta' = \delta \circ \eta$ be two pivotal structures on $\mathcal M$, with $\eta$ a monoidal isomorphism of the identity functor on $\mathcal M$. For all $X \in \mathcal M$ we have
\be
  \ev_X \circ (\eta_{X^\vee} \otimes \eta_X) = 
  \ev_X \circ \eta_{X^\vee \otimes X} = 
  \eta_{I} \circ \ev_X = \ev_X \, .
\ee
Suppose now that $X$ is absolutely simple, i.\,e.~$\End(X) = \C \id_X$. Then there are constants $\xi$ and $\tilde\xi$ such that
$\eta_X = \xi \id_X$ and $\eta_{X^\vee} = \tilde\xi \id_{X^\vee}$, and the above equation implies $\xi \tilde\xi =1$. Then, denoting the linear maps \eqref{DlDr-def-general} for $\delta$ and $\delta'$ by $\mathcal D_{l/r}(X)$ and $\mathcal D_{l/r}'(X)$, respectively,
\begin{align}
  (\mathcal D_l'(X) \circ \mathcal D_r'(X))(\id)
  &=
  \mathcal D_r'(X^\vee \otimes X)(\id)
  =
  \xi \tilde\xi \mathcal D_r(X^\vee \otimes X)(\id)
\nonumber\\
  &=
  (\mathcal D_l(X) \circ \mathcal D_r(X))(\id) \, .
\end{align}
Thus, for absolutely simple $X$, $(\mathcal D_l(X)\circ \mathcal D_r(X))(\id)$ is independent of the pivotal structure. 

In the case of matrix factorisations, the relevant condition is that the space of degree preserving endomorphisms of $X$ is $\C \id_X$. In the example treated above, the $P_S$ have this property, and indeed from \eqref{X_xyz-on-phill0} and \eqref{P_ab-on-xl} one checks that
$(\mathcal{D}^\text{MF}_{l}( F(X_{[x,y,0]}) ) \circ \mathcal{D}^\text{MF}_{r}( F(X_{[x,y,0]}) ))(\id)$ gives the same multiple of $\id$ as
$(\mathcal{D}^\text{CFT}_{l}( X_{[x,y,0]} )\circ \mathcal{D}^\text{CFT}_{r}( X_{[x,y,0]} ))(\one)$ gives of $\one$.

\section{Discussion}\label{discuss}

In this paper we have studied dualities in the topological defect category $\MFW$ of Landau-Ginzburg models. More precisely, we have explicitly constructed the rigid and pivotal structure of $\MFW$ in the one-variable case, and then used it to compute the defect action on bulk fields. We also analysed the relation between the Grothendieck ring $\KMF(W))$ and R-charge preserving operators on the bulk algebra. For the case of many variables, we have suggested how to establish rigidity and  constructed a pivotal structure in general under the assumption of rigidity. Our results show that the CFT/LG correspondence can\textsl{not} straightforwardly be extended to the level of rigid and pivotal monoidal categories, yet still the comparison of quantities independent of the pivotal structures yields agreement for the action of defects on bulk fields. 

\medskip

Another way to think of dualities for defects between two Landau-Ginzburg models with the same potential is to embed them into a larger structure. Indeed, it is natural to organise all topological defects between all Landau-Ginzburg models into a bicategory $\mathcal{LG}$: its objects are ``theories'', i.\,e.~pairs $(R,W)$ of polynomial rings~$R$ and potentials $W\in R$ with an isolated singularity at the origin, 1-morphisms between $(R,W)$ and $(R',W')$ are matrix factorisations~$X$ of $W\tc 1 + 1\tc W'$, and 2-morphisms between~$X$ and~$Y$ are elements of $\Hom_{\operatorname{MF}(W\tc 1 - 1\tc W')}(X,Y)$. (Equivalently, one may also use the categories $\operatorname{MF}_{\text{bi}}(W,W')$ of~\cite{cr0909.4381} for 1- and 2-morphisms.) It has been established~\cite{Calinetal} that this bicategory can be naturally endowed with the structure of a monoidal framed bicategory. 

Using the results of~\cite{d0904.4713}, one can view $\mathcal{LG}$ as a subbicategory of the homotopy category of the bicategory $\mathcal{LG}_{\text{DG}}$ that has differential graded categories $\operatorname{DG}(W)$ as objects and the 1- and 2-morphisms are provided by the derived category of differential graded modules over $\operatorname{DG}(W)\otimes \operatorname{DG}(-W')$, see e.\,g.~\cite{k0601185} for the terminology. Then one may expect that $\mathcal{LG}$ is also a symmetric monoidal $(\infty,2)$-category. If this is the case, one can~\cite[sec.~5]{dm1004.0687} use the results of~\cite{d0904.4713}  to find that as an object in $\mathcal{LG}_{\text{DG}}$, the category $\operatorname{DG}(W)$ is fully dualisable in the sense of~\cite[def.~2.3.21]{l0905.0465}, and we expect $\mathcal{LG}=\mathcal{LG}^{\text{fd}}$ (see~\cite[sec.~2.3]{l0905.0465} for the notation). This would in particular imply that every defect~$X$ between Landau-Ginzburg models with potential~$W$ has itself a dual (called an adjoint in~\cite{l0905.0465}), and that defect fields $\ev_{X}$ and $\coev_{X}$ that satisfy the Zorro moves exist.

Let us expand on some of the structure of the bicategory $\mathcal{LG}$. While it remains to be rigorously answered whether it is a symmetric monoidal $(\infty,2)$-category, it is monoidal as a weak double category~\cite{Calinetal}. The unit object is simply $(\C,0)$, and the tensor product on objects is given by $(R,W)\otimes(R',W')=(R\tc R', W\tc 1 + 1\tc W')$. On the level of 1- and 2-morphisms, the tensor product is the external one (i.\,e.~as in~\eqref{XtensorY} and~\eqref{morphtensor} but with ``$\otimes_{R}$'' replaced by ``$\tc$'') while the composition of 1-morphisms is given by fusion. 

The dual of an object $(R,W)$ in $\mathcal{LG}$ is given by $(R,-W)$, and one may now ask for evaluation and coevaluation maps on this higher categorial level. By definition, these are 1-morphisms
\begin{align}
\ev_{(R,W)} & : (R,-W) \otimes (R,W) \longrightarrow (\C,0) \, , \nonumber \\
\coev_{(R,W)} & : (\C,0) \longrightarrow (R,W) \otimes (R,-W)
\end{align}
which are objects in the (1-)categories
\begin{align}
\operatorname{MF}((-W\tc 1+ 1\tc W)-1\tc 0) & \equiv\operatorname{MF}(0\tc 1 - (W\tc 1 - 1\tc W))\nonumber \\
& \equiv\operatorname{MF}(-W\tc 1+ 1\tc W)  \, .
\end{align}
If we denote by~$I_{W}$ the unit object in $\MFW\equiv \operatorname{MF}(W\tc 1- 1\tc W)$ and define $\ev_{(R,W)}=\coev_{(R,W)}=I_{-W}$, then one may verify that the Zorro moves for $\ev_{(R,W)}, \coev_{(R,W)}$ hold up to 2-isomorphism. Furthermore, we can define another duality structure by $\tev_{(R,W)}=\tcoev_{(R,W)}=TI_{W}$. With this one may consider the \textsl{quantum dimension of a Landau-Ginzburg model}: in analogy to the 1-categorial case we set
\be
\dim\big( (R,W) \big) = \tev_{(R,W)} \circ \coev_{(R,W)} \, .
\ee
Then we use the relation~\eqref{HomTensor} to find that $\dim((R,W))$ is given by the bulk algebra,
\be
\dim\big( (R,W) \big) \cong R/(\partial W) \, ,
\ee
which is isomorphic to the Hochschild cohomology of $\operatorname{DG}(W)$~\cite{d0904.4713}. An analogous result is also true for general B-twisted sigma models~\cite{Lurietalk}.

\subsubsection*{Acknowledgements}

We thank 
Nicolas Behr, 
Ilka Brunner, 
Orit Davidovich, 
Alexei Davydov,
Tobias Dyckerhoff, 
Michael Kay, 
Daniel Murfet and
Daniel Roggenkamp 
for discussions. Part of this work was completed during the Oberwolfach workshop ``Geometry, Quantum Fields, and Strings: Categorial Aspects'' and made possible through the support of N.\,C.'s~K\"ahler Fellowship at the University of Hamburg. The work of I.\,R.~is supported by the German Science Foundation (DFG) within the Collaborative Research Center 676 ``Particles, Strings and the Early Universe''.

\appendix

\section{Appendix}

\subsection{Explicit morphisms of the monoidal structure}\label{uglymatrices}

The tensor product of two morphisms $\varphi,\varphi'$ in $\MFW$ is given by
\be\label{morphtensor}
\varphi\otimes \varphi'= \begin{pmatrix}
\varphi_0\otimes_R \varphi'_0 &0&0&0 \\
0&\varphi_1\otimes_R \varphi'_1&0&0 \\
0&0&\varphi_1\otimes_R \varphi'_0&0 \\
0&0&0&\varphi_0\otimes_R \varphi'_1
\end{pmatrix} , 
\ee
and the explicit associator isomorphism $\alpha_{X,Y,Z}:(X\otimes Y)\otimes Z\rightarrow X\otimes(Y\otimes Z)$ reads
\begin{align}
(\alpha_{X,Y,Z})_0 & = \begin{pmatrix}
\id_{X_0\tr Y_0\tr Z_0} &0&0&0\\
0&0&0&\id_{X_0\tr Y_1\tr Z_1}\\
0&\id_{X_1\tr Y_1\tr Z_0}&0&0\\
0&0&\id_{X_1\tr Y_0\tr Z_1}&0
\end{pmatrix}, \\
(\alpha_{X,Y,Z})_1 & = \begin{pmatrix}
\id_{X_1\tr Y_0\tr Z_0} &0&0&0\\
0&0&0&\id_{X_1\tr Y_1\tr Z_1}\\
0&\id_{X_0\tr Y_1\tr Z_0}&0&0\\
0&0&\id_{X_0\tr Y_0\tr Z_1}&0
\end{pmatrix}.
\end{align}

In the case of a potential~$W$ in only one variable, the homotopy inverses of $\lambda_{X}, \rho_{X}$ in~\eqref{lambdarho} are given by
\begin{align}
  \lambda_X^{-1} & = 
  \begin{pmatrix}
    [ 1 \tc  \id_{\check X_0} ]\hat~ & 0 \\
    [ \tfrac{1 \tc  \check d^X_0(a,b)- 1 \tc \check d^X_0(x,b)}{a-x}]\hat~ & 0 \\
    0 & [ \tfrac{1 \tc  \check d^X_1(a,b)-1 \tc \check d^X_1(x,b)}{a-x}]\hat~  \\
    0 & [ 1 \tc  \id_{\check X_1} ]\hat~  
  \end{pmatrix}
: X \longrightarrow I\otimes X\, ,\\
  \rho_X^{-1} & = 
  \begin{pmatrix}
    [ \id_{\check X_0} \tc  1 ]\hat~ & 0 \\
    [ \tfrac{\check d^X_0(a,x)\tc 1 - \check d^X_0(a,b)\tc 1}{x-b}]\hat~ & 0 \\
    0 & [ \id_{\check X_1} \tc 1 ]\hat~ \\
    0 & [ \tfrac{\check d^X_1(a,b)\tc 1-\check d^X_1(a,x)\tc 1}{x-b}]\hat~
  \end{pmatrix}
: X \longrightarrow X\otimes I
\end{align}
where we employ a natural generalisation of the hat-notation introduced in section~\ref{prelim}, see~\cite[app~A.1]{cr0909.4381} for details.

\subsection{The evaluation map is a morphism}\label{evmorph}

To show that $\ev_{X}:X^\vee\otimes X\to I$ is well-defined in $\MFW$ we have to check that $I\circ \ev_{X}=\ev_{X}\circ(X^\vee\otimes X)$. Writing this out in components, the condition becomes
\begin{align}
\iota_{0}\circ A_{X} & = B_{X}\circ ((d^X_{0})^\vee\otimes_{R}\id_{X_{0}}) + C_{X}\circ (\id_{X^\vee_{0}}\otimes_{R} d_{0}^X) \, , \label{co11} \\
0 & = B_{X} \circ (\id_{X^\vee_{1}}\otimes_{R} d^X_{1}) + C_{X}\circ (d_{1}^\vee\otimes_{R} \id_{X_{1}}) \, , \label{co12} \\
\iota_{1} \circ B_{X} & = -A_{X}\circ ((d^X_{1})^\vee \otimes_{R} \id_{X_{1}}) \, , \quad \iota_{1} \circ C_{X} = A_{X} \circ (\id_{X^\vee_{1}}\otimes_{R}d^X_{1}) \, . \label{co2}
\end{align}
We first show that~\eqref{co11} and~\eqref{co12} are satisfied if~\eqref{co2} holds. Since $\iota_{1}=[a-b]\hat~$ is injective, \eqref{co11} is true if $\iota_{1}\circ\iota_{0}\circ A_{X} = \iota_{1}\circ B_{X}\circ ((d^X_{0})^\vee\otimes_{R}\id_{X_{0}}) + \iota_{1}\circ C_{X}(\id_{X^\vee_{0}}\otimes_{R} d_{0}^X)$ which is equivalent to
\begin{align}
[W(a)-W(b)]\hat~ \circ A_{X} & = (-A_{X}\circ ((d^X_{1})^\vee \otimes_{R} \id_{X_{1}}))\circ ((d^X_{0})^\vee\otimes_{R}\id_{X_{0}}) \nonumber \\
& \qquad + (A_{X} \circ (\id_{X^\vee_{1}}\otimes_{R}d^X_{1}))\circ (\id_{X^\vee_{0}}\otimes_{R} d_{0}^X)  \nonumber \\
&= -A_{X} \circ [W(x)-W(a)]\hat~ + A_{X} \circ [W(x)-W(b)]\hat~ \nonumber \\
&= [W(a)-W(b)]\hat~ \circ A_{X} \, .
\end{align}
The identity~\eqref{co12} is checked similarly. Thus it remains to show that~\eqref{co2} holds for $A_{X}, B_{X}, C_{X}$ given by~\eqref{AX}--\eqref{CX}. Let us write $X_{1}=R\tc\check X_{1}\tc R$ where $\check X_{1}$ is a vector space with basis $\{ e_{i} \}$. To see that the second equation in~\eqref{co2} is true it is sufficient to prove that this is so when both sides are applied to elements of the form $1\tc e_{i}^*\tc x^k \tc e_{j}\tc 1$. But we have
\be
(\iota_{1}\circ C_{X})(1\tc e_{i}^*\tc x^k \tc e_{j}\tc 1) = -\delta_{i,j}\delta_{k,0}[a-b]\hat~(1\tc 1\tc 1)
\ee
and
\begin{align}
&(A_{X} \circ (\id_{X^\vee_{1}}\otimes_{R}d^X_{1}) )(1\tc e_{i}^*\tc x^k \tc e_{j}\tc 1) \nonumber \\
=& -\left[ \ev_{\check X_{1}} \left(e^*_{i} \tc \oint \frac{(a-b-x) x^k \D x}{x(W(x)-W(b))} \check d^X_{0}(x,b) \check d^X_{1}(x,b) (e_{j}) \right) \right]^{\!\wedge} \nonumber \\
=& -\left[ \ev_{\check X_{1}} \left( \oint (a-b-x) x^{k-1} \D x  \, e^*_{i} \tc e_{j} \right) \right]^{\!\wedge} \nonumber \\
=& -\delta_{i,j}\delta_{k,0}[a-b]\hat~(1\tc 1\tc 1) \, ,
\end{align}
and the first equation in~\eqref{co2} follows analogously.

\subsection{Zorro moves}\label{appZorro}

We want to show that the Zorro move
\be\label{Z11}
\rho_{X} \circ (\id_{X} \otimes \ev_{X})\circ \alpha_{X,X^\vee,X} \circ (\coev_X \otimes \id_{X}) \circ \lambda^{-1}_{X}= \id_{X}
\ee
holds true for all $X\in\MFW$ whose entries have polynomial degrees lower than $\text{deg}(W)$. By straightforward matrix multiplication we find that the left-hand side is a $(2\times 2)$-matrix whose $(1,1)$-entry is given by
\begin{align}\label{T}
F & = (\id_{X_{0}} \otimes_{R} \mu)\circ (\id_{X_{0}}\otimes_R A_{X}) \nonumber \\
& \quad \circ \left\{ \left( \left[\tfrac{(\check d^X_{f})_{1}(a,x) \tc \id_R \tc \id_{\check X_1^*}  - (\check d^X_{f})_{1}(b,x)\tc \id_R \tc \id_{\check X_1^*}}{a-b}\right]^{\!\wedge} \circ c_{X_{1}} \right) \otimes_{R} \id_{X_{0}} \right\} 
 \nonumber \\
& \quad  \circ [1\tc \id_{\check X_{0}}]\hat~ \, .
\end{align}
Since the left-hand side of \eqref{Z11} is a morphism in $\MFW$ it suffices to prove that
\be
(\id_{R}\tc (e_{r}^0)^*\tc \id_{R})(F(1\tc e_{s}^0\tc 1)) = \delta_{r,s}\, 1\tc 1
\ee
in order to check that~\eqref{Z11} is true. Here and below we denote by $\{ e_{r}^i \}$ a basis of the vector space~$\check X_{i}$. 

Substituting the expression~\eqref{AX} into~\eqref{T} we find that
\begin{align}\label{puh}
& (\id_{R}\tc (e_{r}^0)^*\tc \id_{R})(T(1\tc e_{s}^0\tc 1)) \nonumber \\
=& \sum_{l=1}^{\dim \check X_{0}} \left[ \oint \frac{(2\pi\I)^{-1}\D x}{W(x)-W(b)} \frac{(e^0_{r})^*((\check d^X_{1}(a,b) - \check d^X_{1}(x,b))(e^1_{l}))}{a-x} \, (e^1_{l})^* (\check d^X_{0}(x,b)(e^0_{s})) \right]^{\!\wedge} \nonumber \\
=& \left[ \oint \frac{(2\pi\I)^{-1}\D x}{W(x)-W(b)} \frac{(e^0_{r})^*(\check d^X_{1}(a,b) \check d^X_{0}(x,b)(e^0_{s}))}{a-x} \right]^{\!\wedge} 
- \left[ \oint \frac{\D x}{2\pi\I} \frac{\delta_{r,s}}{a-x} \right]^{\!\wedge} \, .
\end{align}
where we used the matrix bi-factorisation condition $\check d^X_{1}(x,b) \check d^X_{0}(x,b)=(W(x)-W(b))\id_{\check X_{0}}$. This is indeed equal to $\delta_{r,s}\, 1\tc 1$ as there are no entries of degree $\text{deg}(W)$ or higher in $\check d_{0}^X$. 
 
The other Zorro move~\eqref{lambdaZorro} is proved analogously.

\subsection{R-charge and duals}\label{R-charge-app}

In this appendix we formulate duals for graded matrix bi-factorisations and show that $\ev_X$ and $\coev_X$ have R-charge zero. 

\subsubsection{Group action on bimodules}

Let $R$ and $S$ be $\C$-algebras. Given $\mu\in\operatorname{Aut}(R)$ and $\nu\in\operatorname{Aut}(S)$ we obtain a functor $\Ga{\mu,\nu}$ from $R$-mod-$S$ to itself by twisting the action of $R$ and $S$,
\be
  (X , \rho^l, \rho^r) \longmapsto (X, \rho^l \circ (\mu \otimes \id_X) , \rho^r \circ( \id_X \otimes \nu))
  \, , \quad
  f \longmapsto f \, .
\ee
$\Gamma$ defines a strict action of $\operatorname{Aut}(R)^\text{op} \times \operatorname{Aut}(S)^\text{op}$ on $R$-mod-$S$, i.\,e.~$\Ga{\mu',\nu'} \circ \Ga{\mu,\nu} = \Ga{\mu\mu',\nu\nu'}$. The group action commutes with taking duals in the sense that there is a natural isomorphism
\be\label{group-dual-natiso}
  \chi^{\mu,\nu} : (\,\cdot\,)^\vee \circ \Ga{\mu,\nu} \Longrightarrow \Ga{\nu,\mu} \circ (\,\cdot\,)^\vee \, ,
\ee
which takes an element $\psi \in X^\vee = \Hom_{R\text{-mod-}S}(X,R \otimes_\C S)$ to $\chi_X^{\mu,\nu}(\psi) = (\mu \otimes_\C \nu) \circ \psi$ (this defines a map $\chi_X^{\mu,\nu} : (\Ga{\mu,\nu}(X))^\vee \rightarrow \Ga{\nu,\mu}(X^\vee)$ of $S$-$R$-bimodules) and which satisfies
\begin{align}
  &\big( 
  \xymatrix{%
  (\Ga{\mu\mu',\nu\nu'}X)^\vee
  \ar[rr]^-{\chi^{\mu',\nu'}_{\Ga{\mu,\nu}X}} &&
  \Ga{\nu',\mu'}( (\Ga{\mu,\nu} X)^\vee )
  \ar[rr]^-{\Ga{\nu',\mu'} \chi^{\mu,\nu}_X} &&
  \Ga{\nu\nu',\mu\mu'}(X^\vee) 
  }
  \big)
\nonumber  \\
  &=
  \big( 
    \xymatrix{%
  (\Ga{\mu\mu',\nu\nu'}X)^\vee
  \ar[rr]^-{\chi^{\mu\mu',\nu\nu'}_X} &&
  \Ga{\nu\nu',\mu\mu'}(X^\vee) 
  }
  \big) \, .
\label{chi-composition-property}
\end{align}

We now specialise to the case that $R = \C[x_1,\dots,x_M]$ and $S = \C[y_1,\dots,y_N]$ and consider group homomorphisms $\sigma_R : \C \rightarrow R$ and $\sigma_S : \C \rightarrow S$ given by 
\be
  \sigma_R(\alpha)(x_i) = \E^{\I q_x \alpha} x_i \, , \quad 
  \sigma_S(\alpha)(y_j) = \E^{\I q_y \alpha} y_j \, ,
\ee
where $q_x,q_y \in \C$ are constants. Denote by $\Ga\alpha$ the diagonal action $\alpha \mapsto \Ga{\sigma_R(\alpha),\sigma_R(\alpha)}$ of $(\C,+)$ on $R$-mod-$S$. We denote the natural isomorphism~\eqref{group-dual-natiso} as $\chi^\alpha : (\,\cdot\,)^\vee \circ \Ga{\alpha} \Rightarrow \Ga{\alpha} \circ (\,\cdot\,)^\vee$. 

\begin{definition}
An $R$-$S$-\textsl{bimodule with} $\mathfrak u(1)$-\textsl{action} is a pair $(X,\varphi^X)$ where $X$ is an $R$-$S$-bimodule and $\varphi^X_\alpha : X \rightarrow \Ga\alpha(X)$ is a family of isomorphisms such that
\be \label{u1-action-composition-varphi}
  \big( 
  \xymatrix{%
  X \ar[r]^-{\varphi^X_{\alpha+\beta}} & \Ga{\alpha+\beta}X 
  }
  \big)
  =
  \big( 
  \xymatrix{%
  X \ar[r]^-{\varphi^X_{\alpha}} & \Ga{\alpha}X \ar[r]^-{\Ga\alpha(\varphi^X_{\beta})} & \Ga{\alpha+\beta}X 
  }
  \big) \, .
\ee
\end{definition}

In other words, $(X,\varphi^X)$ is a $\C$-invariant object in the category with $\C$-action $R$-mod-$S$. We say a bimodule map $f : X \rightarrow Y$ has \textsl{R-charge} $p$ iff the diagram
\be\label{bimod-R-charge-def}
\xymatrix{%
\Ga\alpha X \ar[rr]^-{\E^{\I p \alpha} \Ga\alpha(f)}  && \Ga\alpha Y  \\
X\ar[u]_{\varphi^X_\alpha} \ar[rr]^-{f} && Y \ar[u]^{\varphi^Y_\alpha}
}%
\ee
commutes for all $\alpha \in \C$. Given a bimodule with $\mathfrak u(1)$-charge $(X,\varphi^X)$, we define its dual as $(X,\varphi)^\vee = (X^\vee, \tilde\varphi)$ with
\be \label{dual-u1-action-def}
  \big( 
  \xymatrix{%
  X^\vee \ar[r]^-{\tilde\varphi_\alpha} & \Ga\alpha(X^\vee) 
  }
  \big)
  =
  \big( 
    \xymatrix{%
    X^\vee \ar[r]^-{(\varphi_\alpha^{-1})^\vee} & (\Ga\alpha X)^\vee
  \ar[r]^-{\chi^\alpha_X} & \Ga\alpha(X^\vee) 
  }
   \big) \, .
\ee
We need to verify the composition rule \eqref{u1-action-composition-varphi}:
\begin{align}
\Ga\alpha(\tilde\varphi_\beta) \circ \tilde\varphi_\alpha 
&\overset{(1)}{=} 
\Ga\alpha(\chi_X^\beta) \circ \Ga\alpha(\varphi_\beta^{-1\,\vee}) \circ \chi^\alpha_X \circ \varphi_\alpha^{-1\,\vee} \nonumber\\
&\overset{(2)}{=} 
\Ga\alpha(\chi_X^\beta) \circ \chi^\alpha_{\Ga\beta X} \circ (\Ga\alpha(\varphi_\beta^{-1}))^\vee \circ  \varphi_\alpha^{-1\,\vee}\nonumber\\
&\overset{(3)}{=} 
\chi^{\alpha+\beta}_X \circ \big( \varphi^{-1}_\alpha \circ \Ga\alpha(\varphi_\beta)^{-1} \big)^\vee = \tilde\varphi_{\alpha+\beta} \, ,
\end{align}
where step (1) is the definition of $\tilde\varphi$, step (2) is naturality of $\chi^\alpha$, and step (3) is \eqref{chi-composition-property}.

\subsubsection[$\mathfrak u(1)$-action and duals for the bimodules $R \otimes_\C W \otimes_\C S$]{$\boldsymbol{\mathfrak u(1)}$-action and duals for the bimodules $\boldsymbol{R \otimes_\C W \otimes_\C S}$}\label{u1-action-duals-bim}

Given a $\C$-vector space $W$, we obtain a free $R$-$S$-bimodule $R \otimes_\C W \otimes_\C S$. For such bimodules we can give a more direct formulation of the $\mathfrak u(1)$-action and their duals. Define the bimodule map
\begin{align}
  s_\alpha^W : R \otimes_\C W \otimes_\C S & \longrightarrow \Ga\alpha(R \otimes_\C W \otimes_\C S) \\
  r \otimes_\C w \otimes_\C s & \longmapsto  \sigma_R(\alpha)(r) \otimes_\C w \otimes_\C \sigma_S(\alpha)(s) \, .
\end{align}
Let $a \equiv (a_1,\dots,a_M)$ and $b \equiv (b_1,\dots,b_N)$ be formal variables.
For a map $f(a,b) : V \rightarrow W[a,b]$ we obtain the commuting diagram
\be\label{s-alpha-defining-diag}
\xymatrix{%
\Ga\alpha(R \otimes_\C V \otimes_\C S) \ar[rrr]^-{\Ga\alpha(\,[f(a,b)]\hat~\,)}  &&& \Ga\alpha(R \otimes_\C W \otimes_\C S)  \\
R \otimes_\C V \otimes_\C S \ar[u]^{s^V_\alpha} \ar[rrr]^-{[f(\sigma_R(-\alpha)(a),\sigma_S(-\alpha)(b)]\hat~} &&& R \otimes_\C W \otimes_\C S \ar[u]_{s^W_\alpha} \, .
}%
\ee
Given a $\mathfrak u(1)$-action $\varphi_\alpha^W$ on $R \otimes_\C W \otimes_\C S$, we define the bimodule map
\be \label{u-alpha-def-via-phi}
  U^W(\alpha) = (\varphi^W_{-\alpha})^{-1} \circ s_{-\alpha}^W : 
  R \otimes_\C W \otimes_\C S \longrightarrow R \otimes_\C W \otimes_\C S \, ,
\ee
i.\,e.~$\varphi^W_{\alpha} = s_\alpha^W \circ U^W(-\alpha)^{-1}$.
With this choice of signs, comparing \eqref{bimod-R-charge-def} and \eqref{s-alpha-defining-diag} shows that $[f(a,b)]\hat~$ has R-charge $p$ iff
\be
  U^W(\alpha) \circ [f(\sigma_R(\alpha)(a),\sigma_S(\alpha)(b)]\hat~ \circ U^V(\alpha)^{-1} = \E^{\I p \alpha} \, [f(a,b)]\hat~ \, .
\ee
which is the standard R-charge condition, see e.\,g.~\cite{hw0404196}. 

For bimodules of the form $R \otimes_\C W \otimes_\C S$ we have a natural contravariant functor $(\,\cdot\,)^+$, given by
\be
  (R \otimes_\C W \otimes_\C S)^+ = S \otimes_\C W^* \otimes_\C R 
  \, , \quad
  ([f(a,b)]\hat~)^+ = [f^*(b,a)]\hat~
\ee
where for $f(a_1,\dots,a_M,b_1,\dots,b_N) = 
\sum_{k_1,\dots,k_M,l_1,\dots,l_N} f_{k_1,\dots,l_N} a_1^{k_1} \cdots a_M^{k_M} b_1^{l_1} \cdots b_{N}^{l_N}$ we set
\be
  f^*(b,a)
  = 
  \sum_{k_1,\dots,k_M,l_1,\dots,l_N} f_{k_1,\dots,l_N} 
  b_1^{k_1} \cdots b_M^{k_M} a_1^{l_1} \cdots a_{N}^{l_N} \, .
\ee
Note that $f(a,b) : V \rightarrow W[a_1,\dots,a_M,b_1,\dots,b_N]$ while $f^*(b,a) : W^* \rightarrow V^*[a_1,\dots,a_N,b_1,\dots,b_M]$, as by convention the $a_i$ act on the left algebra, which is 
$R$ for $R \otimes_\C W \otimes_\C S$ while it is $S$ for $S \otimes_\C W^* \otimes_\C R$. The number of formal $b$-variables changes for the same reason.

We can define a natural isomorphism $\kappa: (\,\cdot\,)^+ \Rightarrow (\,\cdot\,)^\vee$ via
\begin{align}
  \kappa_W : (R \otimes_\C W \otimes_\C S)^+ &\longrightarrow  (R \otimes_\C W \otimes_\C S)^\vee \, , \\
   s \otimes_\C \phi \otimes_\C r & \longmapsto  \big( e \otimes_\C w \otimes_\C f \mapsto  \phi(w) \cdot (re) \otimes_\C (fs) \big) \, .  
\end{align}  
Indeed one checks that $\kappa_W$ provides a natural family of $S$-$R$-bimodule isomorphisms. In addition, it makes the following diagram commute (we omit the $\otimes_\C$):
\be\label{kappa-s-interrel}
\xymatrix{%
(R  \,W  S)^\vee
  \ar[rr]^-{((s_\alpha^{W})^{-1})^\vee} &&
\big(\Ga\alpha(R  \,W  S)\big)^\vee
  \ar[rr]^-{\chi^\alpha_{R  \,W  S}} &&
\Ga\alpha((R  \,W  S)^\vee)
\\
(R \, W  S)^+
  \ar[rrrr]^-{s_\alpha^{W^*}}
  \ar[u]^{\kappa_W} &&&&
\Ga\alpha((R \, W  S)^+)
  \ar[u]_{\Ga\alpha(\kappa_W)} \, .
}%
\ee
In the main text the natural isomorphism $\kappa$ is used implicitly, but for the purpose of this appendix we find it clearer to distinguish the two duals.

Given a bimodule $R \otimes_\C W \otimes_\C S$ with $\mathfrak u(1)$-action described by $U^W(\alpha)$, we assign to $(R \otimes_\C W \otimes_\C S)^+$ the $\mathfrak u(1)$-action $(U^W(\alpha)^{-1})^+$. This is the unique choice compatible with \eqref{dual-u1-action-def} in the sense that the diagram
\be \label{+-dual-relation-to-^-dual}
\xymatrix{%
(R  \,W  S)^\vee
  \ar[rrr]^-{\tilde\varphi^{R\,WS}_\alpha} &&&
\Ga\alpha((R  \,W  S)^\vee)
\\
(R \, W  S)^+
  \ar[rr]^-{(U^W(-\alpha))^+}
  \ar[u]^{\kappa_W} &&
(R \, W  S)^+
  \ar[r]^-{s_\alpha^{W^*}} &
\Ga\alpha((R \, W  S)^+)
  \ar[u]_{\Ga\alpha(\kappa_W)}
}%
\ee
commutes (that the map $(U^W(-\alpha))^+$ appears instead of $(U^W(\alpha)^{-1})^+$ is due to definition \eqref{u-alpha-def-via-phi}). This follows when inserting definitions \eqref{dual-u1-action-def} and \eqref{u-alpha-def-via-phi} and using commutativity of \eqref{kappa-s-interrel}.

\subsubsection{Graded matrix bi-factorisations}

\begin{definition}
A \textsl{graded matrix bi-factorisation} is a pair $(X,\varphi^X)$ where $X = (X_{0},X_{1},d_{0}^X,d_{1}^X)$ is a matrix bi-factorisation and $X_0 \oplus X_1$ is a bimodule with $\mathfrak u(1)$-action $\varphi^X(\alpha) = (\begin{smallmatrix} \varphi^{X_0}(\alpha) & 0  \\ 0 & \varphi^{X_1}(\alpha) \end{smallmatrix})$ such that $d^X = (\begin{smallmatrix} 0 & d_1^X \\ d_0^X & 0 \end{smallmatrix})$ has R-charge 1.
\end{definition}

We now restrict ourselves to the one-variable case $R=S=\C[x]$ with potential $W(x)=x^d$. In this case the constant $q_x$ is given by $2/d$. If $(X , \varphi^X)$ is a graded matrix bi-factorisation, we define its dual graded matrix bi-factorisation to be $(X^\vee, \varphi^{(X^\vee)})$, where we take
\be
  \varphi^{(X^\vee)} = \E^{\I \alpha (q_x-1)} 
  \begin{pmatrix} \tilde\varphi^{X_1}(\alpha) & 0  \\ 0 & \tilde\varphi^{X_0}(\alpha) \end{pmatrix} 
\ee
and $\tilde\varphi^X$ was defined in \eqref{dual-u1-action-def}. The reason to include the phase shift is to ensure that $I^\vee \cong I$ via an isomorphism of R-charge zero; we will come to this in a moment. Independent of the phase shift one checks that if $d^X$ has R-charge 1 with respect to the $\mathfrak u(1)$-action $\varphi^X$, then $d^{(X^\vee)}$ has R-charge 1 with respect to the $\mathfrak u(1)$-action $\varphi^{(X^\vee)}$. 

If $X_i = R \otimes_\C \check X_i \otimes_\C R$ and the $\mathfrak u(1)$-action is described by $U^{X_i}(\alpha)$, then we define the matrix bi-factorisation $X^+ = (X_1^+, X_0^+, (d_0^X)^+, -(d_1^X)^+)$ with $\mathfrak u(1)$-action described by
\be \label{+-dual-of-U}
  U^{(X^+)}(\alpha) = \E^{\I \alpha (q_x-1)}  \begin{pmatrix} (U^{X_1}(\alpha)^{-1})^+ & 0 \\ 0 & (U^{X_0}(\alpha)^{-1})^+ \end{pmatrix} \, .
\ee
This is isomorphic to $(X^\vee, \varphi^{(X^\vee)})$ via the isomorphism $\kappa_{X_1\oplus X_0}$, which one can verify to have R-charge zero. Recall the definition of the graded matrix bi-factorisation $I$ in section~\ref{R-charge-short}; plugging $U^I$ into \eqref{+-dual-of-U} we see that $I = I^+ \cong I^\vee$ as graded matrix bi-factorisations, and the isomorphism is of R-charge zero.

Let $X$ be a matrix bi-factorisation with $X_i = R \otimes_\C \check X_i \otimes_\C R$. Note that the maps $\ev_X$ and $\coev_X$ given in section~\ref{left-dual-MFbi} are actually maps $X^+ \otimes X \rightarrow I$ and $I \rightarrow X \otimes X^+$, respectively. Similarly, the identity verified in lemma~\ref{phihop} is actually that for $f : X \rightarrow Y$ we have $\coev_X \circ (f \otimes \id_{X^+}) = \coev_Y \circ (\id_Y \otimes f^+)$. Analogous statements hold for $\ev_X$. The map $\coev_X$ satisfies
\be
\xymatrix{%
I  \ar[rr]^-{\Ga\alpha(\coev_X)} && \Ga\alpha(X) \otimes \Ga\alpha(X^+)  \\
I \ar[u]^{s_\alpha \circ U^I(-\alpha)^{-1}} \ar[rr]_-{~\E^{\I \alpha (1-q_x)} \coev_X} 
&& X \otimes X^+ \ar[u]_{s_\alpha^W \otimes s_\alpha^{W^*}} .
}%
\ee
Combining this observation with definition \eqref{+-dual-of-U} and the fact that the $\mathfrak u(1)$-action on $X^+$ is given by the bottom line of \eqref{+-dual-relation-to-^-dual}, it is straightforward to check that $\coev_X$ has R-charge zero: naturality of $\kappa$ implies compatibility with the differential and commutativity of \eqref{+-dual-relation-to-^-dual} gives the R-charge to be zero. 
The argument for $\ev_X$ is analogous.

\subsection{Proof of lemma \ref{isos-nat-monoid}}\label{left-from-right-lemma}

To show part (i) of lemma \ref{isos-nat-monoid} we use the relation~\eqref{dualphi} twice to find that the naturality condition $\varphi^{\vee\vee}\circ t_{X}=t_{Y}\circ\varphi$ is equivalent to
\be\label{Rolle}
\begin{tikzpicture}[very thick,scale=1.0,color=blue!50!black, baseline]
\draw[line width=1pt] 
(-1,-0.75) node[inner sep=2pt,draw] (dx) {{\small$t_{X}$}}
(1,0) node[inner sep=2pt,draw] (f) {{\small$\varphi$}}; 
\draw[line width=1pt] 
(-1,-1.95) node[line width=0pt] (X) {{\small $X$}}
(3,1.95) node[line width=0pt] (Y) {{\small $Y^{\vee\vee}$}};
\draw[oodirected] (f) .. controls +(0,1) and +(0,1) .. (0,0);
\draw[uudirected] (2,0) .. controls +(0,-1) and +(0,-1) .. (f);
\draw[directed] (3,0) .. controls +(0,-1.5) and +(0,-1.5) .. (0,0);
\draw[directed] (2,0) .. controls +(0,1.5) and +(0,1.5) .. (-1,0);
\draw (-1,0) -- (dx);
\draw (X) -- (dx);
\draw (Y) -- (3,0);
\end{tikzpicture}
\,=
\begin{tikzpicture}[very thick,scale=1.0,color=blue!50!black, baseline]
\draw[line width=1pt] 
(-1,-0.75) node[inner sep=2pt,draw] (f) {{\small$\varphi\vphantom{t_{Y}}$}}
(-1,0.75) node[inner sep=2pt,draw] (dy) {{\small$t_{Y}$}}; 
\draw[line width=1pt] 
(-1,-1.95) node[line width=0pt] (X) {{\small $X$}}
(-1,1.95) node[line width=0pt] (Y) {{\small $Y^{\vee\vee}$}};
\draw (f) -- (dy);
\draw (f) -- (X);
\draw (dy) -- (Y);
\end{tikzpicture}
\quad\Leftrightarrow\quad
\begin{tikzpicture}[very thick,scale=1.0,color=blue!50!black, baseline]
\draw[line width=1pt] 
(0,0) node[inner sep=2pt,draw] (f) {{\small$\varphi$}}; 
\draw[oodirected] (f) .. controls +(0,1) and +(0,1) .. (-1,0);
\draw[uudirected] (1,0) .. controls +(0,-1) and +(0,-1) .. (f);
\draw[line width=1pt] 
(-1,-1.95) node[line width=0pt] (X) {{\small $Y^\vee$}}
(1,1.95) node[line width=0pt] (Y) {{\small $X^\vee$}};
\draw (-1,0) -- (X);
\draw (1,0) -- (Y);
\end{tikzpicture}
\,=\,
\begin{tikzpicture}[very thick,scale=1.0,color=blue!50!black, baseline]
\draw[line width=1pt] 
(-1,-0.8) node[inner sep=2pt,draw] (dx) {{\small$t_{X}^{-1}$}}
(-1,0) node[inner sep=2pt,draw] (f) {{\small$\varphi\vphantom{t_{Y}}$}}
(-1,0.75) node[inner sep=2pt,draw] (dy) {{\small$t_{Y}$}}; 
\draw[uudirected] (0,0.75) .. controls +(0,1) and +(0,1) .. (dy);
\draw[oodirected] (dx) .. controls +(0,-1) and +(0,-1) .. (-2,-0.8);
\draw[line width=1pt] 
(0,-1.95) node[line width=0pt] (X) {{\small $Y^\vee$}}
(-2,1.95) node[line width=0pt] (Y) {{\small $X^\vee$}};
\draw (f) -- (dy);
\draw (f) -- (dx);
\draw (-2,-0.8) -- (Y);
\draw (0,0.75) -- (X);
\end{tikzpicture} \, .
\ee
In the second step we have composed both sides with $t_{X}^{-1}$ ``from below'' and applied two Zorro moves. But the last expression in~\eqref{Rolle} is precisely the left-hand side of~\eqref{sovereign} by definition of $\tcoev_{X}$ and $\tev_{Y}$. 

To prove part (ii) let us write out~\eqref{vv-monoidal} in pictorial language. Using~\eqref{gamma} and~\eqref{dualphi} we find
\be
((\nu^2_{Y,X})^\vee)^{-1} = 
\begin{tikzpicture}[very thick,scale=0.8,color=blue!50!black, baseline]
\draw[line width=1pt] 
(-3,-3) node[inner sep=0pt] (YX) {{\small$(Y^\vee\otimes X^\vee)^\vee$}}
(4.5,3) node[inner sep=0pt] (XY) {{\small$(X\otimes Y)^{\vee\vee}$}}; 
\draw[directed] (2,0) .. controls +(0,-1) and +(0,-1) .. (1,0);
\draw[directed] (3,0) .. controls +(0,-2) and +(0,-2) .. (0,0);
\draw[directed] (2.5,0) .. controls +(0,3) and +(0,3) .. (-3,0);
\draw[directed] (0.5,0) .. controls +(0,1.5) and +(0,1.5) .. (-1.5,0);
\draw[directed] (4.5,0) .. controls +(0,-3) and +(0,-3) .. (-1.5,0);
\draw (-3,0) -- (YX);
\draw (4.5,0) -- (XY);
\draw[dotted] (0,0) -- (1,0);
\draw[dotted] (2,0) -- (3,0);
\end{tikzpicture} \, .
\ee
That this is indeed the inverse of $(\nu^2_{Y,X})^\vee$ can be shown by concatenating the above expression with $(\nu^2_{Y,X})^\vee$ and using repeated Zorro moves to obtain the identity. Thus $t_{X\otimes Y}^{-1} \circ ((\nu^2_{Y,X})^\vee)^{-1} \circ \nu^2_{X^\vee,Y^\vee}\circ (t_{X}\otimes t_{Y}) = \id_{X\otimes Y}$ is equivalent to
\be
\begin{tikzpicture}[very thick,scale=0.8,color=blue!50!black, baseline]
\draw[line width=1pt] 
(10.5,2) node[inner sep=2pt,draw] (XY) {{\small$t_{X\otimes Y}^{-1}$}}
(1,-2) node[inner sep=2pt,draw] (Y) {{\small$t_{Y}$}}
(0,-2) node[inner sep=2pt,draw] (X) {{\small$t_{X}$}}; 
\draw[directed] (2,0) .. controls +(0,1) and +(0,1) .. (1,0);
\draw[directed] (3,0) .. controls +(0,2) and +(0,2) .. (0,0);
\draw[directed] (4.5,0) .. controls +(0,-1.5) and +(0,-1.5) .. (2.5,0);
\draw[directed] (9,0) .. controls +(0,3) and +(0,3) .. (4.5,0);
\draw[directed] (7,0) .. controls +(0,1.5) and +(0,1.5) .. (5.5,0);
\draw[directed] (8.5,0) .. controls +(0,-1) and +(0,-1) .. (7.5,0);
\draw[directed] (9.5,0) .. controls +(0,-2) and +(0,-2) .. (6.5,0);
\draw[directed] (10.5,0) .. controls +(0,-3) and +(0,-3) .. (5.5,0);
\draw[dotted] (2,0) -- (3,0);
\draw[dotted] (6.5,0) -- (7.5,0);
\draw[dotted] (8.5,0) -- (9.5,0);
\draw[line width=1pt] 
(0,-3.2) node[line width=0pt] (XX) {{\small $X$}}
(1,-3.2) node[line width=0pt] (YY) {{\small $Y$}};
\draw[line width=1pt] 
(10.5,3.2) node[line width=0pt] (XXYY) {{\small $X\otimes Y$}};
\draw (10.5,0) -- (XY);
\draw (0,0) -- (X);
\draw (1,0) -- (Y);
\draw (XXYY) -- (XY);
\draw (XX) -- (X);
\draw (YY) -- (Y);
\end{tikzpicture}
=\;
\begin{tikzpicture}[very thick,scale=0.8,color=blue!50!black, baseline]
\draw[line width=1pt] 
(0,-3.2) node[line width=0pt] (XX) {{\small $X\otimes Y$}}
(0,3.2) node[line width=0pt] (YY) {{\small $X\otimes Y$}};
\draw (XX) -- (YY);
\end{tikzpicture} \, . 
\ee

Now we apply one Zorro move to the left-hand side, compose with $t_{X\otimes Y}$ ``from above'' and with $(t_{X\otimes Y})^{-1}$ ``from below'', and append curved lines to the left and right to obtain
\be
\begin{tikzpicture}[very thick,scale=0.8,color=blue!50!black, baseline]
\draw[line width=1pt] 
(0.5,-2) node[inner sep=2pt,draw] (XY) {{\small$t_{X\otimes Y}^{-1}$}}
(1,0) node[inner sep=2pt,draw] (Y) {{\small$t_{Y}$}}
(0,0) node[inner sep=2pt,draw] (X) {{\small$t_{X}$}}; 
\draw[directed] (7,0) .. controls +(0,4) and +(0,4) .. (X);
\draw[directed] (6,0) .. controls +(0,3) and +(0,3) .. (Y);
\draw[directed] (4.5,0) .. controls +(0,1.5) and +(0,1.5) .. (3,0);
\draw[directed] (6,0) .. controls +(0,-1) and +(0,-1) .. (5,0);
\draw[directed] (7,0) .. controls +(0,-2) and +(0,-2) .. (4,0);
\draw[directed] (8,0) .. controls +(0,-3) and +(0,-3) .. (3,0);
\draw[directed] (9,2) .. controls +(0,1) and +(0,1) .. (8,2);
\draw[directed] (0.5,-2.5) .. controls +(0,-1) and +(0,-1) .. (-1,-2.5);
\draw[dotted] (4,0) -- (5,0);
\draw[line width=1pt] 
(-1,3.8) node[line width=0pt] (XXYY) {{\small $(X\otimes Y)^{\vee}$}}
(9,-3.8) node[line width=0pt] (XXYYu) {{\small $(X\otimes Y)^{\vee}$}};
\draw (X) -- (XY);
\draw (Y) -- (XY);
\draw (8,0) -- (8,2);
\draw (XXYYu) -- (9,2);
\draw (0.5,-2.5) -- (XY);
\draw (-1,-2.5) -- (XXYY);
\end{tikzpicture}
\;=\;
\begin{tikzpicture}[very thick,scale=0.8,color=blue!50!black, baseline]
\draw[line width=1pt] 
(0,3.8) node[line width=0pt] (XXYY) {{\small $(X\otimes Y)^{\vee}$}}
(0,-3.8) node[line width=0pt] (XXYYu) {{\small $(X\otimes Y)^{\vee}$}};
\draw (XXYY) -- (XXYYu);
\end{tikzpicture} \, . 
\ee
Composing both sides with
\be
\begin{tikzpicture}[very thick,scale=0.8,color=blue!50!black, baseline]
\draw[line width=1pt] 
(-2,-1) node[line width=0pt] (X) {{\small $Y^\vee$}}
(-1,-1) node[line width=0pt] (Y) {{\small $X^\vee$}}
(2,2) node[line width=0pt] (XY) {{\small $(X\otimes Y)^\vee$}}; 
\draw[directed] (0,0) .. controls +(0,1) and +(0,1) .. (-1,0);
\draw[directed] (1,0) .. controls +(0,2) and +(0,2) .. (-2,0);
\draw[directed] (2,0) .. controls +(0,-1) and +(0,-1) .. (0.5,0);
\draw (-1,0) -- (Y);
\draw (-2,0) -- (X);
\draw[dotted] (0,0) -- (1,0);
\draw (2,0) -- (XY);
\end{tikzpicture} 
\ee
``from below'', applying two more Zorro moves on the left-hand side and using the definitions of $\tcoev_{X\otimes Y}, \tev_{X}, \tev_{Y}$, we finally arrive at~\eqref{bends}. 

\subsection{Trace formula for defect action}\label{defacform}

We want to prove the explicit expression~\eqref{DlDr} for the action of a defect~$X$ on a bulk field~$\varphi$. As $\varphi\in\End_{\MFW}(I)\cong R/(\partial W)$, it suffices to compute the $(1,1)$-entry of the $(2\times 2)$-matrix
\be
\mathcal D_{l}(X)(\varphi) = \ev_{X} \circ (\id_{X^\vee} \otimes (\lambda_{X} \circ (\varphi\otimes\id_{X}) \circ  \lambda_{X}^{-1}))\circ \tcoev_{X} \, ,
\ee
because the other non-zero entry must be the same. Substituting the explicit expressions for $\ev_{X}, \lambda_{X}, \lambda^{-1}_{X}, \tcoev_{X}$, we find that the $(1,1)$-entry of~\eqref{DlDr} is equal to

\begin{align}
 & A_{X} \circ \mu(\varphi_{0}) \circ \left( \id \otimes [1\tc \id]\hat~ \right) \circ \left[ \frac{\id_{\check X_{1}} \tc \id_R\tc \big(\check d^X_{1}(x,a) - \check d^X_{1}(x,b)\big)}{a-b} \right]^{\!\wedge} \circ c_{X_{1}} \nonumber \\
=& \left[ \frac{1}{2\pi\I} \oint \frac{(a-b-x)}{x(W(x)-W(b))} \frac{\operatorname{tr}\left( \check\varphi_{0}(x) \check d^X_{0}(x,b) (\check d^X_{1}(x,a) - \check d^X_{1}(x,b)) \right)}{a-b} \D x \right]^{\!\wedge} \nonumber \\
=& -\left[ \frac{1}{2\pi\I} \oint \frac{\operatorname{tr}\left( \check\varphi_{0}(x) \check d^X_{0}(x,b) (\check d^X_{1}(x,a) - \check d^X_{1}(x,b)) \right) \D x}{x(W(x)-W(b))}  \right]^{\!\wedge} \label{ab0} \\
&\qquad +\left[ \frac{1}{2\pi\I} \oint \frac{\operatorname{tr}\left( \check\varphi_{0}(x) \check d^X_{0}(x,b) (\check d^X_{1}(x,a) - \check d^X_{1}(x,b)) \right) \D x}{(W(x)-W(b))(a-b)}  \right]^{\!\wedge} \label{ab0} \nonumber \\
=& - \left[ \frac{1}{2\pi\I} \oint \frac{\operatorname{tr}\left(\check d^X_{0}(x,b) \check d^X_{1}(x,a) \check\varphi_{0}(x)\right)\D x}{(W(x)-W(b))(a-b)}  \right]^{\!\wedge} . \nonumber
\end{align}
Here it was used that since $\End_{\MFW}(I)\cong R/(\partial W)$ we have $\hat a=\hat b$, the term $\check d^X_{0}(x,b) (\check d^X_{1}(x,a) - \check d^X_{1}(x,b))$ in line~\eqref{ab0} is zero. 

The expression for $\mathcal D_{r}(X)(\varphi)$ is proved analogously.

\end{document}